\begin{document}

\title{Faster Exponential-Time Approximation Algorithms Using Approximate Monotone Local Search}

\author[1]{Bar\i\c{s} Can Esmer}
\author[1]{Ariel Kulik}
\author[1]{D{\'{a}}niel Marx\thanks{Research supported by the European Research Council (ERC) consolidator grant No.~725978 SYSTEMATICGRAPH.}}
\author[2]{Daniel Neuen}
\author[3]{Roohani Sharma}
\affil[1]{CISPA Helmholtz Center for Information Security, Saarbr\"ucken, Germany. \texttt{\{baris-can.esmer|ariel.kulik|marx\}@cispa.de}}
\affil[2]{Simon Fraser University, Burnaby, Canada. \texttt{dneuen@sfu.ca}}
\affil[3]{Max Planck Institute for Informatics, Saarland Informatics Campus, Saarbr\"ucken, Germany. \texttt{rsharma@mpi-inf.mpg.de}}

\date{}

\begin{titlepage}
\def\thepage{}
\maketitle

\begin{abstract}
We generalize the monotone local search approach of Fomin, Gaspers, Lokshtanov and Saurabh [J.ACM 2019], by establishing a connection between parameterized approximation and exponential-time approximation algorithms for {\em monotone subset minimization} problems.
In a {\em monotone subset minimization} problem the input implicitly describes a non-empty set family over a universe of size $n$ which is closed under taking supersets.
The task is to find a minimum cardinality set in this family.
Broadly speaking, we use {\em approximate monotone local search} to show that a parameterized $\alpha$-approximation algorithm that runs in $c^k \cdot n^{\OO(1)}$ time, where $k$ is the solution size, can be used to derive an $\alpha$-approximation randomized algorithm that runs in $d^n \cdot  n^{\OO(1)}$ time, where $d$ is the unique value in $d\in \left (1, 1+\frac{c-1}{\alpha} \right)$ such that $\D{\frac{1}{\alpha}}{\frac{d-1}{c-1}} =\frac{\ln c }{\alpha}$ and $\D{a}{b}$ is the Kullback-Leibler divergence.
This running time matches that of Fomin et al.\ for $\alpha=1$, and is strictly better when $\alpha >1$, for any $c >1$.
Furthermore, we also show that this result can be derandomized at the expense of a sub-exponential multiplicative factor in the running time.

We use an approximate variant of the exhaustive search as a benchmark for our algorithm.
We show that the classic $2^n \cdot n^{\OO(1)}$ exhaustive search can be adapted to an $\alpha$-approximate exhaustive search that runs in time
$\left ( 1+ \exp\left (-\alpha \cdot \entropy\left (\frac{1}{\alpha}\right)\right)\right)^n \cdot n^{\OO(1)}$, where $\entropy$ is the entropy function.
Furthermore, we provide a lower bound stating that the running time of this $\alpha$-approximate exhaustive search is the best achievable running time in an oracle model.
When compared to approximate exhaustive search, and to other techniques, the running times obtained by approximate monotone local search are strictly better for any $\alpha \geq 1,~c >1$.

We demonstrate the potential of approximate monotone local search by deriving new and faster exponential approximation algorithms for {\sc Vertex Cover}, {\sc $3$-Hitting Set}, {\sc Directed Feedback Vertex Set}, {\sc Directed Subset Feedback Vertex Set}, {\sc Directed Odd Cycle Transversal} and {\sc Undirected Multicut}.
For instance, we get a $1.1$-approximation algorithm for {\sc Vertex Cover} with running time $1.114^n \cdot n^{\OO(1)}$, improving upon the previously best known $1.1$-approximation running in time $1.127^n \cdot n^{\OO(1)}$ by Bourgeois et al.\ [DAM 2011].

\end{abstract}

\end{titlepage}

\section{Introduction}
\label{sec:introduction}
A lot of interesting problems are computationally hard as they do not admit polynomial-time algorithms.
Still, many of them can be solved significantly faster than exhaustive search.
The area of exact exponential algorithms studies the design of such techniques.
Typically, for {\em subset problems}, where the goal is to find a subset of a given $n$-sized universe $U$ that satisfies some property $\Pi$, a solution can be found by enumerating all $2^n$ subsets of $U$.
Therefore, the goal is to design algorithms that beat this exhaustive search and run in time $\OO^*\left ( c^n\right)$\footnote{The $\OO^*$ notation hides polynomial factors in the expression.} for as small $ 1  < c < 2$ as possible.

\paragraph*{Exact Monotone Local Search.}
In a seminal work Fomin, Gaspers, Lokshtanov and Saurabh~\cite{FominGLS19} showed that one can derive faster exact exponential algorithms for subset problems using a parameterized extension algorithm for the problem at hand.
A parameterized extension algorithm for a subset problem additionally takes as input a parameter $k$ and a set $X \subseteq U$, runs in time $\OO^*(f(k))$, and outputs a set $S \subseteq U$ of size at most $k$ such that $S \cup X$ is a solution, if such a set exists.
Fomin et al.~\cite[Theorem~$1.1$]{FominGLS19} showed that if a subset problem
admits a parameterized extension algorithm that runs in time $\OO^*(c^k)$ for some absolute constant $c > 1$, 
then it admits a randomized exact exponential algorithm 
that runs in time $\OO^*\left( \left(\emlsbound(c)\right)^n\right)$, where $\emlsbound(c) = 2 - \frac{1}{c}$.
Their algorithm, called Exact Monotone Local Search (\emls), is simple and is based on monotone local search: it samples a set $X$ of $t$ elements at random, and then extends the set $X$ to an optimum solution using the parameterized extension algorithm.
The non-trivial part of the proof of~\cite[Theorem~$1.1$]{FominGLS19} is to analyze the value of $t$ that optimizes the running time.
This simple algorithm outperforms the exhaustive search for all subset problems that have parameterized extension algorithms running in $\OO^*(c^k)$ time.
Moreover, given the existence of a large number of  problems that admit the desired parameterized extension algorithm, it yields the state-of-the-art exact exponential algorithms for several problems~\cite[Table~$1$]{FominGLS19}. 

\paragraph*{Exponential Approximation.}
Another important algorithmic paradigm which deals with \NP-hardness is the design of {\em approximation} algorithms, which are typically
polynomial-time algorithms that compute a solution which is not necessarily optimum but has a worst-case guarantee on its {\em quality}.
Though several \NP-hard problems admit such algorithms with constant approximation ratios~\cite{Vazirani01}, there are many that 
do not, under reasonable complexity assumptions, for example {\sc Directed Feedback Vertex Set} \cite{Razgon07}.
Also, there are many for which the approximation guarantees cannot be improved beyond a fixed constant.
For example, {\sc Vertex Cover} admits a $2$-approximation but no $(2-\eps)$-approximation under the Unique Games Conjecture~\cite{KhotR08}.

For problems where some hardness of approximation has been established, a natural question is to determine the smallest $c$ such that the barriers of this hardness can be broken by taking $\OO^*(c^n)$ time.
Such algorithms are called {\em exponential approximation} algorithms and this topic has received attention in, e.g., \cite{AroraBS15,BansalCLNN19,BEP11,CyganKW09,MT18}.

Consider {\em subset minimization} problems where the goal is to find a subset of the $n$-sized universe $U$ of {\em minimum cardinality}, 
also called an {\em optimum} solution, that satisfies some additional property $\Pi$.
For any approximation ratio $\alpha \geq 1$, we say that a subset $S \subseteq U$ satisfying the property $\Pi$ is an {\em $\alpha$-approximate solution} if $|S| \leq \alpha \cdot |\OPT|$, where $\OPT \subseteq U$ is an optimum solution. 
An {\em exponential $\alpha$-approximation algorithm} for a subset minimization problem returns an $\alpha$-approximate solution and runs in $\OO^*(c^n)$ time for some $1 < c < 2$.

\paragraph*{Parameterized Approximation.}
A {\em parameterized $\alpha$-approximation} algorithm for a subset minimization problem additionally takes as input the parameter $k$, runs in time $\OO^*(f(k))$, and outputs a solution of size at most $\alpha \cdot k$, if there exists a solution of size at most $k$. 
Analogous to parameterized algorithms, one can define the notion of extension algorithms here (see Section~\ref{sec:results}).
The design of parameterized $\alpha$-approximation algorithms has been an active area of research in the last few years, yielding a plethora of results for problems that 
exhibit some hardness either in the parameterized setting or in the approximation setting
\cite{BrankovicF11,BF13,DvorakFKMTV21,FeigeM06,FeldmannSLM20,FKRS18,GuptaLL18,GuptaLL18b,
 KawarabayashiL20,KS20,Lee19,LokshtanovMRSZ21,Lokshtanov0S20,
 Manurangsi19,Marx08,MarxR09,SkowronF17}.

\paragraph*{Approximate Monotone Local Search (\amls).}
In this paper, we show that one can extend the idea of \emls~\cite{FominGLS19} to \textit{derive faster exponential approximation algorithms from parameterized approximation algorithms}.
Let $\amlsbound(\alpha,c)$ be the unique value in $\left (1, 1+\frac{c-1}{\alpha} \right)$ such that $\D{\frac{1}{\alpha}}{\frac{\amlsbound(\alpha,c)-1}{c-1}} =\frac{\ln c }{\alpha}$ where $\D{a}{b}$ is the Kullback-Leibler divergence defined as  $\D{a}{b}= a \ln \left(\frac{a}{b}\right) +(1-a)\ln \left(\frac{1-a}{1-b}\right)$ (see, e.g.,~\cite{TJ06}).
Our main result can be informally stated as follows.

\begin{mdframed}[backgroundcolor=gray!20] 
 If a monotone subset minimization problem admits a parameterized extension $\alpha$-approximation algorithm that runs in $\OO^*(c^k)$,
 then one can derive a randomized $\alpha$-approximation algorithm that runs in time $\OO^*((\amlsbound(\alpha,c))^n)$ (see Theorem~\ref{thm:main_randomized}).
\end{mdframed}
 
Since $\amlsbound(1,c) = \emlsbound(c) = 2 -\frac{1}{c}$ for every $c >1$, our running time matches that of \emls\ when $\alpha=1$. 

Recall the \emls\ algorithm described earlier. 
The non-trivial part of the proof of~\cite[Theorem~$1.1$]{FominGLS19} is the analysis of the probability that the sampled set is {\em contained} in an optimum solution. 
To obtain our result, we use the same algorithm and show that if we allow the sampled set to also contain {\em some} items from outside the optimum solution,
then one can speed-up the resulting exponential $\alpha$-approximation algorithm. 
Since our analysis need to take into account the calculations for error in the sampled set, this makes analyzing the choice of $t$ more difficult.

In order to better appreciate the running time of our algorithm, that does not seem to have a closed-form formula, we give a mathematical comparison of $\amlsbound(\alpha,c)$ with various benchmark exponential approximation algorithms, showing that our algorithm outperforms all of them.

\paragraph*{Benchmark 1: Brute-Force for Exponential Approximation.}
\label{benchmark-1}
Since exhaustive search is a trivial benchmark against which the running times of (exact) exponential algorithms are measured,
an important question to address is: \textit{how much time does exhaustive search take to find an $\alpha$-approximate solution}?
 
Consider a subset minimization problem that is also monotone, that is, for every $S \subseteq T \subseteq U$, if $S$ is a solution, then $T$ is also a solution.
We show that for monotone subset minimization problems, the classic brute-force approach can be generalized to a (randomized) \emph{$\alpha$-approximation brute-force algorithm} running in time $\OO^*(\brute(\alpha)^n)$,
where $\brute(\alpha) =1+ \frac{(\alpha-1)^{\alpha-1}}{\alpha^\alpha} = 1 + \exp\left(-\alpha \cdot  \HH\left(\frac{1}{\alpha}\right) \right)$ and $\HH(\alpha) = -\alpha \ln \alpha - (1-\alpha) \ln (1-\alpha)$ denotes the entropy function\footnote{We adopt the convention that $0^0=1$.}. 
The running time essentially follows by showing that a uniformly sampled set of $\alpha \cdot |\OPT|$ items is an $\alpha$-approximate solution with probability at least $\brute(\alpha)^{-n}$ (up to polynomial factors, see Theorem~\ref{thm:brute-force-lower}).
The algorithm can also be derandomized at the cost of a sub-exponential factor.
We complement this result by showing that this running time is \emph{best possible}, given only membership oracle access to the problem (Theorem~\ref{thm:brute-force-lower}).
For example, for $\alpha=2$, this brute-force algorithm runs in time $\OO^*(1.25^n)$.
For $\alpha=1.1$, it runs in time $\OO^*(1.716^n)$ (see also Figure \ref{fig:tables}).

\paragraph*{Benchmark 2: Naive conversion from parameterized approximation to exponential approximation.}
\label{benchmark-2}
Yet another upper bound on the running time of exponential approximation algorithms can be given as follows.
Suppose there is a parameterized $\alpha$-approximation for a monotone subset minimization problem that runs in $\OO^*(c^k)$ time.
Run the parameterized algorithm for every value of the parameter $k$ between $0$ to $\frac{n}{\alpha}$, and return the solution of minimum cardinality among the solutions returned by the parameterized algorithm.
If no solution was found by the parameterized algorithm, then return the whole universe.
It can be easily verified that this indeed yields an $\alpha$-approximation algorithm with running time $\OO^*(\left(\naive(\alpha,c)\right)^n)$, where $\naive(\alpha,c)=c^{\frac{1}{\alpha}}$.
Observe that for large values of $c$ and appropriate $\alpha$, $\naive(\alpha,c)$ could be much larger than even $2$. 
But for smaller values of $c$, it could sometimes beat the brute-force approximation (see Section~\ref{sec:applications}).

\paragraph*{Benchmark 3: \emls\ in the approximate setting.}
\label{benchmark-3}
From the description of the \emls\ algorithm, it is not difficult to deduce that given a parameterized extension $\alpha$-approximation for a monotone subset minimization problem that runs in time $\OO^*(c^k)$, one can derive an exponential $\alpha$-approximation for the problem that runs in time $\OO^*((\emlsbound(c))^n)$. 
This trivial generalization of \emls\ to the approximate setting already performs better than the naive conversion in cases when $c$ is small. 
For example, {\sc Vertex Cover} has a parameterized (extension) $1.1$-approximation algorithm that runs in time $\OO^*(1.1652^k)$~\cite{BEP11}.
\emls\ gives a $1.1$-approximation running in time $\OO^*(1.1417^n)$ whereas the naive conversion gives a running time of $\OO^*(1.1462^n)$.

\paragraph*{Comparisons.}
As stated earlier, we show that (see Lemma~\ref{lem:comparison}) \amls\ is {\em strictly} faster than the brute-force algorithm (\hyperref[benchmark-1]{Benchmark 1}) or the naive-conversion approach (\hyperref[benchmark-2]{Benchmark 2}) described earlier, for every $\alpha \geq 1$ and $c > 1$.
In fact, Lemma~\ref{lem:comparison} shows that $\amlsbound(\alpha,c)$ converges to $\brute(\alpha)$ as $c \to \infty$, which would be the expected behavior, because as the parameterized algorithm becomes ``less useful'', the running time of our algorithm gets closer to the running time possible without the use of any problem-specific algorithm.
Since $\amlsbound(1,c) = \emlsbound(c) =2-\frac{1}{c}$, Lemma~\ref{lem:comparison} also shows that the running time is strictly better than that of \emls\ (\hyperref[benchmark-3]{Benchmark 3}) when $\alpha >1$.

\begin{lemma}[$\star$\footnote{The proofs of statements marked with $\star$ appear in the Appendix.}]
	\label{lem:comparison}
	For every $c > 1$ the following holds:
	\begin{enumerate}
	\item\label{item:comparison-1} $\amlsbound(\alpha,c)< \min\{\brute(\alpha), \naive(\alpha,c)\}$ for every $\alpha \geq 1$. In fact, $\amlsbound(\alpha,c) \xrightarrow[c \to \infty]{} \brute(\alpha)$.
	\item\label{item:comparison-2} $\amlsbound(\alpha,c) < \emlsbound(c)$ for every $\alpha >1$. In fact, $\amlsbound(\alpha,c)$ is a strictly decreasing function of $\alpha$. 
	\end{enumerate}
\end{lemma}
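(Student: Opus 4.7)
The plan is to first rewrite the defining equation for $d^{\ast} := \amlsbound(\alpha, c)$ in a more tractable polynomial form. Setting $b := (d^{\ast} - 1)/(c-1)$, expanding the KL divergence, and exponentiating $\D{1/\alpha}{b} = (\ln c)/\alpha$ yields the equivalent characterization
\[
b(1-b)^{\alpha-1} \;=\; \frac{(\alpha-1)^{\alpha-1}}{c\,\alpha^{\alpha}}, \qquad b \in (0, 1/\alpha).
\]
The auxiliary function $g(b) := b(1-b)^{\alpha-1}$ has derivative $(1-b)^{\alpha-2}(1-\alpha b)$ and is therefore strictly increasing on $(0, 1/\alpha)$. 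Hence, for any candidate value $d'$ with corresponding $b' := (d'-1)/(c-1)$, either $b' \geq 1/\alpha$ (in which case $d^{\ast} < d'$ follows immediately from $b^{\ast} < 1/\alpha$) or $b' \in (0, 1/\alpha)$ (in which case $d^{\ast} < d'$ is equivalent to $g(b') > (\alpha-1)^{\alpha-1}/(c\alpha^{\alpha})$). All three strict inequalities in the lemma reduce to instances of this template.

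For item~\ref{item:comparison-2}, I would implicitly differentiate the equivalent polynomial form $(d-1)(c-d)^{\alpha-1} c \alpha^{\alpha} = (c-1)^{\alpha}(\alpha-1)^{\alpha-1}$ with respect to $\alpha$. Taking logarithms and solving for $d'(\alpha)$ yields
\[
d'(\alpha) \cdot \frac{c + \alpha - 1 - \alpha d}{(d-1)(c-d)} \;=\; \ln\!\left(\frac{(c-1)(\alpha-1)}{\alpha(c-d)}\right).
\]
A short algebraic manipulation shows that each of $c + \alpha - 1 - \alpha d > 0$ and $(c-1)(\alpha-1) < \alpha(c-d)$ is equivalent to $d < 1 + (c-1)/\alpha$, which holds for $d = d^{\ast}$ by definition; so the left-hand coefficient is positive while the right-hand logarithm is negative, giving $d'(\alpha) < 0$. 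Strict decrease of $\amlsbound(\alpha, c)$ in $\alpha$, together with the identity $\amlsbound(1, c) = \emlsbound(c)$ and continuity at $\alpha = 1$, then delivers $\amlsbound(\alpha, c) < \emlsbound(c)$ for all $\alpha > 1$.

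For item~\ref{item:comparison-1}, I would handle the two candidates separately using the template above. For $\naive(\alpha, c) = c^{1/\alpha}$, concavity of $x \mapsto x^{1/\alpha}$ yields $b_{\mathrm{naive}} \leq 1/\alpha$ with equality only at $\alpha = 1$ (the $\alpha = 1$ case collapses to $2-1/c < c$). For $\alpha > 1$, substituting $v := c^{1/\alpha}$ reduces $g(b_{\mathrm{naive}}) > g(b^{\ast})$ to
\[
\alpha^{\alpha}\, v^{2\alpha-1}(v-1)(v^{\alpha-1}-1)^{\alpha-1} \;>\; (\alpha-1)^{\alpha-1}(v^{\alpha} - 1)^{\alpha}, \qquad v > 1,
\]
which I would establish by Taylor expansion at $v = 1$ (both sides vanish to order $(v-1)^{\alpha}$ with matching leading coefficient $\alpha^{\alpha}(\alpha-1)^{\alpha-1}$, and the next-order term on the left strictly exceeds that on the right) together with a global monotonicity/limit argument to rule out intermediate sign changes. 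For $\brute(\alpha) = 1 + (\alpha-1)^{\alpha-1}/\alpha^{\alpha}$, the case $b_{\mathrm{brute}} \geq 1/\alpha$ is immediate; in the complementary regime, the change of variable $x := \gamma/(c-1)$ with $\gamma := ((\alpha-1)/\alpha)^{\alpha-1}$ reduces the desired $g(b_{\mathrm{brute}}) > g(b^{\ast})$ to
\[
(\alpha-x)^{\alpha-1}(x+\gamma) \;>\; \alpha^{\alpha-1}\gamma, \qquad x \in (0, 1/\alpha),
\]
which holds with equality at $x = 0$ and has strictly positive derivative $(\alpha-x)^{\alpha-2}[\alpha(1-x) - (\alpha-1)\gamma]$ throughout, since $(\alpha-1)\gamma/\alpha = ((\alpha-1)/\alpha)^{\alpha} < 1 - 1/\alpha$. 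The limit $\amlsbound(\alpha, c) \to \brute(\alpha)$ as $c \to \infty$ drops out directly by dividing the polynomial form by $c^{\alpha}$ and passing to the limit, giving $(d^{\ast} - 1)\alpha^{\alpha} \to (\alpha-1)^{\alpha-1}$.

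The main obstacle is globalizing the naive-comparison polynomial inequality to all $v > 1$: the tangent behavior at $v = 1$ falls out cleanly from Taylor expansion, but upgrading to a global statement may require dividing out the common $(v-1)^{\alpha+1}$ factor and analyzing the resulting ratio, or an equivalent convexity argument. The remaining ingredients — the implicit differentiation for item~\ref{item:comparison-2}, the brute-comparison split into its two regimes, and the $c \to \infty$ limit — all reduce to essentially routine verifications.
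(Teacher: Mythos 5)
Your reformulation of the defining equation as $b(1-b)^{\alpha-1}=\frac{(\alpha-1)^{\alpha-1}}{c\,\alpha^{\alpha}}$ with $g(b)=b(1-b)^{\alpha-1}$ strictly increasing on $(0,\tfrac{1}{\alpha})$ is correct and gives a clean comparison template, and your treatment of Part~\ref{item:comparison-2} is sound: the implicit differentiation of the polynomial form is right, and (pleasantly) the sign of both the coefficient and the logarithm reduces to $d<1+\frac{c-1}{\alpha}$, which is part of the definition of $\amlsbound$ — the paper's analogous computation (Lemmas~\ref{lem:amls_diff} and~\ref{lem:mono_r}) instead needs the sharper bound $d<\frac{\alpha c}{1+(\alpha-1)c}$ of Lemma~\ref{lem:amls_ub}. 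The paper explicitly notes that monotonicity in $\alpha$ plus $\amlsbound(1,c)=\emlsbound(c)$ yields Part~\ref{item:comparison-2}, so your route there matches; the $c\to\infty$ limit via dividing the polynomial form by $c^{\alpha}$ is also fine once boundedness of $d^{\ast}$ is in hand.

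Part~\ref{item:comparison-1}, however, has two genuine gaps. First, the comparison with $\naive(\alpha,c)=c^{1/\alpha}$ is left resting on the global inequality $\alpha^{\alpha}v^{2\alpha-1}(v-1)(v^{\alpha-1}-1)^{\alpha-1}>(\alpha-1)^{\alpha-1}(v^{\alpha}-1)^{\alpha}$ for all $v>1$, which you acknowledge you have not proved: both sides vanish to order $\alpha$ at $v=1$ with identical leading coefficient, so a local Taylor argument alone cannot close it, and no global argument is supplied. The paper avoids this entirely by inserting the intermediate bound $\amlsbound(\alpha,c)<\frac{\alpha c}{1+(\alpha-1)c}\leq c^{1/\alpha}$ (Lemmas~\ref{lem:amls_ub} and~\ref{lem:amls_vs_naive}); note that in your own framework the first inequality corresponds to the candidate $b''=\frac{1}{1+(\alpha-1)c}$, for which $g(b'')>\frac{(\alpha-1)^{\alpha-1}}{c\alpha^{\alpha}}$ collapses to $\alpha c>1+(\alpha-1)c$, and the second follows from $1+x\le e^x$ and $\ln x\ge\frac{x-1}{x}$ — so this is the repair you want. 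Second, your brute-force comparison restricts to $x\in(0,\tfrac{1}{\alpha})$, but with $x=\gamma/(c-1)$ the condition $b_{\mathrm{brute}}<\tfrac{1}{\alpha}$ defining the non-immediate regime is $x<1$, not $x<\tfrac{1}{\alpha}$; and on $(0,1)$ the derivative $(\alpha-x)^{\alpha-2}\bigl[\alpha(1-x)-(\alpha-1)\gamma\bigr]$ is \emph{not} positive throughout (for $\alpha=2$ it turns negative at $x=3/4$). The inequality itself survives — the derivative changes sign at most once from positive to negative, and one checks $h(1)=(\alpha-1)^{\alpha-1}(1+\gamma)>(\alpha-1)^{\alpha-1}=h(0)$, so the minimum over $(0,1)$ is attained at the endpoints — but as written the argument does not cover $x\in[1/\alpha,1)$. (The paper instead derives $\amlsbound(\alpha,c)<\brute(\alpha)$ from strict monotonicity of $\amlsbound$ in $c$ together with the $c\to\infty$ limit, a monotonicity you never establish, so you cannot borrow that route without adding it.)
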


\paragraph*{Applications and Derandomization.}
We show in Section~\ref{sec:applications} that \amls\ can be used to derive new and faster exponential approximation algorithms for {\sc Vertex Cover}, {\sc $3$-Hitting Set}, {\sc Directed Feedback Vertex Set}, {\sc Directed Subset Feedback Vertex Set}, {\sc Directed Odd Cycle Transversal} and {\sc Undirected Multicut}.
We also show in Section~\ref{sec:derand-short} that, as in~\cite{FominGLS19}, our algorithm can be derandomized at the expense of a multiplicative sub-exponential factor in the running time.

\section {Definitions and Our Results}
\label{sec:results}
\subsection{Formal Definitions} 

We now give some formal definitions that will be required to formally state and describe our main results.
An {\em implicit set system} is a function $\Phi$ that takes as input a string $I \in \{0,1\}^{\star}$, called an {\em instance}, and returns a set system $(U_I, \F_I )$ where $U_I$ is a universe and $\F_I $ is a collection of subsets of $U_I$.  
We use $n$ to denote the size of the universe, that is, $n \coloneqq |U_I|$.
We say that an implicit set system $\Phi$ is {\em polynomial-time computable} if there are two polynomial-time algorithms, the first one, given $I \in \{0,1\}^{\star}$, computes the set $U_I$,
and the second one, given $S \subseteq U_I$, correctly decides if $S \in \F_I$.  
A family $\F \subseteq 2^U$ is called {\em monotone} if $U \in \F$ and for every $S \subseteq T \subseteq U$, if $S \in \F$, then $T \in \F$.
We say that an implicit set system $\Phi$ is {\em monotone} if $\F_I$ is monotone for every input $I$.
Throughout the remainder of this work, we only deal with implicit set systems that are polynomial-time computable and monotone.
So for the sake of convenience, we refer to a polynomial-time computable and monotone implicit set system simply as an implicit set system.

For an implicit set system $\Phi$, the problem $\phiminsubset$ takes as input a string $I \in \{0,1\}^{\star}$ and asks to find $S \in \F_I$ such that $|S|$ is minimum.
We refer to the sets in $\F_I$ as solutions of $I$ and we call the sets in $\F_I$ of minimum cardinality as minimum solutions or optimal solutions of $I$.
For $\alpha \geq 1$, we say that an algorithm is a (randomized) $\alpha$-approximation algorithm for $\phiminsubset$ if on input $I$, it returns a set $S \in \F_I$ such that 
$|S| \leq \alpha \cdot |\OPT|$ (with a constant probability), where $\OPT$ is an optimum solution of $\phiminsubset$. 

One can observe that many fundamental graph problems, such as {\sc Vertex Cover}, {\sc Feedback Vertex Set},  {\sc Directed Feedback Vertex Set}, etc., can be cast as a $\phiminsubset$ problem.
Consider for example the {\sc Vertex Cover} problem.
Given a graph $G=(V,E)$ we say a subset $S \subseteq V$ is a {\em vertex cover} if for every $(u,v) \in E$ it holds that $u \in S$ or $v\in S$. 
The input for the {\sc Vertex Cover} problem is a graph $G$ and the objective is to find a vertex cover $S$ of $G$ such that $|S|$ is minimum.
We can cast {\sc Vertex Cover} as a $\phiminsubset$ problem for the implicit set system $\phiVC$ defined as follows.
The instance of the problem is interpreted as a graph $G=(V,E)$.
We define the universe $U_G$ as the set of vertices $V$ and the set of solutions $\F_G=\{S \subseteq V \mid S \textnormal{ is a vertex cover of } G\}$ is the set of all vertex covers of $G$.
Finally, we define $\phiVC(G)=(U_G,\F_G)$.
It can be easily verified that $\phiVC$ is an implicit set system (i.e., it is polynomial-time computable and monotone). 

We say an algorithm is a {\em parameterized (randomized) $\alpha$-approximate $\Phi$-extension} if, given an instance $I \in \{0,1\}^{\star}$, $X \subseteq U_I$ and a parameter $k \in \mathbb{N}$, it returns a set $Y \subseteq U_I$ which satisfies the following property (with a constant probability):
 if there exists a set $S \subseteq U_I$ such that $S \cup X \in \F_I$ and $\abs{S} \leq k$, then it holds that $Y \cup X \in \F_I$ and $\abs{Y} \leq \alpha \cdot k$. 
We use the shorthand {\em (randomized) $(\alpha,\Phi)$-extension algorithm} to refer to a {\em parameterized (randomized) $\alpha$-approximate $\Phi$-extension algorithm}.
Observe that a parameterized $\alpha$-approximation algorithm for {\sc Vertex Cover} can be turned into  an $(\alpha,\phiVC)$-extension algorithm with the same running time, by taking the instance $(G,X,k)$ of the $(\alpha,\phiVC)$-extension algorithm, and running the parameterized $\alpha$-approximation algorithm on the instance $(G-X,k)$.
Observe that this way of converting parameterized $\alpha$-approximation algorithms to $(\alpha,\Phi)$-extension algorithms holds for various implicit set systems $\Phi$, for example, when $\Phi$ corresponds to a vertex deletion problem to a hereditary graph class.

\subsection{Our results}
Given an $(\alpha,\Phi)$-extension algorithm with running time $\OO^*(c^k)$ we design an $\alpha$-approximation algorithm for $\phiminsubset$ with running time $\OO^*(\amlsbound(\alpha,c)^n)$ where $\amlsbound$ is defined 
as the unique value $\gamma \in \left(1, 1+\frac{c-1}{\alpha}\right)$ such that $\D{\frac{1}{\alpha}}{\frac{\gamma-1}{c-1}} =\frac{\ln c }{\alpha}$.
Note that $\amlsbound(\alpha,c)$ is indeed well-defined because for every $\alpha \geq 1$, the function $f(\delta) \coloneqq \D{\frac{1}{\alpha}}{\delta}$ is monotonically decreasing in the interval $\delta \in \left(0, \frac{1}{\alpha}\right)$ as well as $f(\delta) \xrightarrow[\delta \to 0]{} \infty$ and $f(\delta) \xrightarrow[\delta \to \frac{1}{\alpha}]{} 0$. 

\begin{theorem}[Approximate Monotone Local Search]
 \label{thm:main_randomized}
 Let $\Phi$ be an implicit set system and $\alpha \geq 1$.
 If there is a randomized $(\alpha,\Phi)$-extension algorithm that runs in time $\OO^*(c^k)$,
 then there is a randomized $\alpha$-approximation algorithm for $\phiminsubset$ that runs in time $\OO^*((\amlsbound(\alpha,c))^{n})$.
\end{theorem}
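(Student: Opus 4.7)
The plan is to adapt the Exact Monotone Local Search template of Fomin et al. to the $\alpha$-approximate regime. Let $\mathcal{B}$ denote the given randomized $(\alpha,\Phi)$-extension algorithm. The algorithm iterates over every pair of integers $(t,k)$ with $0 \le t \le n$ and $0 \le k \le n$, and for each such pair performs the following basic trial $N(t,k)$ times: sample a uniformly random subset $X \subseteq U_I$ of size $t$, invoke $\mathcal{B}$ on $(I,X,k)$ to obtain $Y$, and record $X \cup Y$ as a candidate whenever $X \cup Y \in \F_I$ and $|Y| \le \alpha k$. The algorithm finally outputs the smallest candidate found, falling back to $U_I$, which is valid by monotonicity of $\F_I$.

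For correctness and probability, fix an optimum $\OPT \in \F_I$ with $k^* := |\OPT|$. We may assume $k^* < n/\alpha$, since otherwise $U_I$ itself is already an $\alpha$-approximation. For each $t$, set $k := \max(0, k^* - \lceil t/\alpha \rceil)$. Whenever the random sample $X$ satisfies $|X \cap \OPT| \ge k^* - k$, the set $\OPT \setminus X$ has size at most $k$ and witnesses that $X$ admits an extension to a solution; so $\mathcal{B}$ (on its own internal success) returns $Y$ with $|Y| \le \alpha k$ and $X \cup Y \in \F_I$, and
\[
|X \cup Y| \;\le\; t + \alpha k \;\le\; t + \alpha(k^* - t/\alpha) \;=\; \alpha k^*,
\]
giving an $\alpha$-approximation. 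Writing $p(t) := \Pr[\,|X \cap \OPT| \ge \lceil t/\alpha \rceil\,]$ and choosing $N(t,k) = \Theta(1/p(t))$ (with a constant-factor slack that absorbs the one-sided failure probability of $\mathcal{B}$) ensures that the slice indexed by this $(t,k)$ succeeds with constant probability.

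Because $|X \cap \OPT|$ is hypergeometric, Stirling's formula yields a standard KL-divergence estimate $p(t) = \exp\bigl(-n\cdot g(\tau,\sigma,\alpha)\bigr)^{1 \pm o(1)}$, where $\tau := t/n$, $\sigma := k^*/n$, and $g$ is the natural rate function for the tail of a hypergeometric distribution. The total running time of the procedure is, up to polynomial factors,
\[
\min_{t \in \{0,\ldots,n\}} \; \frac{c^{k^* - \lceil t/\alpha \rceil}}{p(t)},
\]
and we must show that this minimum is bounded by $\amlsbound(\alpha,c)^n$ uniformly over $k^*$. Taking the derivative of the exponent in $\tau$ and setting it to zero yields a first-order condition that, after a change of variables to $\delta := (\amlsbound - 1)/(c-1)$, collapses to exactly $\D{1/\alpha}{\delta} = (\ln c)/\alpha$, matching the defining equation of $\amlsbound(\alpha,c)$. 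The monotonicity and boundary properties of $\delta \mapsto \D{1/\alpha}{\delta}$ noted in the excerpt guarantee that this optimizer is unique and lies in the valid interval $(0, 1/\alpha)$. Since the enumeration over $(t,k)$ automatically covers the correct choice for every candidate $k^*$, no knowledge of $k^*$ is required, and the overhead of iterating over all values is polynomial.

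The main obstacle is the last paragraph: carrying out the hypergeometric rate computation cleanly and verifying algebraically that its Lagrangian optimum collapses to the clean KL-equation characterizing $\amlsbound(\alpha,c)$. In particular, one needs to verify that the optimizing $t$ is strictly interior---so that $k$ is a positive integer and the KL expression is meaningful---and that all Stirling corrections are swallowed by the polynomial slack hidden in $\OO^*$. Once this optimization is in place, the remaining ingredients---amplifying the per-slice constant success probability to a global constant by $\OO(1)$ repetitions, and controlling the total number of $(t,k)$ slices by a polynomial---are direct generalizations of the $\alpha = 1$ argument of Fomin et al., where the condition ``$X \subseteq \OPT$'' is simply replaced by the weaker condition ``$|X \cap \OPT| \ge t/\alpha$''.
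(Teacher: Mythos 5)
Your algorithm and correctness argument match the paper's: sample a uniformly random $t$-set $X$, condition on $|X\cap \OPT|\geq \lceil t/\alpha\rceil$, invoke the extension algorithm with budget $k^*-\lceil t/\alpha\rceil$, and repeat $\Theta(1/p(t))$ times. Two issues remain, one minor and one substantive. The minor one: as literally described, your algorithm runs \emph{every} pair $(t,k)$ for $N(t,k)=\Theta(1/p(t))$ trials, so its running time is the \emph{sum} over all pairs of $c^{k}/p(t)$, not the $\min_t$ you later write down; the $t=0$ terms alone already cost $c^{k}$, and for $k$ near $n/\alpha$ this is $\naive(\alpha,c)^n>\amlsbound(\alpha,c)^n$. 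The paper avoids this by selecting, for each $k$, only the $\argmin_t$ of the per-slice cost (which is polynomial-time computable since $\p$ is); you need the same selection step.

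The substantive gap is in the running-time analysis, which is exactly the part you flag as ``the main obstacle'' --- but it is also the entire technical content of the theorem, and your sketch of it does not go through as stated. Locating the optimizer via a first-order condition tells you \emph{where} the minimum of $\iter_{n,k,c}(t)$ is, not \emph{what its value} is, and the value is what must be bounded by $\amlsbound(\alpha,c)^n$ uniformly in $k^*$. The paper never uses the first-order condition in the formal proof; instead it (i) rewrites the exponent $g_{n,k}(t)$ via an exact algebraic identity into a term $t\cdot\D{\frac{1}{\alpha}}{\frac{k-t/\alpha}{n-t}}$ plus terms linear in $k$ and $n$, (ii) plugs in the specific point $t^*(n,k)$ at which $\frac{k-t/\alpha}{n-t}=\delta^*:=\frac{\amlsbound(\alpha,c)-1}{c-1}$, obtaining the closed form $k\ln\!\left(\frac{c\delta^*}{1-\delta^*}\right)+n\ln(1-\delta^*)$, and (iii) sums $\binom{n}{k}\exp(\cdot)$ over $k$ and applies the Binomial Theorem to get exactly $((c-1)\delta^*+1)^n=\amlsbound(\alpha,c)^n$. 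Steps (ii)--(iii) are where $\amlsbound$ actually emerges, and they have no counterpart in your outline; your ``change of variables to $\delta=(\amlsbound-1)/(c-1)$'' presupposes the relation between $\delta^*$ and $\amlsbound$ that only falls out of this evaluation. Separately, the optimizer is \emph{not} always interior: for $k^*<\delta^* n$ the unconstrained $t^*(n,k)$ is negative and the constrained optimum sits at $t=0$, which the paper must (and does) handle as a separate case, bounding that contribution by $n\,(c^{\delta^*})^n$ and using convexity of $x\mapsto c^x-(c-1)x-1$ to absorb it into $\amlsbound(\alpha,c)^n$. Without the evaluation at $t^*$, the Binomial-Theorem summation over $k$, and the boundary case $t=0$, the claimed bound $\OO^*(\amlsbound(\alpha,c)^n)$ is not established.
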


The formula for $\amlsbound(\alpha,c)$ (which describes the running time of Theorem~\ref{thm:main_randomized}) is not a closed-form formula, and we do not expect a closed-form formula for general $\alpha,c$ to exist.
However, it represents a {\em tight} analysis of our algorithm. 
Despite being represented as an implicit formula, its basic properties can be deduced (see Lemma~\ref{lem:comparison}).
Also, $\amlsbound$ can be easily evaluated for every $\alpha, c > 1$.
Indeed, for every $\alpha \geq 1$, the function $f(\gamma)= \D{\frac{1}{\alpha}} {\frac{\gamma-1}{c-1}}$ is monotonically decreasing in the interval $\left(1, 1+\frac{c-1}{\alpha}\right)$.
This means that $\amlsbound(\alpha,c)$ can be evaluated to an arbitrary precision, for every $\alpha \geq 1$ and $c>1$, using binary search.
In particular, the running time implied by Theorem~\ref{thm:main_randomized} can be evaluated.

Theorem~\ref{thm:main_randomized} can be used to obtain faster (than the state-of-art) exponential approximation algorithms for some $\phiminsubset$ problems. 
For example, the brute-force $1.1$-approximation algorithm runs in time $\OO^*(\brute(1.1)^n) = \OO^*(1.716^n)$.
The best parameterized $1.1$-approximation for {\sc Vertex Cover} runs in time $\OO^*(1.1652^k)$~\cite{KS20}, where $k$ is the parameter.
Using the naive conversion, this algorithm can be naively converted to a $1.1$-approximation that runs in time $\OO^*(\naive(1.1, 1.1652)^n) = \OO^*(1.149^n)$.
The previously known fastest $1.1$-approximation algorithm for {\sc Vertex Cover} runs in time $\OO^*(1.127^n)$ ~\cite{BEP11}. 
Using Theorem~\ref{thm:main_randomized} in conjunction with the $\OO^*(1.1652^k)$ algorithm of~\cite{KS20}, 
we get a $1.1$-approximation algorithm for {\sc Vertex Cover} with running time $\OO^*(1.114^n)$, improving over all of the above.
We provide additional applications in Section~\ref{sec:applications}.

In Section~\ref{sec:derand-short}, we show that the algorithm of Theorem~\ref{thm:main_randomized} can be derandomized, at the cost of a sub-exponential factor in the running time, by generalizing the construction of set inclusion families from~\cite{FominGLS19}.
 
\begin{theorem}[Derandomization Approximate Monotone Local Search]
 \label{thm:main_derandomized}
 Let $\Phi$ be an implicit set system and $\alpha \geq 1$.
 If there is an $(\alpha,\Phi)$-extension algorithm that runs in time $\OO^*(c^k)$, then there is an $\alpha$-approximation algorithm for $\phiminsubset$ that runs in time $\OO^*\left(\left(\amlsbound(\alpha,c)\right)^{n +o(n)}\right)$.
\end{theorem}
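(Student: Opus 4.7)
\textbf{Proof plan for Theorem~\ref{thm:main_derandomized}.}

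The plan is to derandomize the algorithm underlying Theorem~\ref{thm:main_randomized}, which samples a uniformly random $t$-subset $X \subseteq U_I$ and invokes the $(\alpha,\Phi)$-extension algorithm on $(I,X,k)$ for a suitable $k$, returning the smaller of the obtained solution and the trivial solution $U_I$. The probabilistic analysis shows the algorithm succeeds whenever the ``good event'' $|X \setminus \OPT| \leq j$ occurs, for a threshold $j$ tied to the choice of $t$ and $k$; this event has probability on the order of $p^{\star} = \binom{|\OPT|}{t-j}\binom{n - |\OPT|}{j}/\binom{n}{t}$, and balancing $1/p^{\star}$ against $c^k$ is what produces $\amlsbound(\alpha,c)^n$. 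I would replace the uniform sample by an explicit deterministic enumeration over a small family of $t$-subsets that realizes the good event for every possible $\OPT$.

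Concretely, I would introduce an $(n,p,t,j)$-\emph{approximate inclusion family}: a collection $\mathcal{X} \subseteq \binom{[n]}{t}$ such that for every $S \in \binom{[n]}{p}$ there exists some $X \in \mathcal{X}$ with $|X \setminus S| \leq j$. The goal is to construct, for every relevant $(n,p,t,j)$, such a family of size at most $\binom{n}{t}/\bigl(\binom{p}{t-j}\binom{n-p}{j}\bigr) \cdot 2^{o(n)}$. Iterating the extension algorithm over all $X \in \mathcal{X}$ and over the $\OO(n)$ relevant values of $|\OPT|$ (hence of $t$, $j$, and $k$), and returning the best solution found, yields the deterministic algorithm with total running time $|\mathcal{X}| \cdot \OO^*(c^k) = \OO^*(\amlsbound(\alpha,c)^{n+o(n)})$. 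The case $j=0$ recovers the set inclusion families of Fomin et al.~\cite{FominGLS19}, so what is needed is a generalization that allows a prescribed number of ``errors''.

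To build such a family I would adapt the blockwise strategy of~\cite{FominGLS19}, combining an $(n,b)$-splitter of Naor--Schulman--Srinivasan with brute-force enumeration inside each block. A splitter supplies, for every target $S$ of size $p$, a function partitioning $[n]$ into $n/b$ blocks of size $b$ under which $S$ is split almost uniformly, so that every block contains roughly $pb/n$ elements of $S$. Inside each block I would enumerate all size-$tb/n$ subsets, and form a global $t$-subset by concatenating one selection per block; the error budget $j$ is distributed roughly evenly, giving each block a local budget of about $jb/n$ elements outside $S$. The deterministic algorithm then simply enumerates all $X$ obtained this way (over all splitter functions), for every value of $|\OPT|$, and returns the best output of the extension algorithm.

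The main obstacle I foresee is the combinatorial accounting: one must show that, after summing over the distributions of $S$ among blocks compatible with the splitter, the product of the blockwise counts matches the target $\binom{n}{t}/\bigl(\binom{p}{t-j}\binom{n-p}{j}\bigr)$ up to a $2^{o(n)}$ factor. This is an entropy-style estimate that relies on a careful choice of block size (e.g.\ $b=\omega(1)$ but $b=o(n)$, such as $b=\log^2 n$) so that the rounding losses within each block, the slack from a nonperfectly uniform split, and the splitter's own size contribute only sub-exponentially to the exponent. Once the generalized family is in place, the theorem follows at once by running the algorithm of Theorem~\ref{thm:main_randomized} with $X$ ranging over $\mathcal{X}$ in lieu of being sampled.
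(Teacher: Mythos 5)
Your overall architecture is exactly the paper's: define a family of $t$-subsets that guarantees, for every candidate $\OPT$ of size $p$, some member $X$ with $|X\cap\OPT|\geq t-j$ (the paper calls this an $(n,p,t,t-j)$-set-intersection-family), show it can be built with size $\kappa(n,p,t,t-j)\cdot 2^{o(n)}$ where $\kappa(n,p,q,r)=\binom{n}{q}/\bigl(\binom{p}{r}\binom{n-p}{q-r}\bigr)$, and then replace the random sampling in Algorithm~\ref{algo:final} by iteration over this family; the running-time bound then reduces to the already-proved Lemma~\ref{lem:runtime_upper_bound}. Your identification of the target size and of the $j=0$ special case as the Fomin et al.\ construction are both correct.

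There is, however, a genuine gap in the construction of the family itself. You propose to ``enumerate all size-$tb/n$ subsets'' inside each block and take all concatenations. The number of such concatenations is $\binom{b}{tb/n}^{n/b}=\exp\bigl(n\cdot\entropy(t/n)\cdot(1+o(1))\bigr)$, i.e.\ $\binom{n}{t}^{1+o(1)}$ --- the \emph{trivial} family size, which exceeds the target $\kappa(n,p,t,t-j)\cdot 2^{o(n)}$ by the exponential factor $\binom{p}{t-j}\binom{n-p}{j}$. The resulting algorithm would be correct but far too slow, so the ``combinatorial accounting'' you flag as the main obstacle cannot be made to work with full per-block enumeration. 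The missing idea is that inside each block one must itself construct a \emph{near-optimal} set-intersection-family for the block's parameters $(n_i,p_i,q_i,r_i)$, of size $\kappa(n_i,p_i,q_i,r_i)\cdot n^{\OO(1)}$; the product of these per-block $\kappa$'s (summed over the polynomially many admissible block profiles $(p_i),(q_i)$) is what matches the global $\kappa$ up to $2^{o(n)}$. The paper obtains these block families non-constructively by a union bound and then computes them by a greedy \textsc{Set Cover} approximation in time $\OO(8^{b'})$ per block, which is affordable precisely because the blocks have size $b'=\OO(n/\log n)$; it also uses pairwise independent hash families rather than splitters for the partition, but that choice is interchangeable with yours. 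Two further (minor) details you would still need: iterating over all near-balanced distributions of $p$ and $q$ among blocks, since the split of $\OPT$ is only approximately uniform, and padding or trimming the concatenated sets so that every member of the final family has size exactly $t$.
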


\subsection{Approximate Monotone Local Search}
We now present the algorithm underlying Theorem~\ref{thm:main_randomized} and give a sketch for its analysis.
Recall $\Phi$ is the implicit set family and $\alpha \geq 1$ is the approximation ratio.
Let $\Aext$ denote the $(\alpha,\Phi)$-extension algorithm that runs in time $\OO^*(c^k)$ where $k$ is the parameter.
Given an instance $I \in \{0,1\}^{\star}$, let $\Phi(I)=(U_I,\F_I)$.
The algorithm for Theorem~\ref{thm:main_randomized} is described in Algorithm~\ref{algo:final}, and it is denoted by \amlsalgo.
It uses the subroutine \sample\ (Algorithm~\ref{algo:intermediary}) which samples a random set from $U_I$, which is subsequently extended, using $\Aext$, to yield the solution. 
Algorithm~\ref{algo:final} coincides with the algorithm of~\cite[Theorem~$1.1$]{FominGLS19} when $\alpha=1$.
 
\begin{algorithm}
	\begin{algorithmic}[1]
		\Input $I \in \{0,1\}^{\star} , k \in \mathbb{N}, t \in \mathbb{N}$
		\State Sample a set $X$ of size $t$ from $U_I$ uniformly at random. \label{int:select}
		\State  $Y \gets \Aext\left( I, X, k - \left\lceil\frac{t}{\alpha}\right\rceil \right) $. \label{int:Aext}
		\State $Z \gets X \cup Y$. \label{int:T}
		\State If $Z \in \F_I$ and $\abs{Z} \leq \alpha \cdot k$, then {\bf return} $Z$, otherwise {\bf return} $U_I$.
	\end{algorithmic}
	\caption{$\sample(I,k,t)$}\label{algo:intermediary}
\end{algorithm}

Let $\OPT$ be an optimum solution of the instance $I$ of $\phiminsubset$.
Consider the execution of \sample\ on the instance $(I,k,t)$ where $k=|\OPT|$.
In Step~\ref{int:select} of \sample\ if $|X\cap \OPT| \geq \frac{t}{\alpha}$ then $|\OPT \setminus X| \leq k- \frac{t}{\alpha}$. 
Therefore, in Step~\ref{int:Aext} $\Aext$ must return a set $Y$ such that $X\cup Y\in \F_I$ and $|Y| \leq \alpha \cdot (k-\frac{t}{\alpha}) = \alpha k -t$. 
Thus, the set $Z = X \cup Y$ computed in Step~\ref{int:T} is an $\alpha$-approximate solution of $I$.
Let $\p(n,k,t,x)$ be the probability that a uniformly random set $X$ of $t$ items out of $[n] \coloneqq \{1,\ldots,n\}$ satisfies $|X \cap [k]| \geq x$.
The distribution of $|X \cap [k]|$ is commonly referred as {\em hyper-geometric}.
Since $\Pr\left(|X\cap \OPT| \geq \frac{t}{\alpha}\right)=\p(n,k,t,\frac{t}{\alpha})$, \sample\ returns an $\alpha$-approximate solution of $I$ with probability $\p(n,k,t,\frac{t}{\alpha})$.
Observe that the running time of the \sample\ subroutine is proportional (up to polynomial factors) to the running time of the call to $\Aext$ in Step~\ref{int:Aext}.
Thus, the \sample\ subroutine runs in time $\OO^*(c^{k - \frac{t}{\alpha}})$. 

\begin{algorithm}
	\begin{algorithmic}[1]
		\Input $I \in \{0,1\}^{\star}$
		\State Define $n = |U_I|$ and $\sol\leftarrow\emptyset$.
		\For{\label{amls:loop}$k$ from $0$ to $\frac{n}{\alpha}$}
		\State $\displaystyle t\leftarrow \argmin_{ t \in \left[ 0,\alpha k\right]\cap \mathbb{N} } \left(\frac{c^{k-\frac{t}{\alpha}}}{\p\left(n,k,t,\frac{t}{\alpha}\right)}\right)$. \label{amls:select_t}
		\State Run $\sol = \sol \cup \left\{ \sample(I,k,t)\right\}$ for $\left(\p\left( n, k, t, \frac{t}{\alpha}\right)\right)^{-1}$ times. \label{amls:call_sample}
		\EndFor
		\State {\bf Return} a minimum sized set in $\sol$.
	\end{algorithmic}
\caption{$\amlsalgo(I)$ }\label{algo:final}
\end{algorithm}

Consider the execution of \amlsalgo\ with the input $I$. The analysis of \amlsalgo\
focuses on the iteration of Step~\ref{amls:loop} when $k = |\OPT|$.
In this iteration, each call to $\sample(I,k,t)$ returns an $\alpha$-approximate solution with probability $\p(n,k,t,\frac{t}{\alpha})$ (as argued above). 
Since in Step~\ref{amls:call_sample}, $\sample$ is invoked $\left(\p(n,k,t,\frac{t}{\alpha})\right)^{-1}$ times, at the end of the execution of Step~\ref{amls:call_sample}, the set $\sol$ contains an $\alpha$-approximate solution of $I$ with a constant probability.

The running time of a {\em fixed iteration} in Step~\ref{amls:loop} of \amlsalgo\ is $(\p(n,k,t,\frac{t}{\alpha}))^{-1}$ times the running time of \sample, that is, $\frac{c^{k-\frac{t}{\alpha}}}{\p(n,k,t,\frac{t}{\alpha})}$. 
Let us denote $\iter_{n,k,c}(t) =\frac{c^{k-\frac{t}{\alpha}}}{\p(n,k,t,\frac{t}{\alpha})} $.
Observe that the value of $t$ selected in Step~\ref{amls:select_t} minimizes $\iter_{n,k,c}(t)$.
From the algorithmic perspective the selection of the optimal value of $t$ is straightforward as
$\iter_{n,k,c}(t)$ can be computed in polynomial time for each value of $t$ (given $n$ and $k$).
However, the asymptotic analysis of $\iter_{n,k,c}(t)$, and hence the overall running time, requires an in-depth  understanding of the random process and serves as the main technical contribution of this paper.

As described earlier, when $\alpha=1$, Algorithm~\ref{algo:final} coincides with the algorithm of~\cite[Theorem~$1.1$]{FominGLS19}. 
The analysis of~\cite[Theorem~$1.1$]{FominGLS19} lower bounds the probability that the set $X$ sampled on Step~\ref{int:select} of \sample\ satisfies $X \subseteq \OPT$.
The analysis of our algorithm lower bounds the probability that $\frac{|X\cap \OPT|}{|X|}\geq \frac{1}{\alpha}$.
In particular, the sampling step may select items which are not in $\OPT$, though the number of such items is restricted.
This allows for an improved running time in comparison to that of~\cite{FominGLS19} (see Lemma~\ref{lem:comparison}), but renders the analysis of the running time to be more involved.

For the analytical estimation of $t$, which is selected in Step~\ref{amls:select_t} of \amlsalgo, 
the question that one needs to understand is that {\em how many items should the algorithm sample before it decides to use $\Aext$ to extend the sampled set}.
Assume that the algorithm already sampled a set $X$ of $t$ items such that $|X\cap \OPT|\approx \frac{t}{\alpha}$.
Let $\varepsilon>0$ be some small number.
Observe that $U_I \setminus X$ contains $\approx k-\frac{t}{\alpha}$ items from $\OPT$, and thus $\frac{ \left| (U_I \setminus X) \cap \OPT \right| }{| U_I \setminus X| }\approx \frac{k-\frac{t}{\alpha}}{n-t}$. 
The algorithm now has two options: it can either further sample a set $A$ of additional $\varepsilon \cdot n$ items or, use $\Aext$ with the parameter $k-\frac{t}{\alpha}$ to extend $X$ to a final solution. 
In the first case, the time taken to sample a set $A$ of $\varepsilon \cdot n$ items such that $|A \cap \OPT| \geq \frac{|A|}{\alpha} = \frac{\varepsilon \cdot n}{\alpha}$ holds with constant probability, is $\left(\Pr\left( |A\cap \OPT| \geq \frac{\varepsilon \cdot n}{\alpha} \right)\right)^{-1}$.
In the second case, the algorithm spends an additional factor of $c^{\frac{\varepsilon \cdot n}{\alpha}}$ time to extend the set $X$, instead of $X \cup A$, to the final solution.
Thus, if $\Pr\left( |A\cap \OPT| \geq \frac{\varepsilon \cdot n}{\alpha} \right)>c^{-\frac{\varepsilon\cdot n}{\alpha}}$, it is better to continue sampling, and otherwise, it is better to run $\Aext$ on the instance $(I,X,k-\frac{t}{\alpha})$.
Therefore, to understand the analytics of the chosen $t$, one needs to upper bound $\Pr\left( |A\cap \OPT| \geq \frac{\varepsilon \cdot n}{\alpha} \right)$.
 
We view the sampling of $A$ as an iterative process in which the items are sampled one after the other.
When sampling the $\ell$-th item, the ratio between the remaining items in $\OPT$ and the available items is $\approx \frac{k-\frac{t}{\alpha}-\Delta}{n-t-\ell}$, where $\Delta$ is the number of items from $\OPT$ sampled in previous iterations and $0\leq \Delta \leq \ell\leq \varepsilon\cdot  n$. 
As $\varepsilon \cdot n$ is small, we estimate $\frac{k-\frac{t}{\alpha}-\Delta}{n-t-\ell} \approx \frac{k-\frac{t}{\alpha}}{n-t}$. 
Therefore, the probability that the $\ell$-th sampled item is in $\OPT$ is roughly $\frac{k-\frac{t}{\alpha}}{n-t}$.
Thus, $|A\cap \OPT|$ can be estimated as the sum of $\varepsilon \cdot n$ Bernoulli random variables $\textsf{x}_1,\ldots, \textsf{x}_{\varepsilon\cdot n}$  with probability $\frac{k-\frac{t}{\alpha}}{n-t}$.
Thus,
$$\Pr\left( |A\cap \OPT| \geq \frac{\varepsilon \cdot n}{\alpha} \right) \approx \Pr\left( \sum_{i=1}^{\varepsilon \cdot n} \textsf{x}_i \geq \frac{\varepsilon \cdot n}{\alpha}\right)  \approx \exp \left( - \varepsilon \cdot n \cdot \D{\frac{1}{\alpha}}{ \frac{k-\frac{t}{\alpha}}{n-t}} \right),$$
where the last estimation follows from a large deviation property of binomial distributions~\cite[Theorem~$11.1.4$]{TJ06} and assumes $|\OPT|\leq\frac{n}{\alpha}$. 

Therefore, the (optimal) selection of $t$ which minimizes $\iter_{n,k,c}(t)$ is the largest $t$ which satisfies $\exp \left( - \varepsilon \cdot n \cdot  \D{\frac{1}{\alpha}}{ \frac{k-\frac{t}{\alpha}}{n-t}} \right) > c^{-\frac{\varepsilon \cdot n}{\alpha}} $, or equivalently, $ \D{\frac{1}{\alpha}}{ \frac{k-\frac{t}{\alpha}}{n-t}} \approx \frac{ \ln c }{\alpha}$.
We use this value of $t$ to bound the running time of an iteration of Step~\ref{amls:loop}, that is, to upper bound $\displaystyle{\min_{t \in [0,\alpha k] \cap \mathbb{N}} \iter_{n,k,c}(t)}$. 
This analytical estimation of $t$ forms the crux in analyzing the overall running time of Algorithm~\ref{algo:final}.

\subsection{Applications of \amls}\label{sec:applications}

\begin{figure}
	\centering
	\begin{subfigure}{.5\textwidth}
		\centering
		\caption{{\sc Vertex Cover}}
		\begin{tikzpicture}[scale = 1.0]
	\begin{axis}[
		xmin = 1, xmax=2, ymin =0.99 , ymax=1.3, xlabel= 	{approximation ratio}, 
		ylabel={exponent base}, samples=50]
		
		\addplot[mark=*, mark options={scale=0.3}, only marks, black, ] coordinates {
			(1.1, 1.127)
			(1.2, 1.099)
			(1.3, 1.083)
			(1.4,1.069)
			(1.5, 1.056)
			(1.6, 1.043)
			(1.7, 1.032)
			(1.8, 1.021)
			(1.9, 1.010)
		};
	\addplot[brown, thick] coordinates {
		( 1.000000001 , 1.2729999991859242 )
		( 1.01 , 1.2649653094298616 )
		( 1.02 , 1.2571373939712263 )
		( 1.03 , 1.2442282001309801 )
		( 1.04 , 1.2261445864110327 )
		( 1.05 , 1.2100681794588304 )
		( 1.06 , 1.1955890378686849 )
		( 1.07 , 1.1824295044570683 )
		( 1.08 , 1.1703882123659333 )
		( 1.09 , 1.1593121081105104 )
		( 1.1 , 1.1490808122780696 )
		( 1.11 , 1.1395971476039108 )
		( 1.12 , 1.1307810452809253 )
		( 1.13 , 1.1225654347204765 )
		( 1.1400000000000001 , 1.1148933647520296 )
		( 1.15 , 1.107715922760024 )
		( 1.16 , 1.100990691170586 )
		( 1.17 , 1.0946805770850985 )
		( 1.18 , 1.0888494487604061 )
		( 1.19 , 1.0838275019055315 )
		( 1.2 , 1.0791028639600801 )
		( 1.21 , 1.0746523661516458 )
		( 1.22 , 1.0704553869156828 )
		( 1.23 , 1.0664934845547733 )
		( 1.24 , 1.062750095403748 )
		( 1.25 , 1.0592102837133561 )
		( 1.26 , 1.0558605324432717 )
		( 1.27 , 1.0526885675929298 )
		( 1.28 , 1.0496832090290649 )
		( 1.29 , 1.0468342437111902 )
		( 1.3 , 1.044132316913298 )
		( 1.31 , 1.041568838529108 )
		( 1.32 , 1.0391359022250646 )
		( 1.33 , 1.0368262148112377 )
		( 1.34 , 1.0346330347980015 )
		( 1.35 , 1.0325501185949328 )
		( 1.3599999999999999 , 1.0305716730979289 )
		( 1.37 , 1.0286923137190385 )
		( 1.38 , 1.0269070275768706 )
		( 1.3900000000000001 , 1.0252111403084267 )
		( 1.4 , 1.0236002868997656 )
		( 1.4100000000000001 , 1.0220703854688933 )
		( 1.42 , 1.0206176138477516 )
		( 1.43 , 1.0192383882314038 )
		( 1.44 , 1.0179293445957283 )
		( 1.45 , 1.0166873211663572 )
		( 1.46 , 1.0155093432564755 )
		( 1.47 , 1.0143099005514151 )
		( 1.48 , 1.013115850600237 )
		( 1.49 , 1.0119908647439673 )
		( 1.5 , 1.011014695897737 )
		( 1.51 , 1.0102145898097499 )
		( 1.52 , 1.0094561998661653 )
		( 1.53 , 1.0087378291424574 )
		( 1.54 , 1.0080578693065012 )
		( 1.55 , 1.0073752256533348 )
		( 1.56 , 1.0067066312521522 )
		( 1.57 , 1.006089181772777 )
		( 1.58 , 1.0055964172648595 )
		( 1.5899999999999999 , 1.0051309156410753 )
		( 1.6 , 1.0046916469020624 )
		( 1.6099999999999999 , 1.0042546814982978 )
		( 1.62 , 1.0038330201404557 )
		( 1.63 , 1.0034750995846253 )
		( 1.6400000000000001 , 1.0031566602259125 )
		( 1.65 , 1.002856023661818 )
		( 1.6600000000000001 , 1.0025550764443678 )
		( 1.67 , 1.0022896483701074 )
		( 1.6800000000000002 , 1.0020596321603543 )
		( 1.69 , 1.001838839224823 )
		( 1.7000000000000002 , 1.001627635671499 )
		( 1.71 , 1.001451325037686 )
		( 1.72 , 1.001282975215024 )
		( 1.73 , 1.0011274331316191 )
		( 1.74 , 1.0009932626689293 )
		( 1.75 , 1.0008634266452612 )
		( 1.76 , 1.0007541066094836 )
		( 1.77 , 1.0006500847622903 )
		( 1.78 , 1.0005608401437944 )
		( 1.79 , 1.0004791950765761 )
		( 1.8 , 1.000406280717164 )
		( 1.81 , 1.000343047144937 )
		( 1.82 , 1.0002870458955442 )
		( 1.83 , 1.0002378462215087 )
		( 1.8399999999999999 , 1.0001949698794221 )
		( 1.85 , 1.000157895424352 )
		( 1.8599999999999999 , 1.0001264697055459 )
		( 1.87 , 1.0000995486685114 )
		( 1.88 , 1.0000770400874741 )
		( 1.8900000000000001 , 1.0000584343759045 )
		( 1.9 , 1.0000431924372712 )
		( 1.9100000000000001 , 1.000031008217195 )
		( 1.92 , 1.0000214405870749 )
		( 1.9300000000000002 , 1.0000141373156854 )
		( 1.94 , 1.0000087677812217 )
		( 1.9500000000000002 , 1.0000049958854522 )
		( 1.96 , 1.0000025187882209 )
		( 1.97 , 1.000001046455369 )
		( 1.98 , 1.0000002563600316 )
		( 1.99 , 1.0000000000119482 )
		( 1.9999999 , 1.0000000000000027 )
	};
	\addplot[red, thick] coordinates {
		( 1.000000001 , 1.214454040924442 )
		( 1.01 , 1.2026317346611575 )
		( 1.02 , 1.1937657772241668 )
		( 1.03 , 1.1829188563622037 )
		( 1.04 , 1.1696458059475097 )
		( 1.05 , 1.1579462570578425 )
		( 1.06 , 1.147468219436361 )
		( 1.07 , 1.1379791088464872 )
		( 1.08 , 1.1293140538370485 )
		( 1.09 , 1.1213503458722733 )
		( 1.1 , 1.1139933192777192 )
		( 1.11 , 1.107167900194372 )
		( 1.12 , 1.1008132346690174 )
		( 1.13 , 1.0948791092018053 )
		( 1.1400000000000001 , 1.0893234738801483 )
		( 1.15 , 1.0841106736337345 )
		( 1.16 , 1.0792101518371808 )
		( 1.17 , 1.0745954786564162 )
		( 1.18 , 1.0703091342871418 )
		( 1.19 , 1.0665783168622305 )
		( 1.2 , 1.0630574655658993 )
		( 1.21 , 1.0597301405541362 )
		( 1.22 , 1.0565817664812673 )
		( 1.23 , 1.0535993568055084 )
		( 1.24 , 1.0507712883745994 )
		( 1.25 , 1.0480871149997306 )
		( 1.26 , 1.0455374126099055 )
		( 1.27 , 1.0431136492121342 )
		( 1.28 , 1.0408080752957507 )
		( 1.29 , 1.038613630734035 )
		( 1.3 , 1.0365238652554023 )
		( 1.31 , 1.0345328702799048 )
		( 1.32 , 1.0326352201762867 )
		( 1.33 , 1.0308259210402704 )
		( 1.34 , 1.0291003665116905 )
		( 1.35 , 1.0274542989618507 )
		( 1.3599999999999999 , 1.0258837755476122 )
		( 1.37 , 1.0243851381010516 )
		( 1.38 , 1.0229549869935357 )
		( 1.3900000000000001 , 1.0215901574250275 )
		( 1.4 , 1.0202876987862437 )
		( 1.4100000000000001 , 1.019044856086777 )
		( 1.42 , 1.0178590534151981 )
		( 1.43 , 1.016727878933355 )
		( 1.44 , 1.0156490717946534 )
		( 1.45 , 1.0146205098657965 )
		( 1.46 , 1.0136401991245751 )
		( 1.47 , 1.0126380694307486 )
		( 1.48 , 1.0116354667694532 )
		( 1.49 , 1.0106851242007866 )
		( 1.5 , 1.009854855109447 )
		( 1.51 , 1.009169673025097 )
		( 1.52 , 1.008517356046676 )
		( 1.53 , 1.0078967236377934 )
		( 1.54 , 1.0073066579010517 )
		( 1.55 , 1.0067118670363557 )
		( 1.56 , 1.0061265838885447 )
		( 1.57 , 1.0055832057318472 )
		( 1.58 , 1.0051469606907701 )
		( 1.5899999999999999 , 1.004733166059971 )
		( 1.6 , 1.0043410848000884 )
		( 1.6099999999999999 , 1.0039495335877802 )
		( 1.62 , 1.0035700380927524 )
		( 1.63 , 1.0032463166259005 )
		( 1.6400000000000001 , 1.002957072958094 )
		( 1.65 , 1.0026829540412958 )
		( 1.6600000000000001 , 1.0024075225174534 )
		( 1.67 , 1.0021635452392368 )
		( 1.6800000000000002 , 1.0019512536804442 )
		( 1.69 , 1.00174674451058 )
		( 1.7000000000000002 , 1.00155038693793 )
		( 1.71 , 1.001385817279926 )
		( 1.72 , 1.0012281425564316 )
		( 1.73 , 1.0010819462416165 )
		( 1.74 , 1.0009553883169675 )
		( 1.75 , 1.0008325188992617 )
		( 1.76 , 1.0007287076093345 )
		( 1.77 , 1.0006296184137997 )
		( 1.78 , 1.000544330984212 )
		( 1.79 , 1.0004660703474075 )
		( 1.8 , 1.000395967642424 )
		( 1.81 , 1.0003349938369044 )
		( 1.82 , 1.0002808423901166 )
		( 1.83 , 1.0002331376113212 )
		( 1.8399999999999999 , 1.0001914533888052 )
		( 1.85 , 1.0001553167644315 )
		( 1.8599999999999999 , 1.000124609886496 )
		( 1.87 , 1.000098241990801 )
		( 1.88 , 1.0000761451644333 )
		( 1.8900000000000001 , 1.0000578397015847 )
		( 1.9 , 1.0000428123128422 )
		( 1.9100000000000001 , 1.000030775655652 )
		( 1.92 , 1.0000213061067638 )
		( 1.9300000000000002 , 1.000014064904882 )
		( 1.94 , 1.0000087322173992 )
		( 1.9500000000000002 , 1.000004980504532 )
		( 1.96 , 1.0000025132520087 )
		( 1.97 , 1.0000010449649206 )
		( 1.98 , 1.000000256178351 )
		( 1.99 , 1.000000000011948 )
		( 1.9999999 , 1.0000000000000027 )
	};

			\addlegendentry[no markers, black]{BEP \cite{BEP11}}
			\addlegendentry[no markers, brown]{\cite{KS20, BF13}+ $\naive$ conversion}
			\addlegendentry[no markers, red]{\cite{KS20, BF13}+Theorem~\ref{thm:main_randomized}}

		\end{axis}
	\end{tikzpicture}
		\label{fig:vc_results}
	\end{subfigure}%
	\begin{subfigure}{.5\textwidth}
		\centering
		\caption{{\sc $3$-Hitting Set}}
		\input{figure_HS.tex}
		\label{fig:3hs_results}
	\end{subfigure}
	\caption{Results for {\sc Vertex Cover} and {\sc $3$-Hitting Set}.
	 A dot at $(\alpha,d)$ means that the respective algorithm outputs an $\alpha$-approximation in time $\OO^*(d^n)$.}
	\label{fig:runtimes}
\end{figure}

In this section we use Theorem~\ref{thm:main_randomized} to get faster randomized exponential approximation algorithms for {\sc Vertex Cover}, {\sc $3$-Hitting Set}, {\sc Directed Feedback Vertex Set ({\sc DFVS})}, {\sc Directed Subset Feedback Vertex Set ({\sc Subset DFVS})}, {\sc Directed Odd Cycle Transversal ({\sc DOCT})} and {\sc Undirected Multicut}.
These problems are defined in Appendix~\ref{sec:problem-definitions}.
One can observe that all these problems can be described as some $\phiminsubset$ problem.
Since all these problems can be interpreted as vertex deletion problems to some hereditary graph class, any parameterized $\alpha$-approximation algorithm for these problems can be used as an $(\alpha,\Phi)$-extension algorithm, for the respective $\Phi$.

\paragraph*{{\sc Vertex Cover ({\sc VC})}:}
In~\cite{BEP11} Bourgeois, Escoffier and Paschos designed several exponential approximation algorithms for {\sc VC} for approximation ratios in the range $(1,2)$.  
For any $\alpha\in (1,2)$ the best known running time of a parameterized randomized $\alpha$-approximation algorithm for {\sc VC} is attained in~\cite{KS20} if $\alpha\gtrsim 1.03$, and in~\cite{BF13} if $\alpha \lesssim 1.03$. 
We use these algorithms in conjunction with Theorem~\ref{thm:main_randomized} to obtain faster randomized  exponential $\alpha$-approximation algorithms for {\sc VC} for values of $\alpha$ in the range $(1,2)$.
We compare our running times to the running times obtained by the naive conversion (\hyperref[benchmark-2]{Benchmark 2}) and to the running times in~\cite{BEP11}.\footnote{The result of \cite{BEP11} provides an $\alpha$-approximation algorithm for every $\alpha\in(1,2)$. As the evaluation of these running times is not trivial, we only provide the running times which were explicitly given in \cite{BEP11} for selected approximation ratios.}
We present the running time for selected approximation ratios in Figure~\ref{fig:tables} and give a graphical comparison in Figure~\ref{fig:vc_results}.

\begin{figure}[h]
\begin{small}
\begin{center}
	{\sc Vertex Cover}
	\begin{tabular}{c||c|c|c|c|c|m{1.15cm}|m{1.2cm}|m{1.35cm}|m{1.6cm}|} 
		ratio & 1.1 &1.2 &1.3 &1.4 &1.5 &1.6 &1.7 &1.8 &1.9 
		 \\
		 \hline
		 $\brute(\alpha)$ &
		 1.716&1.583&1.496&1.433&1.385&1.347&1.317&
		 1.291&1.269
		 \\
		 \hline
		 BEP \cite{BEP11} &
		 1.127 & 1.099 & 1.083 & 1.069 & 1.056  & 1.043 & 1.032 & 1.021 & 1.01 
		 \\
		\hline
		\cite{KS20}+Naive Conv.&
		1.149&1.079&1.044&1.0236&1.0110&1.00469&1.00162&1.000406&1.0000432\\
		\hline 
		\cite{KS20}+Theorem~\ref{thm:main_randomized} &
		1.114&1.064&1.036&1.0203&1.0099&1.00435&1.00156&1.000397&1.0000428\\
		\hline
	\end{tabular}
\end{center}

\begin{center}
	{\sc $3$-Hitting Set}
	\begin{tabular}{ c||  c| c| c| c| c| c| c| c| c|} 
		ratio & 1.2 &1.4 &1.6 &1.8 &2.0 &2.2 &2.4 &2.6 &2.8 
		\\ \hline
		$\brute(\alpha)$ &
		1.583&1.433&1.347&1.291&1.251&1.220&1.196&1.177&1.162\\
		\hline
		\cite{KS20}+ Naive Conv.&
		1.471&1.196&1.105&1.0582&1.0326&1.0173&1.00831&1.00324&1.000903
		\\
		\hline 
		\cite{KS20}+Theorem~\ref{thm:main_randomized}&
		1.240&1.119&1.0698&1.0417&1.0248&1.0140&1.00711&1.00292&1.000853
		\\
		\hline
	\end{tabular}
\end{center}
\end{small}
\caption{Results for {\sc Vertex Cover} and {\sc $3$-Hitting Set}.
 A value $d$ at the column of an approximation ratio $\alpha$ means that the respective algorithm outputs an $\alpha$-approximation in time $\OO^*(d^n)$.}
\label{fig:tables}
\end{figure}

\paragraph*{{\sc $3$-Hitting Set ({\sc $3$-HS})}:} 
The problem admits a simple polynomial-time $3$-approximation algorithm which cannot be improved assuming UGC~\cite{KhotR08}.
For any $\alpha\in (1,3)$ the best known running time of a parameterized $\alpha$-approximation algorithm for {\sc $3$-HS} is attained by either~\cite{FKRS18} if $\alpha \lesssim 1.08$,  or~\cite{KS20} if $\alpha \gtrsim 1.08$.
Using these algorithms as parameterized extension algorithms,
we calculate the running times of $\alpha$-approximation algorithms for {\sc $3$-HS} attained using the naive conversion (\hyperref[benchmark-2]{Benchmark 2}) and Theorem~\ref{thm:main_randomized}, for values of $\alpha \in (1,3)$.
We provide the running times for selected approximation ratios in Figure~\ref{fig:tables} and a graphical comparison in Figure~\ref{fig:3hs_results}.

\paragraph*{{\sc DFVS}, {\sc Subset DFVS}, {\sc DOCT}, {\sc Undirected Multicut}:}
For all these problems~\cite{LokshtanovMRSZ21} gave parameterized $2$-approximation algorithms that run in time $\OO^*(c^k)$, for some constant $c >1$.
One can easily observe from the description of the {\sc DFVS} algorithm in~\cite{LokshtanovMRSZ21} that it runs in time $\OO^*(1024^k)$.
Using Theorem~\ref{thm:main_randomized} we get that {\sc DFVS} admits an exponential $2$-approximation algorithm that runs in time $\OO^*(1.2498^n)$.
This running time is significantly better than the running time derived using the naive conversion (\hyperref[benchmark-2]{Benchmark 2}) of the algorithm of~\cite{LokshtanovMRSZ21}, which does not give anything meaningful for this problem.
It is also significantly better than using \emls\ with the algorithm of~\cite{LokshtanovMRSZ21}, which gives $\OO^*((\emlsbound(1024))^n) = \OO^*(1.9991^n)$.
It is also qualitatively better than the brute-force $2$-approximation algorithm (\hyperref[benchmark-1]{Benchmark 1}), which runs in time $\OO^*(1.25^n)$.

Using Lemma~\ref{lem:comparison}, we can show that we get faster $2$-approximation algorithms for all mentioned problems compared to the brute-force $2$-approximation algorithm, or the naive conversion of the parameterized algorithms in~\cite{LokshtanovMRSZ21} or the application of \emls\ with the algorithms of~\cite{LokshtanovMRSZ21}.
Note that even though the algorithms derived from Theorem~\ref{thm:main_randomized} are only {\em qualitatively} better than brute-force approximation, we emphasize that \amls\ is always strictly better than brute-force approximation (and the other benchmarks described earlier).
Also, it reflects Part~\ref{item:comparison-1} of Lemma~\ref{lem:comparison}, that as $c$ increases (that is, as the parameterized extension algorithm becomes slower), our algorithm converges to the brute-force approximation.

\section{Analysis of Approximate Monotone Local Search}
\label{sec:analysis}
This section is dedicated to the proof of Theorem~\ref{thm:main_randomized}. 
As explained earlier, the algorithm promised in Theorem~\ref{thm:main_randomized} is $\amlsalgo$ (Algorithm~\ref{algo:final}).
In Lemma~\ref{lem:correctness} we prove the correctness of Algorithm~\ref{algo:final} and in Lemma~\ref{lem:runtime_f} we provide a formula for its running time.
Finally, in Lemma~\ref{lem:runtime_upper_bound} we upper bound the running time of the formula obtained in Lemma~\ref{lem:runtime_f} with $\OO^*(\amlsbound(\alpha,c)^{n})$, thereby proving Theorem~\ref{thm:main_randomized}.

\begin{lemma}[Correctness]\label{lem:correctness}
	$\amlsalgo$ (Algorithm~\ref{algo:final}) is a randomized $\alpha$-approximation algorithm for $\phiminsubset$.
\end{lemma}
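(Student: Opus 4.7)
My plan is to isolate the single iteration of the loop in Step~\ref{amls:loop} for which $k$ equals $k^{*} := |\OPT|$ and to show that already this one iteration produces an $\alpha$-approximate solution with constant probability; the correctness of the final minimum-cardinality output then follows because every set ever added to $\sol$ is a member of $\F_I$. I will first dispose of the boundary case $k^{*} > n/\alpha$: here the loop never reaches $k=k^{*}$, but monotonicity of $\F_I$ guarantees $U_I \in \F_I$ with $|U_I|=n < \alpha k^{*} = \alpha|\OPT|$, and the last line of $\sample$ returns $U_I$ whenever the candidate $Z$ fails verification, so $\sol$ is non-empty and the returned set is always a valid $\alpha$-approximate solution in this regime.

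Next I will analyze a single call to $\sample(I, k^{*}, t)$, where $t$ is the value fixed in Step~\ref{amls:select_t}. I call the sampled $X$ \emph{good} if $|X \cap \OPT| \ge \lceil t/\alpha \rceil$; because $|X \cap \OPT|$ is integer-valued this is the same event as $|X \cap \OPT| \ge t/\alpha$. Conditional on $X$ being good, set $S := \OPT \setminus X$. Then $|S| = k^{*} - |X \cap \OPT| \le k^{*} - \lceil t/\alpha \rceil$ and $S \cup X \supseteq \OPT \in \F_I$, so by monotonicity $S \cup X \in \F_I$. The specification of the $(\alpha,\Phi)$-extension algorithm $\Aext$ in Step~\ref{int:Aext} therefore guarantees, with some absolute constant probability $q>0$, that the returned $Y$ satisfies $X \cup Y \in \F_I$ and $|Y| \le \alpha\bigl(k^{*} - \lceil t/\alpha \rceil\bigr)$. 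A short calculation then gives $|Z| = |X \cup Y| \le t + \alpha k^{*} - \alpha \lceil t/\alpha \rceil \le \alpha k^{*}$, so the verification in the last line of $\sample$ passes and $Z$ is returned as an $\alpha$-approximate solution.

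The remaining step is amplification. By the definition of $\p$, the probability that $X$ is good equals $p := \p(n, k^{*}, t, \lceil t/\alpha \rceil)$, and the two sources of randomness (the sample and $\Aext$) are independent, so each individual call to $\sample$ succeeds with probability at least $pq$. Since Step~\ref{amls:call_sample} performs $\lceil 1/p \rceil$ independent invocations, the probability that not a single one of them produces an $\alpha$-approximate solution is at most $(1-pq)^{1/p} \le e^{-q}$, a constant strictly less than $1$. As every element of $\sol$ lies in $\F_I$ and the algorithm returns the minimum-cardinality element of $\sol$, this gives an $\alpha$-approximate solution with constant probability, which is what is required. The only place subtlety is needed is the bookkeeping with the ceiling $\lceil t/\alpha \rceil$ appearing in the argument of $\Aext$ versus the $t/\alpha$ appearing inside $\p$ in Steps~\ref{amls:select_t}--\ref{amls:call_sample}; this is resolved by the integrality of $|X \cap \OPT|$, which makes the two forms of the event coincide.
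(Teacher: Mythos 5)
Your proposal is correct and follows essentially the same route as the paper's proof: isolate the iteration $k=|\OPT|$, condition on the sampled $X$ hitting $\OPT$ in at least $\lceil t/\alpha\rceil$ elements, invoke monotonicity plus the $(\alpha,\Phi)$-extension guarantee to bound $|Z|\le \alpha|\OPT|$, and amplify over the $1/\p$ repetitions to get constant success probability, with the large-$|\OPT|$ boundary case handled by returning a set of size at most $n$. Your explicit treatment of the ceiling versus $t/\alpha$ via integrality of $|X\cap\OPT|$ is a slightly more careful bookkeeping of a point the paper glosses over, but the argument is the same.
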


\begin{proof}
Let $I\in \{0,1\}^{\star}$ be an instance and $\OPT$ be a minimum solution of $I$. 
Consider an execution of $\sample(I,k,t)$ (Algorithm~\ref{algo:intermediary}) in which $k=|\OPT|$ and $0\leq t \leq \alpha \cdot k$. 
If the algorithm selects a set $X$ in Step~\ref{int:select} such that $\abs{\OPT \cap X} \geq \frac{t}{\alpha}$, then $\abs{\OPT \setminus X} \leq k - \frac{t}{\alpha}$.
Furthermore, it holds that $(\OPT \setminus X) \cup X \in \F_I$, because $\OPT \cup X \supseteq \OPT \in \F_I$ and $\Phi$ is monotone by assumption.
Since $\Aext$ is an $(\alpha,\Phi)$-extension algorithm, given the input $(I, X, k - \left\lceil \frac{t}{\alpha} \right\rceil )$ it returns a set $Y$ such that $\abs{Y} \leq \alpha \cdot (k - \frac{t}{\alpha}) = \alpha k - t$ with constant probability $\gamma\in (0,1]$. Let $Z = X \cup Y \in \F_I$ as in Step~\ref{int:T} of Algorithm~\ref{algo:intermediary} and we get $\abs{Z} = \abs{X} + \abs{Y} \leq \alpha k $.
It follows that
\begin{equation}
\label{eq:sample_prob}
\begin{aligned}
	\Pr&\left(\sample(I, |\OPT|,t) \textnormal{ returns a set of size at most $\alpha \cdot |\OPT|$}\right)\\
	&\geq 
	\gamma\cdot  \Pr\left( \abs{X \cap \OPT} \geq  \frac{t}{\alpha}  \right) = \gamma \cdot \p(n,|\OPT|,t,t/\alpha),
\end{aligned}
\end{equation}
where $\p$ is the function defined in Section~\ref{sec:results}.

Now, consider the execution of Algorithm~\ref{algo:final} with $I$ as its input and let $S$ be the set returned by Algorithm~\ref{algo:final}.
It can be easily verified that $S\in \mathcal{F}_I$.
Also, if $\abs{\OPT} \geq \frac{n}{\alpha}$ then $|S|\leq n\leq \alpha \cdot \abs{\OPT}$ and therefore the algorithm returns an $\alpha$-approximate solution in this case.
We henceforth assume that $\abs{\OPT} < \frac{n}{\alpha}$.
Consider the iteration of the loop in Step~\ref{amls:loop} of Algorithm~\ref{algo:final} in which $k=|\OPT|$. 
By \eqref{eq:sample_prob}, at least one of the calls to Algorithm~\ref{algo:intermediary} in this iteration returns a set of size $\alpha \cdot |\OPT|$ or less  with probability at least
$$1 - \left(1 - \gamma\cdot  \p(n,k,t,t/\alpha)\right)^{1 / \p(n,k,t,t/\alpha)}\geq 1- \exp(-\gamma)>0,$$
where $t$ is the value selected in Step~\ref{amls:select_t}. 

Let $\sol$ be as in Algorithm~\ref{algo:final} at the end of the iteration in which $k=|\OPT|$.
The minimum cardinality set in $\sol$ has size at most $\alpha \cdot \abs{\OPT}$ with probability $1-\exp(-\gamma)$, thus the set $S$ returned by the algorithm satisfies $|S|\leq \alpha \cdot \abs{\OPT}$ with a constant probability.
\end{proof}

\begin{lemma}[Running time]\label{lem:runtime_f}
 $\amlsalgo$ (Algorithm~\ref{algo:final}) runs in time $f_{\alpha,c}(n) \cdot n^{\OO(1)}$ where
 \begin{equation}
  \label{eq:fdef}
  f_{\alpha,c}(n) \coloneqq \sum_{k = 0}^{\left\lfloor \frac{n}{\alpha} \right\rfloor} \min_{~t \in \left[0,\alpha k\right] \cap \mathbb{N}~} \frac{c^{k -  \frac{t}{\alpha} }}{\p\left(n,k,t,\frac{t}{\alpha} \right)}.
 \end{equation}
\end{lemma}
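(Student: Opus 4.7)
The plan is to bound the running time of $\amlsalgo$ by decomposing it into three nested stages: (i) the cost of a single invocation of $\sample(I,k,t)$, (ii) the total cost of one iteration of the outer loop in Step~\ref{amls:loop}, obtained by multiplying (i) by the number of repetitions in Step~\ref{amls:call_sample}, and (iii) the sum over all values of $k\in\{0,\ldots,\lfloor n/\alpha\rfloor\}$. Adding polynomial-time overheads throughout will directly yield \eqref{eq:fdef}, with all polynomial factors absorbed into the $n^{\OO(1)}$ term.

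For (i), I would observe that $\sample(I,k,t)$ does only polynomial bookkeeping --- sampling $X$ in Step~\ref{int:select}, forming $Z=X\cup Y$ in Step~\ref{int:T}, and checking membership in $\F_I$ --- beyond a single call to $\Aext(I, X, k-\lceil t/\alpha \rceil)$ in Step~\ref{int:Aext}. By the hypothesis on $\Aext$, this call runs in time $\OO^*\bigl(c^{k-\lceil t/\alpha\rceil}\bigr) \le n^{\OO(1)}\cdot c^{k-t/\alpha}$, where the last inequality uses $\lceil t/\alpha\rceil \ge t/\alpha$ and $c>1$. Hence $\sample(I,k,t)$ runs in time $n^{\OO(1)}\cdot c^{k-t/\alpha}$.

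For (ii), Step~\ref{amls:call_sample} invokes $\sample(I,k,t)$ $\lceil (\p(n,k,t,t/\alpha))^{-1}\rceil$ times, for the $t$ chosen in Step~\ref{amls:select_t}. Note that taking $t=0$ gives $\p(n,k,0,0)=1$, so the $\argmin$ in Step~\ref{amls:select_t} is always finite-valued and the chosen $t$ satisfies $\p(n,k,t,t/\alpha)>0$. Multiplying by the bound from (i), the cost of one iteration of the outer loop is at most $n^{\OO(1)}\cdot c^{k-t/\alpha}/\p(n,k,t,t/\alpha) = n^{\OO(1)}\cdot \iter_{n,k,c}(t)$. Since Step~\ref{amls:select_t} picks $t$ to minimize $\iter_{n,k,c}$ over $[0,\alpha k]\cap\mathbb{N}$ --- in polynomial time, as there are only $\OO(n)$ candidate values of $t$ and each $\iter_{n,k,c}(t)$ is polynomial-time computable from $n,k,t$ --- the per-iteration cost is at most $n^{\OO(1)}\cdot \min_{t\in[0,\alpha k]\cap\mathbb{N}}\iter_{n,k,c}(t)$. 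Summing over all iterations of Step~\ref{amls:loop} and adding the polynomial-time postprocessing at the end of $\amlsalgo$ gives a total running time of $f_{\alpha,c}(n)\cdot n^{\OO(1)}$, matching \eqref{eq:fdef}.

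I do not expect any substantive obstacle: the argument is essentially accounting. The only mildly delicate points are absorbing the ceiling $\lceil t/\alpha\rceil$ inside the call to $\Aext$ (harmless, as it only decreases the exponent of $c$) and verifying that Step~\ref{amls:select_t} is finite-valued and polynomial-time computable (handled by the $t=0$ case and by iterating over the $\OO(n)$ candidates).
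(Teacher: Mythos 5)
Your proposal is correct and follows essentially the same accounting argument as the paper: bound one call to $\sample$ by $n^{\OO(1)}\cdot c^{k-t/\alpha}$, multiply by the number of repetitions $(\p(n,k,t,t/\alpha))^{-1}$, and sum over $k$. The extra details you supply (absorbing the ceiling in the exponent, finiteness of the $\argmin$ via $t=0$, polynomial-time evaluation of $\iter_{n,k,c}(t)$) are all harmless refinements of the paper's one-paragraph proof.
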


\begin{proof}
 For each choice of $k$, Algorithm~\ref{algo:final} chooses in Step \ref{amls:select_t} a number $t$ that minimizes $\frac{c^{k-\frac{t}{\alpha} }}{\p\left(n,k,t,\frac{t}{\alpha}\right)}$ which takes time $n^{\OO(1)}$.
 Then the algorithm calls Algorithm~\ref{algo:intermediary} $\p\left( n, k, t, \frac{t}{\alpha}\right)^{-1}$ times which takes $\frac{c^{k - \frac{t}{\alpha} } }{\p(n,k,t,\frac{t}{\alpha})} \cdot n^{\OO(1)}$ time in total.
 So overall, the running time is upper bounded by $f_{\alpha,c}(n) \cdot n^{\OO(1)}$. 
\end{proof}

We now proceed with the main part of the analysis which is to bound $f_{\alpha,c}(n)$ by $\amlsbound(\alpha,c)^{n}$ up to some polynomial factors.
We remark at this point (without giving a proof) that our analysis is in fact tight, that is, it can be shown that $f_{\alpha,c}(n)$ is equal to $\amlsbound(\alpha,c)^{n}$ up to some polynomial factors in $n$.

Recall that $\entropy(p)= -p \ln p -(1-p)\ln (1-p)$ denotes the entropy function.
We will use the following bound on binomial coefficients (see, e.g.,~\cite[Example~$11.1.3$]{TJ06}):
\begin{equation}\label{eq:binom}
	\frac{1}{n+1}\cdot  \exp\left( n\cdot \entropy\left(\frac{k}{n}\right)\right)\leq \binom{n}{k} \leq  \exp\left( n\cdot \entropy\left(\frac{k}{n}\right)\right)
\end{equation}
for all $n,k\in \mathbb{N}$ such that $0 \leq k \leq n$.

Moreover, we also need the following technical lemma.
Intuitively speaking, it says that small perturbations to the values of $a$ and $b$ do not change the value of $a \cdot \entropy\left( \frac{b}{a} \right)$ by a large amount. 

\begin{lemma}[$\star$]\label{lem:perturb}
	For $0 \leq b \leq a \leq n$ and $\varepsilon,\delta \in [-1,1]$ such that $a+\varepsilon \geq 0$ and $0\leq b+\delta \leq a+\varepsilon$, we have
	\begin{align*}
		\abs*{a\cdot \entropy\left( \frac{b}{a} \right) - (a + \varepsilon) \cdot \entropy\left( \frac{b + \delta}{a + \varepsilon} \right) } = \OO(\log(n)).
	\end{align*}
\end{lemma}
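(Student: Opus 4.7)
The plan is to reduce the bound to a one-variable estimate on the function $F(x) = x \ln x$ (with the convention $F(0) = 0$). First, I would use the identity
\[ a \cdot \entropy(b/a) = -b \ln(b/a) - (a-b)\ln((a-b)/a) = F(a) - F(b) - F(a-b), \]
valid for $0 \leq b \leq a$ (and at the boundary cases $b\in\{0,a\}$ via the convention). Applying this identity to both sides and invoking the triangle inequality yields
\begin{align*}
&\left| a \cdot \entropy(b/a) - (a+\varepsilon)\cdot \entropy((b+\delta)/(a+\varepsilon)) \right| \\
&\qquad \leq |F(a) - F(a+\varepsilon)| + |F(b) - F(b+\delta)| + |F(a-b) - F((a-b) + (\varepsilon - \delta))|.
\end{align*}
Since $a, b, a-b \in [0, n]$ and the three perturbations have magnitude at most $1, 1, 2$ respectively, it suffices to prove the auxiliary claim that $|F(x) - F(x+\eta)| = O(\log n)$ whenever $0 \leq x, x+\eta \leq n+2$ and $|\eta| \leq 2$.

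For the auxiliary claim I would split on the size of $x$. If $x \geq 3$, then every intermediate point $\xi$ between $x$ and $x + \eta$ lies in $[1, n+2]$, so the mean value theorem together with $|F'(\xi)| = |\ln \xi + 1| \leq \ln(n+2) + 1$ gives $|F(x) - F(x+\eta)| \leq |\eta|(\ln(n+2)+1) = O(\log n)$. If $x < 3$, then both $x$ and $x + \eta$ lie in $[0,5]$, and since $F$ is continuous on this compact interval (with $F(0) = 0$ by convention), it is bounded there by a constant, so $|F(x) - F(x+\eta)| \leq 2 \max_{y \in [0,5]} |F(y)| = O(1)$. Either way the $O(\log n)$ bound holds, and summing the three summands completes the proof.

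The main subtlety is the non-Lipschitz behaviour of $F$ near zero: the derivative $\ln x + 1$ blows up as $x \to 0^+$, so a naive uniform mean value estimate fails. Treating the small-$x$ regime by compactness rather than by differentiation is what makes the decomposition work uniformly at the degenerate edge cases (e.g.\ $b = 0$, $b = a$, or analogous coincidences created by the perturbation $(\varepsilon,\delta)$); this is the only place where care is needed.
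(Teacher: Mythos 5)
Your proposal is correct and follows essentially the same route as the paper: both expand $a\cdot\entropy(b/a)$ as $F(a)-F(b)-F(a-b)$ with $F(x)=x\ln x$, apply the triangle inequality, and reduce to a one-variable perturbation bound on $F$ proved by splitting into a small-$x$ regime (handled by boundedness on a compact interval) and a large-$x$ regime (handled by a first-order/mean-value estimate). Your version even handles the perturbation of magnitude up to $2$ for the $a-b$ term explicitly, which the paper's auxiliary lemma states only for magnitude $1$, so no further changes are needed.
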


Finally, recall that $\D{a}{b}= a \ln \frac{a}{b} +(1-a)\ln \frac{1-a}{1-b}$ denotes the Kullback-Leibler divergence (see, e.g.,~\cite{TJ06}).

\begin{lemma}\label{lem:runtime_upper_bound}
	It holds that
	\begin{align*}
		f_{\alpha,c}(n) = \OO^*(\amlsbound(\alpha,c)^{n}).
	\end{align*}
\end{lemma}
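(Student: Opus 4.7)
The plan is to handle the sum defining $f_{\alpha,c}(n)$ termwise: since it has only $\OO(n)$ terms, it suffices to bound, for each $k\in\{0,\ldots,\lfloor n/\alpha\rfloor\}$, the quantity $\min_{t\in[0,\alpha k]\cap\mathbb{N}} c^{k-t/\alpha}/\p(n,k,t,t/\alpha)$ by $n^{\OO(1)}\cdot \amlsbound(\alpha,c)^{n}$. Rather than finding the true minimizer, I will exhibit one convenient choice of $t$ driven by the heuristic argument at the end of Section~\ref{sec:results}. Set $d := (\amlsbound(\alpha,c)-1)/(c-1)$, so that $d \in (0,1/\alpha)$ and $\D{1/\alpha}{d} = (\ln c)/\alpha$ by the defining equation of $\amlsbound$. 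For $k \geq dn$ I take $t^\star$ to be the integer nearest $\alpha(k-dn)/(1-\alpha d)$---the unique real solution of $(k - t/\alpha)/(n-t) = d$---and for $k < dn$ I take $t^\star = 0$. A direct check shows this choice always lies in $[0,\alpha k]$ when $k \leq n/\alpha$.

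To estimate the resulting ratio I will lower-bound $\p(n,k,t^\star,t^\star/\alpha)$ by its leading summand, the single hypergeometric term $\binom{k}{j}\binom{n-k}{t^\star-j}/\binom{n}{t^\star}$ with $j = \lceil t^\star/\alpha\rceil$. Applying the entropy estimate~\eqref{eq:binom} to each of the three binomials and using Lemma~\ref{lem:perturb} to absorb the gaps between $j$ and $t^\star/\alpha$ (and between $t^\star - j$ and $t^\star(1-1/\alpha)$) into an $n^{\OO(1)}$ multiplicative factor, the problem reduces to showing that
\[
 G(\beta,\tau) \;:=\; (\beta-\tau/\alpha)\ln c + \entropy(\tau) - \beta\,\entropy\!\left(\tfrac{\tau}{\alpha\beta}\right) - (1-\beta)\,\entropy\!\left(\tfrac{\tau(1-1/\alpha)}{1-\beta}\right)
\]
is at most $\ln\amlsbound(\alpha,c)$ at $(\beta,\tau) = (k/n,\,t^\star/n)$ for every admissible $\beta$.

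The heart of the argument is a closed-form evaluation of $G(\beta,\tau^\star(\beta))$. Our choice of $\tau^\star$ produces the clean identities $\beta - \tau^\star/\alpha = d(1-\tau^\star)$, $\alpha\beta - \tau^\star = \alpha d(1-\tau^\star)$, and $(1-\beta) - \tau^\star(1-1/\alpha) = (1-\tau^\star)(1-d)$. Substituting these into each entropy, the coefficients of $\ln\tau^\star$ and of $\ln(1-\tau^\star)$ vanish identically, and the defining equation $\D{1/\alpha}{d} = (\ln c)/\alpha$ then collapses the remaining $\ln\alpha,\ln d,\ln(1-d)$ contributions, leaving the two-term expression
\[
 G(\beta,\tau^\star(\beta)) \;=\; \beta\ln(cd/\beta) + (1-\beta)\ln((1-d)/(1-\beta)).
\]
This function of $\beta$ alone is strictly concave (second derivative $-1/\beta - 1/(1-\beta)$) with unique critical point $\beta^\star = cd/(1 + d(c-1))$; a short computation gives $cd/\beta^\star = (1-d)/(1-\beta^\star) = 1 + d(c-1) = \amlsbound(\alpha,c)$, so $G(\beta^\star,\tau^\star) = \ln\amlsbound(\alpha,c)$, and concavity yields the desired bound for every $\beta$. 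The boundary regime $k < dn$, where $\tau^\star = 0$, reduces to the elementary inequality $d\ln c \leq \ln(1 + d(c-1))$, which holds because $c\mapsto \ln(1+(c-1)d) - d\ln c$ vanishes at $c=1$ and has positive derivative for $c>1$.

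The main obstacle I anticipate is precisely the algebraic collapse of $G(\beta,\tau^\star)$: while each individual cancellation is driven by one of the identities above, the bookkeeping involves several distinct logarithms ($\ln\tau^\star,\ln(1-\tau^\star),\ln\alpha,\ln d,\ln(1-d),\ln\beta,\ln(1-\beta)$) whose coefficients must all be tracked correctly. A useful sanity check along the way is that the first-order condition $\partial_\tau G = 0$ at $\tau^\star$ reproduces exactly $\D{1/\alpha}{d} = (\ln c)/\alpha$---this confirms that $\tau^\star$ really is the critical point of $G(\beta,\cdot)$ (and, by convexity of the hypergeometric large-deviation rate function in its upper-tail argument, its minimizer), rather than an ad hoc choice.
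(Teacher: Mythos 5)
Your proposal is correct and follows essentially the same route as the paper: the same single-term lower bound on the hypergeometric probability, the same entropy estimates with Lemma~\ref{lem:perturb} absorbing the rounding, the same choice $t^\star$ determined by $(k-t/\alpha)/(n-t)=d$ with $d=(\amlsbound(\alpha,c)-1)/(c-1)$, and the same split at $k=dn$ with the boundary case reducing to $c^{d}\le 1+(c-1)d$. The only (cosmetic) difference is the final aggregation: the paper keeps $\binom{n}{k}$ factored out, evaluates $g_{n,k}(t^\star)$ via a KL-divergence identity, and sums over $k$ with the Binomial Theorem, whereas you fold $\binom{n}{k}$ into the exponent and bound each $k$-term by the concave maximum over $\beta=k/n$ — these yield the identical value $\ln(1+(c-1)d)=\ln\amlsbound(\alpha,c)$.
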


\begin{proof}
Recall that $\p\left(n,k,t,\frac{t}{\alpha}\right)$ denotes the probability that a uniformly random set $X$ of $t$ elements out of $[n]$ satisfies that $|X \cap [k]| \geq \frac{t}{\alpha}$.
Thus, we have that
\begin{equation}
	\label{eq:hyper_bound}
	\p\left(n,k,t,\frac{t}{\alpha}\right) = \sum_{y\geq \left\lceil \frac{t}{\alpha} \right\rceil} \frac{ \binom{k}{y} \binom{n-k}{t-y}}{\binom{n}{t}} 
	\geq  
	 \frac{ \binom{k}{\left\lceil \frac{t}{\alpha} \right\rceil} \binom{n-k}{t-\left\lceil \frac{t}{\alpha} \right\rceil}}{\binom{n}{t}}  = 
	  \frac{ \binom{t}{\left\lceil \frac{t}{\alpha} \right\rceil} \binom{n-t}{k-\left\lceil \frac{t}{\alpha} \right\rceil}}{\binom{n}{k}},
\end{equation}
where the last equality holds since the distribution of $|X \cap [k]|$, where $X \subseteq [n]$ is a uniformly random set of cardinality $t$, is identical to the distribution of $|Y \cap [t]|$ where $Y\subseteq [n]$ is a uniformly random set of cardinality $k$.

Using \eqref{eq:hyper_bound} we have
\begin{equation}
	\label{eq:f_initial_bound}
\begin{aligned}
	f&_{\alpha, c}(n)=
 	\sum_{k = 0}^{\left\lfloor \frac{n}{\alpha} \right\rfloor} \min_{~t\in \left[0,\alpha k\right] \cap \mathbb{N}~} \frac{c^{k - \frac{t}{\alpha}}}{\p\left(n,k,t,\frac{t}{\alpha} \right)} \leq 
	\sum_{k = 0}^{\left\lfloor \frac{n}{\alpha} \right\rfloor} \min_{~t\in \left[0,\alpha k\right] \cap \mathbb{N}~} \frac{c^{k - \frac{t}{\alpha} } \cdot {n \choose k}}{{t \choose \left\lceil \frac{t}{\alpha} \right\rceil } {n - t \choose k - \left\lceil \frac{t}{\alpha} \right\rceil }} \\
	&\leq n^{\OO(1)}\cdot \sum_{k = 0}^{\left\lfloor \frac{n}{\alpha} \right\rfloor}{n \choose k} \exp \Bigg( \min_{~t\in \left[0,\alpha k\right] \cap \mathbb{N}~} \left( \left( k -\frac{t}{\alpha}  \right)\ln(c)
									   - t\cdot \entropy\left( \frac{\left\lceil \frac{t}{\alpha} \right\rceil}{t} \right)-(n-t)\cdot \entropy\left( \frac{k - \left\lceil \frac{t}{\alpha} \right\rceil}{n-t} \right) \right) \Bigg) \\
	&\leq  n^{\OO(1)}\cdot \sum_{k = 0}^{\left\lfloor \frac{n}{\alpha} \right\rfloor}{n \choose k} \exp \Bigg( \min_{~t\in \left[0,\alpha k\right] \cap \mathbb{N}~} \left( \left( k - \frac{t}{\alpha} \right)\ln(c)- t\cdot \entropy\left( \frac{1}{\alpha} \right)-(n-t)\cdot \entropy\left( \frac{k - \frac{t}{\alpha}}{n-t} \right) \right) \Bigg) 
\end{aligned}
\end{equation}
where the second inequality follows from \eqref{eq:binom} and the  third inequality follows from Lemma~\ref{lem:perturb}.   

Define
\begin{equation*}
	g_{n,k}(t) \coloneqq \left( k - \frac{t}{\alpha} \right)\ln(c) - t\cdot \entropy\left( \frac{1}{\alpha} \right)-(n-t)\cdot \entropy\left( \frac{k - \frac{t}{\alpha}}{n-t} \right).
\end{equation*}

By Lemma~\ref{lem:perturb} it holds that $|g_{n,k}(t) -g_{n,k}(t-\varepsilon)|= \OO(\log n)$ for any $t\in [0,  \alpha k]$ and $0\leq \eps \leq \min\{1,t\}$. 
Using this observation and~\eqref{eq:f_initial_bound} we get
\begin{equation}
	\label{eq:f_second_bound}
	f_{\alpha,c}(n) \leq n^{\OO(1)}\cdot \sum_{k = 0}^{\left\lfloor \frac{n}{\alpha} \right\rfloor}{n \choose k} \exp \left( \min_{~t\in \left[0,\alpha k\right]\cap \mathbb{N} } g_{n,k}(t) \right) \leq 
	n^{\OO(1)}\cdot \sum_{k = 0}^{\left\lfloor \frac{n}{\alpha} \right\rfloor}{n \choose k} \exp \left( \min_{~t\in \left[0,\alpha k\right] } g_{n,k}(t) \right).
\end{equation}
Observe that in the last term the range of $t$ is not restricted to integral values. 

Let $\delta^*=\frac{\amlsbound(\alpha,c) -1}{c-1}$.
By the definition of $\amlsbound$ it holds that $\delta^* \in (0,\frac{1}{\alpha})$ and $\D{\frac{1}{\alpha}}{\delta^*} = \frac{\ln(c)}{\alpha}$.
Define  $t^*(n,k) \coloneqq \frac{k - n\delta^*}{\frac{1}{\alpha} -  \delta^*}$, thus  $\frac{k-\frac{t^*(n,k)}{\alpha}}{n-t^*(n,k)} = \delta^*$ and $\D{\frac{1}{\alpha}}{ \frac{k-\frac{t^*(n,k)}{\alpha}}{n-t^*(n,k)}}=\frac{\ln c}{\alpha}$ for every $n\in \mathbb{N}$ and $k\in \mathbb{N}$.  
It can be verified that $g_{n,k}(t)$ is convex and has a global minimum at $t^*(n,k)$, though this observation is not directly used by our proof. 

For any $n\in \mathbb{N}$ and  $k\in \left[0,\frac{n}{\alpha}\right]\cap \mathbb{N}$ it holds that  $t^*(n,k) =  \frac{\alpha k(\frac{1}{\alpha} - \delta^*) + \alpha k  \delta^* - n\delta^*}{\frac{1}{\alpha} - \delta^*} \leq \alpha k$ since $\alpha k \leq n$.
Furthermore, $t^*(n,k)\geq 0$ if and only if $k\geq n\delta^*$.
Following this observation we partition the summation in \eqref{eq:f_second_bound} into two parts.  
Define 
$$A(n)=\sum_{k = 0}^{\left\lfloor n\cdot \delta^* \right\rfloor}{n \choose k} \exp \left( \min_{~t\in \left[0,\alpha k\right] } g_{n,k}(t) \right)  \textnormal{ and }B(n)= \sum_{k =\left\lfloor n\cdot \delta^* \right\rfloor + 1  }^{\left\lfloor\frac{n}{\alpha } \right\rfloor}{n \choose k} \exp \left( \min_{~t\in \left[0,\alpha k\right] } g_{n,k}(t) \right).$$ 
Thus, $f_{\alpha,c}(n)\leq n^{\OO(1)} \cdot \left(A(n)+B(n) \right)$. We bound each of the sums $A(n)$ and $B(n)$ separately. 

In order to bound $B(n)$ we use the following algebraic identity.
\begin{claim}
 \label{claim:g_second}
 It holds that
 \begin{equation*}
  g_{n,k}(t) = \left( k - \frac{t}{\alpha} \right)\ln(c) + t \cdot \D{\frac{1}{\alpha}}{\frac{k - \frac{t}{\alpha}}{n-t}}+k\cdot \ln\left(\frac{k - \frac{t}{\alpha}}{n-t} \right) +(n-k) \cdot \ln \left(1-\frac{k - \frac{t}{\alpha}}{n-t} \right)  .
 \end{equation*}
\end{claim}
\begin{claimproof}
 We have
 \begin{align*}
		g&_{n,k}(t) 
		= \left( k - \frac{t}{\alpha} \right)\ln(c) - t\cdot \entropy\left( \frac{1}{\alpha} \right)-(n-t)\cdot \entropy\left( \frac{k - \frac{t}{\alpha}}{n-t} \right) \\
		&=  \left( k - \frac{t}{\alpha} \right)\ln(c) - t\cdot  \entropy\left( \frac{1}{\alpha} \right) +\left( k-\frac{t}{\alpha}\right) \ln \left( \frac{k - \frac{t}{\alpha}}{n-t}\right) + \left( n-k-t\left(1-\frac{1}{\alpha} \right) \right)\ln \left( 1-\frac{k - \frac{t}{\alpha}}{n-t}\right)\\
		&=\left( k - \frac{t}{\alpha} \right)\ln(c)   +t\cdot \left( -\entropy\left(\frac{1}{\alpha}\right) - \frac{1}{\alpha } \cdot \ln \left(\frac{k - \frac{t}{\alpha}}{n-t} \right) -\left(1-\frac{1}{\alpha}\right)\cdot \ln \left( 1- \frac{k - \frac{t}{\alpha}}{n-t} \right)\right) \\
		&~~~~~~~~~~+k\cdot \ln\left(\frac{k - \frac{t}{\alpha}}{n-t} \right) +(n-k) \cdot \ln \left(1 - \frac{k - \frac{t}{\alpha}}{n-t} \right) 
		\\&=
		\left( k - \frac{t}{\alpha} \right)\ln(c) 
	 + t \cdot \D{\frac{1}{\alpha}}{\frac{k - \frac{t}{\alpha}}{n-t}}+k\cdot \ln\left(\frac{k - \frac{t}{\alpha}}{n-t} \right) +(n-k) \cdot \ln \left(1-\frac{k - \frac{t}{\alpha}}{n-t} \right)  .
 \end{align*}
 The second equality is a rearrangement of the terms. The last equality uses the identity
 \begin{equation*}
  \D{x}{y} = x\ln\left (\frac{x}{y}\right) + (1-x)\ln\left(\frac{1-x}{1-y}\right) = -\entropy(x) - x\ln(y) - (1-x)\ln(1-y).
 \end{equation*}
\end{claimproof}

For any $n\in \mathbb{N}$ and $k\in \left[n\delta^*, \frac{n}{\alpha}\right]\cap \mathbb{N}$ it holds that $0\leq t^*(n,k)\leq \alpha k$.
Thus,
\begin{align*}
	\min_{t \in [0, \alpha k]} g_{n,k}(t)& \leq g_{n,k}\left( t^*(n,k) \right) \\
	&= \left( k - \frac{t^*(n,k)}{\alpha} \right)\ln(c) + t^*(n,k) \cdot \D{\frac{1}{\alpha}}{\delta^*} +k\cdot \ln (\delta^*) + (n-k) \cdot \ln (1-\delta^*)\\
	&= k \cdot \ln(c) +k\cdot \ln (\delta^*) + (n-k) \cdot \ln (1-\delta^*) \\
	&= k \cdot \ln\left(\frac{c \cdot \delta^*}{1- \delta^*}\right)  + n \cdot \ln(1 - \delta^*),
\end{align*}
where the first equality uses Claim \ref{claim:g_second} and the second equality follows from $\D{\frac{1}{\alpha}}{\delta^*} = \frac{\ln(c)}{\alpha}$. 
Therefore,
\begin{equation}\label{eq:sum_ii}
	\begin{aligned}
		B(n)&=
		\sum_{k = \left\lceil n \delta^* \right\rceil +1}^{\left\lfloor \frac{n}{\alpha} \right\rfloor} {n \choose k}\cdot \exp\left( \min_{t \in [0, \alpha k]} g_{n,k}(t) \right) \leq  \sum_{k = 0}^{n} {n \choose k} \left( \frac{c \cdot \delta^*}{1 - \delta^*} \right)^{k} \left( 1 - \delta^* \right)^{n}  \\
		&=   (1- \delta^*)^{n} \left( \frac{c \cdot \delta^*}{1 - \delta^*} + 1 \right)^{n} =  \left( (c-1)\delta^* + 1 \right)^{n}
	\end{aligned}
\end{equation}
using the Binomial Theorem.

We now proceed to bound $A(n)$. 
For every $n\in \mathbb{N}$ and $0 \leq k \leq \delta^* n$ it holds that 
\begin{equation*}
	\min_{t \in [0, \alpha k]} g_{n,k}(t) \leq g_{n,k}(0) = k\cdot \ln(c) - n \cdot \entropy \left(\frac{k}{n}\right)\leq k\cdot \ln c -\ln \binom{n}{k},
\end{equation*}
where the last inequality follows from~\eqref{eq:binom}.
Therefore,
\begin{equation}\label{eq:sum_i}
	A(n)=\sum_{k = 0}^{\left\lfloor n \delta^* \right\rfloor} {n \choose k}\cdot \exp\left( \min_{t \in [0, \alpha k]} g_{n,k}(t) \right) \leq  \sum_{k = 0}^{\left\lfloor n \delta^* \right\rfloor} c^{k} \leq n\cdot (c^{\delta^*})^{n}.
\end{equation}

Finally, by using \eqref{eq:sum_ii} and \eqref{eq:sum_i}, we get
\begin{equation*}
	f_{\alpha, c }(n)\leq n^{\OO(1)} \cdot \left( A(n)+B(n)  \right) \leq n^{\OO(1)} \cdot \left( (c^{\delta^*})^{n} + \left( (c-1)\delta^* + 1 \right)^{n} \right) \leq n^{\OO(1)} \cdot \left((c-1)\delta^* + 1\right)^{n},
\end{equation*}
where the third inequality uses $c^{\delta^*} \leq (c-1)\delta^* + 1 $ which holds because $f(x) \coloneqq c^x - (c-1)x - 1$ is convex and has two roots at 0 and 1.
By the definition of $\delta^*$ it holds that $(c-1)\delta^* + 1 = \amlsbound(\alpha,c)$ and thus, $f_{\alpha,c} (n)\leq n^{\OO(1) }\cdot \amlsbound(\alpha,c)^n$.
\end{proof}

Finally, Theorem~\ref{thm:main_randomized} is implied by Lemmas~\ref{lem:correctness},~\ref{lem:runtime_f} and~\ref{lem:runtime_upper_bound}.

\section{Derandomization}
\label{sec:derand-short}
In this section, we show how to derandomize Algorithm \ref{algo:final}. In particular, we prove Theorem~\ref{thm:main_derandomized}.
As usual, let $(U_I,\F_I)$ be a set system and let $k,t,n \in \mathbb{N}$ be the variables from Algorithm~\ref{algo:final} and let $\alpha \geq 1$.
In order to derandomize the algorithm, it is sufficient to find a collection $\CC$ of subsets of $U_I$ of size $t$ such that, for every possible solution set $S \subseteq U_I$ of size $k$, there is some set $X \in \CC$ such that
\[|X \cap S| \geq \frac{t}{\alpha}.\]
We refer to such a family $\CC$ as an \emph{$(n,k,t,\frac{t}{\alpha})$-set-intersection-family} which is formally defined below.

\begin{definition}
 Let $U$ be a universe of size $n$ and let $p,q,r \geq 1$ such that $n \geq p \geq r$ and $n - p + r \geq q \geq r$.
 A family $\CC \subseteq \binom{U}{q}$ is a \emph{$(n,p,q,r)$-set-intersection-family} if for every $T \in \binom{U}{p}$ there is some $X \in \CC$ such that $|T \cap X| \geq r$.
\end{definition}

Given a \emph{$(n,k,t,\frac{t}{\alpha})$-set-intersection-family} $\CC$ we can derandomize Algorithm \ref{algo:final} by iterating over all choices $X \in \CC$ instead of repeatedly sampling a set $X$ uniformly at random.
Observe that the derandomized algorithm (for a fixed $k,t$) runs in time $\OO^*(|\CC| \cdot c^{k - \frac{t}{\alpha}})$.
Now, we define
\[\kappa(n,p,q,r) \coloneqq \frac{\binom{n}{q}}{\binom{p}{r} \cdot \binom{n - p}{q - r}}.\]

The following theorem computes the desired set-intersection-family of small size.

\begin{theorem}[$\star$]
\label{thm:family}
 There is an algorithm that, given a set $U$ of size $n$ and numbers $p,q,r \geq 1$ such that $n \geq p \geq r$ and $n - p + r \geq q \geq r$,
 computes an $(n,p,q,r)$-set-intersection-family of size $\kappa(n,p,q,r)\cdot2^{o(n)}$ in time $\kappa(n,p,q,r)\cdot2^{o(n)}$.
\end{theorem}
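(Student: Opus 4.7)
The plan is to generalize the set-inclusion family construction of~\cite{FominGLS19} via a block-decomposition strategy. First, I would partition $U$ into $s := \lceil \sqrt{n}\,\rceil$ blocks $U_1, \ldots, U_s$ of sizes $n_i \in \{\lfloor n/s\rfloor, \lceil n/s\rceil\}$. Any $p$-subset $T \subseteq U$ is characterized by its \emph{block type} $(p_1, \ldots, p_s)$ with $p_i = |T \cap U_i|$ and $\sum_i p_i = p$. It then suffices to construct, for each valid type, a subfamily covering every $T$ of that type, and to return the union of these subfamilies over all types; the total number of types is at most $\binom{p+s-1}{s-1} = 2^{o(n)}$.

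For a fixed type $(p_1,\ldots,p_s)$, I would choose proportional block parameters $q_i \approx q\,n_i /n$ and $r_i \approx r\,p_i /p$ (rounded so that $\sum_i q_i = q$ and $\sum_i r_i \ge r$, while preserving the local feasibility constraints $r_i \le \min\{p_i, q_i\}$ and $n_i - p_i + r_i \ge q_i$). In each block $U_i$, I would construct an $(n_i, p_i, q_i, r_i)$-set-intersection-family $\CC_i$ by greedy set cover on the universe of $p_i$-subsets of $U_i$: since a uniformly random $q_i$-subset $X \subseteq U_i$ satisfies $|X \cap T_i|\ge r_i$ for a fixed $p_i$-subset $T_i$ with probability at least $1/\kappa(n_i, p_i, q_i, r_i)$, the standard analysis of greedy gives $|\CC_i| \le \kappa(n_i, p_i, q_i, r_i)\cdot n_i^{\OO(1)}$ and runs in $2^{\OO(n_i)} = 2^{o(n)}$ time. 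The subfamily for this type is the product $\{X_1\cup\cdots\cup X_s : X_i\in \CC_i\}$; correctness is immediate since for any $T$ of the given type, $|X\cap T|=\sum_i |X_i\cap(T\cap U_i)| \ge \sum_i r_i \ge r$.

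The main obstacle is to show $\prod_{i=1}^s \kappa(n_i, p_i, q_i, r_i) \le \kappa(n,p,q,r) \cdot 2^{o(n)}$ uniformly across all types. Using the entropy estimate~\eqref{eq:binom} one obtains
\[
\ln \kappa(n,p,q,r) = n\,\entropy\!\left(\tfrac{q}{n}\right) - p\,\entropy\!\left(\tfrac{r}{p}\right) - (n-p)\,\entropy\!\left(\tfrac{q-r}{n-p}\right) \pm \OO(\log n),
\]
and analogously for each $\ln\kappa(n_i,p_i,q_i,r_i)$. With the proportional choices $q_i/n_i \approx q/n$, $r_i/p_i \approx r/p$, and hence $(q_i-r_i)/(n_i-p_i)\approx (q-r)/(n-p)$, the three sums $\sum_i n_i\,\entropy(q_i/n_i)$, $\sum_i p_i\,\entropy(r_i/p_i)$, and $\sum_i (n_i-p_i)\,\entropy((q_i-r_i)/(n_i-p_i))$ each telescope to the corresponding global term; the integer rounding deviations in every block are absorbed by Lemma~\ref{lem:perturb}, contributing only an additive $\OO(s\log n)=o(n)$ slack. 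Combining this with the per-block polynomial factors $n^{\OO(s)}=2^{o(n)}$ and the type-count $2^{o(n)}$, both the total family size and the total running time become $\kappa(n,p,q,r)\cdot 2^{o(n)}$, as required. The subtlety I anticipate is purely in the bookkeeping for the integer rounding of $(q_i)_i$ and $(r_i)_i$, which has to simultaneously preserve the three constraints $\sum_i q_i=q$, $\sum_i r_i \ge r$, and block-wise hypergeometric feasibility for every partition type.
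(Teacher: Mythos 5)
Your overall architecture (decompose $U$ into small blocks, build per-block families by greedy set cover, take products, and pay a $2^{o(n)}$ overhead for enumeration) is genuinely different from the paper's: the paper does not fix one partition and enumerate block types, but instead constructs a pairwise-independent family of hash functions $f\colon U\to[b]$ with $b=\lceil\log n\rceil$ and shows via Chebyshev that \emph{every} target $T$ is split nearly proportionally ($p_i\approx p/b$) by some $f$ in the family. That design choice is not cosmetic — it is exactly what makes the product of block-$\kappa$'s telescope — and your version, which must handle arbitrary types $(p_1,\dots,p_s)$, breaks at precisely the step you flag as the main obstacle.

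The gap is in the claim that $q_i/n_i\approx q/n$ and $r_i/p_i\approx r/p$ imply $(q_i-r_i)/(n_i-p_i)\approx(q-r)/(n-p)$. This implication requires $p_i/n_i\approx p/n$, which a fixed partition cannot guarantee. For unbalanced types the third sum does not telescope: by strict concavity of $\entropy$, $\sum_i(n_i-p_i)\,\entropy\bigl(\frac{q_i-r_i}{n_i-p_i}\bigr)$ falls short of $(n-p)\,\entropy\bigl(\frac{q-r}{n-p}\bigr)$ by $\Omega(n)$ whenever the ratios $\frac{q_i-r_i}{n_i-p_i}$ genuinely differ across blocks, so $\prod_i\kappa(n_i,p_i,q_i,r_i)\geq\kappa(n,p,q,r)\cdot 2^{\Omega(n)}$ and the family is too large. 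Worse, your allocation can be infeasible by a linear margin, not just up to rounding: take $n=100$, ten blocks of size $10$, $p=50$, $q=40$, $r=10$, and a $T$ that fully occupies five blocks. Then your choice gives $q_i=4$ and $r_i=2$ on a block with $p_i=n_i=10$, violating $q_i\leq n_i-p_i+r_i=2$ (and $\binom{n_i-p_i}{q_i-r_i}=\binom{0}{2}=0$, so $\kappa$ is undefined there). The correct allocation splits the two binomials separately: $r_i\approx r\,p_i/p$ and $q_i-r_i\approx(q-r)(n_i-p_i)/(n-p)$, i.e.\ $q_i$ is \emph{not} proportional to $n_i$. With that choice $\sum_i q_i=q$, $\sum_i r_i\approx r$, feasibility holds, and Vandermonde-type maximum-term bounds give $\prod_i\binom{p_i}{r_i}\binom{n_i-p_i}{q_i-r_i}\geq\binom{p}{r}\binom{n-p}{q-r}\cdot 2^{-o(n)}$ together with $\prod_i\binom{n_i}{q_i}\leq\binom{n}{q}$, which yields the desired $\prod_i\kappa_i\leq\kappa(n,p,q,r)\cdot 2^{o(n)}$. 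So your route is salvageable and would constitute a valid alternative to the paper's hash-function argument, but as written the central inequality is false and the construction can fail outright on concentrated types.
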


For the proof of Theorem~\ref{thm:family} we extend the arguments from \cite{FominGLS19} which provide such a result for the special case when $q = r$ (which corresponds to the case $\alpha = 1$).

\begin{proof}[Proof of Theorem~\ref{thm:main_derandomized}]
 We proceed analogously to the proof of Theorem~\ref{thm:main_randomized} with the following changes.
 In Algorithm \ref{algo:final}, Step \ref{amls:select_t} we define
 \[t \coloneqq \argmin_{t \in \left[ 0,\alpha k\right]\cap \mathbb{N}} \kappa\left(n,k,t,\left\lceil\frac{t}{\alpha}\right\rceil\right)c^{k-\frac{t}{\alpha}}\]
 and then compute an $(n,k,t,\frac{t}{\alpha})$-set-intersection-family $\CC$ of size $\kappa(n,p,q,r)\cdot2^{o(n)}$ using Theorem~\ref{thm:family}.
 Afterwards, we repeatedly execute Algorithm~\ref{algo:intermediary} for every $X \in \CC$ (instead of sampling $X$ uniformly at random).
 Repeating the analysis of Lemma \ref{lem:runtime_f} the running time is bounded by $\OO^*(f_{\alpha,c}(n))$ where,
 \[f_{\alpha,c}(n) = \sum_{k = 0}^{\left\lfloor \frac{n}{\alpha} \right\rfloor} \min_{~t\in \left[0,\alpha k\right] \cap \mathbb{N}~} \kappa\left(n,k,t,\left\lceil\frac{t}{\alpha}\right\rceil\right)c^{k-\frac{t}{\alpha}} \cdot 2^{o(n)}.\]
 As we already proved in Lemma \ref{lem:runtime_upper_bound} it holds that
 \begin{align*}
       \sum_{k = 0}^{\left\lfloor \frac{n}{\alpha} \right\rfloor} \min_{~t\in \left[0,\alpha k\right] \cap \mathbb{N}~} \kappa\left(n,k,t,\left\lceil\frac{t}{\alpha}\right\rceil\right)c^{k-\frac{t}{\alpha}}
  &= \sum_{k = 0}^{\left\lfloor \frac{n}{\alpha} \right\rfloor} \min_{~t\in \left[0,\alpha k\right] \cap \mathbb{N}~} \frac{c^{k - \left\lceil \frac{t}{\alpha} \right\rceil} \cdot {n \choose t}}{{k \choose \left\lceil \frac{t}{\alpha} \right\rceil } {n - k \choose t - \left\lceil \frac{t}{\alpha} \right\rceil }}\\
  &= \sum_{k = 0}^{\left\lfloor \frac{n}{\alpha} \right\rfloor} \min_{~t\in \left[0,\alpha k\right] \cap \mathbb{N}~} \frac{c^{k - \left\lceil \frac{t}{\alpha} \right\rceil} \cdot {n \choose k}}{{t \choose \left\lceil \frac{t}{\alpha} \right\rceil } {n - t \choose k - \left\lceil \frac{t}{\alpha} \right\rceil }}
  \leq \amlsbound(\alpha,c)^{n} \cdot n^{\OO(1)}
 \end{align*}
 which results in the running time stated in the theorem.
\end{proof}

\section{The Brute-Force Approximation Algorithm}
\label{sec:brute} 
In this section we show that one can design an $\alpha$-approximate variant of exhaustive search that runs in time $\OO^*\left(\left(\brute(\alpha)\right)^n\right)$, where $\brute(\alpha)= 1 + \exp(-\alpha \cdot \entropy\left(\frac{1}{\alpha}\right))$.
We complement this result by showing that $\OO^*\left(\left(\brute(\alpha)\right)^n\right)$ is the best possible running time of an $\alpha$-approximation algorithm for a subset minimization problem in the \emph{oracle model} defined below.

A (randomized) \emph{oracle $\alpha$-approximation minimum subset} algorithm takes as input a universe $U$ and receives a membership oracle to a \emph{monotone} family $\F \subseteq 2^U$.
The algorithm returns a set $S \in \F$ such that $|S| \leq \alpha \cdot \min\big\{|T|~\big|~T \in \F\big\}$ (with constant probability).

\begin{theorem}\label{thm:brute-force-lower}
 For every $\alpha \geq 1$, there is a randomized oracle $\alpha$-approximation minimum subset algorithm which runs in time $\OO^*((\brute(\alpha))^n)$.
 Additionally, there is a deterministic oracle $\alpha$-approximation minimum subset algorithm which runs in time $(\brute(\alpha))^{n + o(n)}$.\footnote{%
 A previous version of this paper claimed a deterministic oracle $\alpha$-approximation minimum subset algorithm which runs in time $\OO^*((\brute(\alpha))^n)$.
 However, the proof relies on existing results in an incorrect way, and we currently can only prove the slightly weaker statement given in the theorem.}

 Moreover, there is no randomized oracle $\alpha$-approximation minimum subset algorithm which uses $\OO^*(c^n)$ oracle queries for any $c<\brute(\alpha)$.
\end{theorem}

The proof of Theorem~\ref{thm:brute-force-lower} utilizes the technical bound proved in Lemma~\ref{lem:brute_bounds}.
The proof of Lemma~\ref{lem:brute_bounds} uses arguments that similar to the ones used in the proof of Lemma~\ref{lem:runtime_upper_bound}.

\begin{lemma}
 \label{lem:brute_bounds}
 For any $n\in \mathbb{N}$ and $\alpha \geq 1$ it holds that 
 \[n^{-\OO(1)} \cdot \left(\brute(\alpha)\right)^n \leq \max_{k\in \left[0, \frac{n}{\alpha}\right) \cap \mathbb{N}} \frac{ \binom{n}{k} }{\binom{\floor{\alpha k }}{k}} \leq n^{\OO(1)} \cdot \left(\brute(\alpha)\right)^n\]
\end{lemma}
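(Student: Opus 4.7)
The plan is to reduce the two-sided bound to a continuous optimization problem on the interval $[0,1/\alpha)$ via the entropy estimates on binomial coefficients given in~\eqref{eq:binom}, then solve the optimization exactly. Writing $\beta = k/n$, \eqref{eq:binom} together with Lemma~\ref{lem:perturb} (applied to absorb the floor $\lfloor \alpha k \rfloor$) gives
\[
\frac{\binom{n}{k}}{\binom{\lfloor \alpha k\rfloor}{k}} = n^{\pm \OO(1)} \cdot \exp\!\left( n\,\entropy(\beta) - \alpha k \cdot \entropy\!\left(\tfrac{1}{\alpha}\right)\right) = n^{\pm \OO(1)}\cdot \exp\!\left(n\bigl(\entropy(\beta) - \alpha \beta \,\entropy(1/\alpha)\bigr)\right),
\]
so up to polynomial factors the task reduces to evaluating $\max_{\beta \in [0,1/\alpha)} \Psi(\beta)$ where $\Psi(\beta) \coloneqq \entropy(\beta) - \alpha \beta\, \entropy(1/\alpha)$.

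The function $\Psi$ is strictly concave on $(0,1)$, so I would find its maximum by setting $\Psi'(\beta^\star) = 0$, i.e.\ $\ln\!\frac{1-\beta^\star}{\beta^\star} = \alpha\,\entropy(1/\alpha)$. Writing $r \coloneqq \exp(-\alpha\,\entropy(1/\alpha)) = (\alpha-1)^{\alpha-1}/\alpha^{\alpha}$, so that $\brute(\alpha) = 1+r$, this yields the clean formula $\beta^\star = r/(1+r)$, which lies in $(0, 1/\alpha)$ for every $\alpha \geq 1$ (a direct check, since $r < (\alpha-1)/1$ after rearrangement). Substituting back and using the identity $-\ln r = \alpha\,\entropy(1/\alpha)$, a short calculation gives
\[
\Psi(\beta^\star) \;=\; -\tfrac{r}{1+r}\ln r + \ln(1+r) - \tfrac{r}{1+r}\cdot\alpha\,\entropy(1/\alpha) \;=\; \ln(1+r) \;=\; \ln \brute(\alpha),
\]
the two $r$-terms cancelling exactly.

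For the upper bound of the lemma, I would bound the ratio uniformly over all $k \in [0, n/\alpha)\cap \mathbb{N}$ by $n^{\OO(1)}\exp(n\Psi(\beta^\star)) = n^{\OO(1)}\brute(\alpha)^n$ using the displayed entropy estimate and the fact that $\Psi(\beta) \leq \Psi(\beta^\star)$. For the lower bound I would exhibit a single good choice of $k$: namely $k = \lfloor \beta^\star n \rfloor$, for which the approximation $\beta = k/n = \beta^\star + O(1/n)$, combined again with Lemma~\ref{lem:perturb}, yields $\frac{\binom{n}{k}}{\binom{\lfloor\alpha k\rfloor}{k}} \geq n^{-\OO(1)}\exp(n\Psi(\beta^\star))$.

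The main obstacle I anticipate is bookkeeping around the floor $\lfloor \alpha k \rfloor$ and the requirement $k \in \mathbb{N}$: the entropy approximation is smooth but we need integer values of $k$, and for the lower bound when $\beta^\star n$ is close to $0$ one must verify that $\lfloor \alpha k \rfloor \geq k$ stays in the valid range $n - \lfloor \alpha k\rfloor + k \geq 0$ so that $\binom{\lfloor \alpha k\rfloor}{k}$ makes sense. These are all controlled by Lemma~\ref{lem:perturb}, which says that perturbing the arguments of $a\,\entropy(b/a)$ by $O(1)$ changes the value by $O(\log n)$, so the resulting error only contributes a polynomial factor and is absorbed in the $n^{\OO(1)}$ slack on both sides.
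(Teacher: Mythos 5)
Your proposal is correct, and for the key step it takes a genuinely different route from the paper. Both proofs start identically: Equation~\eqref{eq:binom} plus Lemma~\ref{lem:perturb} reduce the ratio $\binom{n}{k}/\binom{\floor{\alpha k}}{k}$ to $n^{\pm\OO(1)}\exp\left(n\,\Psi(k/n)\right)$ with $\Psi(\beta)=\entropy(\beta)-\alpha\beta\,\entropy(1/\alpha)$ (the paper's $g$). Where you diverge is in evaluating $\max_{\beta\in[0,1/\alpha)}\Psi(\beta)$: you solve the first-order condition exactly, obtaining the explicit maximizer $\beta^{\star}=r/(1+r)$ with $r=\exp(-\alpha\,\entropy(1/\alpha))$ and the exact value $\Psi(\beta^{\star})=\ln(1+r)=\ln\brute(\alpha)$ — I checked the cancellation and the membership $\beta^{\star}<1/\alpha$ (which amounts to $(\alpha-1)^{\alpha}<\alpha^{\alpha}$; note your parenthetical ``$r<(\alpha-1)/1$'' should read $r<1/(\alpha-1)$, a harmless slip). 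The paper instead never locates the maximizer: for the upper bound it replaces the max by the sum over all $k\in[0,n]$ and applies the Binomial Theorem to get $(1+r)^{n}$ exactly; for the lower bound it uses concavity of $\Psi$ together with $\Psi(0)=\Psi(1/\alpha)=0$ to argue the max over $[0,n/\alpha)$ equals the max over $[0,n]$, and then bounds max by average, again via the Binomial Theorem. Your approach buys an explicit optimal $k\approx\beta^{\star}n$ (which also explains where the brute-force algorithm's work concentrates) at the cost of a calculus computation; the paper's buys a calculation-free identity at the cost of the slightly less transparent max-versus-sum/average comparisons. Your concern about $\binom{\floor{\alpha k}}{k}$ being well defined only needs $\floor{\alpha k}\ge k$, which is automatic for integer $k$ and $\alpha\ge 1$; the condition $n-\floor{\alpha k}+k\ge 0$ you mention is not needed here.
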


\begin{proof}
	We first establish the upper bound. Observe that
	$$
	\begin{aligned}
	\max_{k\in \left[0, \frac{n}{\alpha}\right) \cap \mathbb{N}} \frac{ \binom{n}{k} }{\binom{\floor{\alpha k }}{k}} 
	&\leq	n^{\OO(1)}\cdot \max_{k\in \left[0, \frac{n}{\alpha}\right) \cap \mathbb{N}} \binom{n}{k} \cdot \exp \left(- \floor{\alpha \cdot k} \cdot \entropy \left(\frac{k} {\floor{\alpha k }}\right)\right)\\
	&\leq n^{\OO(1)}\cdot 	\max_{k\in \left[0, \frac{n}{\alpha}\right) \cap \mathbb{N}} \binom{n}{k} \cdot \exp \left(- \alpha \cdot k \cdot \entropy \left(\frac{1} {\alpha  }\right)\right)\\
	&\leq  n^{\OO(1)}\cdot 	\sum_{k=0}^{n} \binom{n}{k} \cdot \exp \left(- \alpha \cdot k \cdot \entropy \left(\frac{1} {\alpha  }\right)\right)\\
	&=n^{\OO(1)} \left( 1+\exp\left( -\alpha\cdot  \entropy\left(\frac{1}{\alpha}\right)\right)\right)^n \\
	&=n^{\OO(1)} \cdot \left(\brute(\alpha)\right)^n,
	\end{aligned} $$
where the first inequality is by \eqref{eq:binom}, and the second inequality is by Lemma~\ref{lem:perturb}.

The lower bound is established similarly, though it requires some additional sophistication. By \eqref{eq:binom} and Lemma~\ref{lem:perturb} we have
	$$
\begin{aligned}
	\max_{k\in \left[0, \frac{n}{\alpha}\right) \cap \mathbb{N}} \frac{ \binom{n}{k} }{\binom{\floor{\alpha k }}{k}} 
	&\geq n^{-\OO(1) }\cdot \max_{k\in \left[0, \frac{n}{\alpha}\right) \cap \mathbb{N}} \exp\left(n\cdot \entropy\left(\frac{k}{n}\right) -\floor{\alpha \cdot k } \cdot \entropy\left(\frac{k}{\floor{\alpha \cdot k} }\right) \right)\\
	&\geq  n^{-\OO(1) }\cdot 
	\max_{k\in \left[0, \frac{n}{\alpha}\right) \cap \mathbb{N}} \exp\left(n\cdot \entropy\left(\frac{k}{n}\right) -\alpha \cdot k  \cdot \entropy\left(\frac{1}{\alpha }\right) \right),
\end{aligned} $$
 Define $g(\lambda )= \entropy(\lambda) - \lambda \cdot \alpha \cdot \entropy\left( \frac{1}{\alpha}\right)$.
 As the  entropy function is concave it follows that $g$ is concave.
 Furthermore, $g(0)=g\left(\frac{1}{\alpha}\right) = 0$, thus $g$ has a global maximum in the interval $\left[ 0,\frac{1}{\alpha}\right]$.
 Therefore,
$$
\begin{aligned}
	\max_{k\in \left[0, \frac{n}{\alpha}\right) \cap \mathbb{N}} \frac{ \binom{n}{k} }{\binom{\floor{\alpha k }}{k}} 
	&\geq  n^{-\OO(1) }	\cdot \max_{k\in \left[0, \frac{n}{\alpha}\right) \cap \mathbb{N}} \exp\left(n\cdot g\left( \frac{k}{n}\right) \right) \\
	&\geq n^{-\OO(1) }	\cdot \max_{k\in \left[0, n\right] \cap \mathbb{N}} \exp\left(n\cdot g\left( \frac{k}{n}\right) \right) \\
	&= n^{-\OO(1) }\cdot 	\max_{k\in \left[0, n\right] \cap \mathbb{N}} \exp\left(n\cdot \entropy\left( \frac{k}{n}\right) -k\cdot \alpha\cdot  \entropy\left(\frac{1}{\alpha }\right) \right) \\
	&\geq n^{-\OO(1) }\cdot 	\max_{k\in \left[0, n\right] \cap \mathbb{N}}\binom{n}{k} \cdot \exp\left( -k\cdot \alpha\cdot  \entropy\left(\frac{1}{\alpha }\right) \right) \\
	&\geq n^{-\OO(1) }\cdot \frac{1}{n}\cdot 	\sum_{k=0}^{n}\binom{n}{k}\cdot  \exp\left( -k\cdot \alpha\cdot  \entropy\left(\frac{1}{\alpha }\right) \right)\\
	&=n^{-\OO(1)}\cdot \left(1+\exp\left(-\alpha \cdot \entropy\left( \frac{1}{\alpha }\right) \right)\right)^n \\
	&=n^{-\OO(1)}\cdot \brute(\alpha)^n.
\end{aligned} $$
 The third inequality is by \eqref{eq:binom} and the forth holds since the maximum is at least the average.  
\end{proof}

We now prove the three parts of Theorem~\ref{thm:brute-force-lower} separately.
To obtain the claimed algorithms the basic idea is to sample $\OO^*((\brute(\alpha))^n)$ random sets (of some size $k$) and show that the desired approximate solution is found with constant probability.
This algorithm can be derandomized by using Theorem~\ref{thm:family} for the special case when $p = r$.

\begin{lemma}
 For every $\alpha \geq 1$, there is a randomized oracle $\alpha$-approximation minimum subset algorithm which runs in time $\OO^*((\brute(\alpha))^n)$.
\end{lemma}

\begin{proof}
For every $k \leq n/\alpha$ the algorithm performs the following steps:
Let $t \coloneqq \floor{\alpha k}$ and $s \coloneqq \binom{n}{k}/\binom{t}{k}$.
We uniformly at random sample sets $X_{k,1},\dots,X_{k,s} \subseteq U$ of size $t$ and check whether they are contained in $\F$ using the oracle.
If no such set is found, the algorithm returns the entire universe.

We argue that this algorithm is a $\alpha$-approximation algorithm.
Let $\OPT \subseteq U$ be a solution set of minimum cardinality and let $k \coloneqq |\OPT|$.
If $k \geq n/\alpha$, then the algorithm is clearly correct since even returning the entire universe gives the desired approximation ratio.
So suppose that $k \leq n/\alpha$ and let $t \coloneqq \floor{\alpha k}$ and $s \coloneqq \binom{n}{k}/\binom{t}{k}$.
For every $i \in [s]$, the probability that $\OPT \subseteq X_{k,i}$ is exactly $1/s$.
Hence, with probability at least
\[1 - \left(1 - \frac{1}{s}\right)^s \geq 1 - \frac{1}{e} \geq \frac{1}{2},\]
there is some $i \in [s]$ such that $\OPT \subseteq X_{k,i}$.
Since $\F$ is monotone we get that $X_{k,i} \in \F$ and the algorithms returns a solution set of size at most $|X| \leq \alpha k$.

Finally, the algorithm clearly runs in time
\[\max_{k\in \left[0, \frac{n}{\alpha}\right) \cap \mathbb{N}} \frac{ \binom{n}{k} }{\binom{\floor{\alpha k }}{k}} \cdot n^{\OO(1)}.\]
From Lemma \ref{lem:brute_bounds} we get that the running time is bounded by $\left(\brute(\alpha)\right)^n \cdot n^{\OO(1)}$.
\end{proof}

As indicated above, the randomized algorithm from the previous lemma can be derandomized by using Theorem~\ref{thm:family} for the special case when $p = r$.
More concretely, we rely on the notion of covering families.
Let $k < t < n$ be natural numbers and recall $[n] \coloneqq \{1,2,\dots,n\}$.
An \emph{$(n,t,k)$-covering} is a family $\CC \subseteq \{X \mid X \subseteq [n], |X|=t\}$ such that, for every $S \subseteq [n]$ of size $|S| = k$, there is some $X \in \CC$ such that $S \subseteq X$.
To construct the deterministic algorithm for Theorem~\ref{thm:brute-force-lower}, we exploit the following result that constructs $(n,t,k)$-coverings of almost optimal size.

\begin{corollary}
	\label{cor:covering-family}
	There is an algorithm that, given $k < t < n$, computes an $(n,t,k)$-covering $\CC$ of size $\binom{n}{k}/\binom{t}{k} \cdot 2^{o(n)}$ in time $\binom{n}{k}/\binom{t}{k} \cdot 2^{o(n)}$.
\end{corollary}

Note that Corollary~\ref{cor:covering-family} is a direct consequence of Theorem~\ref{thm:family} by setting $q \coloneqq t$, $p \coloneqq k$ and $r \coloneqq k$.

\begin{lemma}
 For every $\alpha \geq 1$, there is a deterministic oracle $\alpha$-approximation minimum subset algorithm which runs in time $(\brute(\alpha))^{n + o(n)}$.
\end{lemma}

\begin{proof}
By renaming elements, we may assume without loss of generality that $U = [n]$.
For every $k \leq n/\alpha$ the algorithm computes an $(n,\floor{\alpha k },k)$-covering $\CC_k$ using Corollary \ref{cor:covering-family} and checks for every set $X \in \CC_k$ whether it is contained in $\F$ using the oracle.
Finally, the algorithm returns the smallest set it finds that is contained in $\F$.
If no such set is found, the algorithm returns the entire universe.

It is easy to see that this algorithm is an $\alpha$-approximation algorithm.
Indeed, let $\OPT \subseteq U$ be a solution set of minimum cardinality and let $k \coloneqq |\OPT|$.
If $k \geq n/\alpha$, then the algorithm is clearly correct since even returning the entire universe gives the desired approximation ratio.
So suppose that $k \leq n/\alpha$.
By definition of an $(n,\floor{\alpha k },k)$-covering there is some $X \in \CC_k$ such that $\OPT \subseteq X$ and $|X| = \floor{\alpha k} \leq \alpha k$.
Since $\F$ is monotone we get that $X \in \F$ and the algorithm returns a solution set of size at most $|X| \leq \alpha k$.

Finally, using Corollary~\ref{cor:covering-family}, observe that the algorithm runs in time
\[\max_{k\in \left[0, \frac{n}{\alpha}\right) \cap \mathbb{N}} \frac{ \binom{n}{k} }{\binom{\floor{\alpha k }}{k}} \cdot 2^{o(n)}.\]
From Lemma \ref{lem:brute_bounds} we get that the running time is bounded by $\left(\brute(\alpha)\right)^{n + o(n)}$.
\end{proof}

Finally, we turn to the lower bound in Theorem~\ref{thm:brute-force-lower}.

\begin{lemma}
 Let $\alpha \geq 1$.
 There is no randomized oracle $\alpha$-approximation minimum subset algorithm which uses $\OO^*(c^n)$ oracle queries for any $c<\brute(\alpha)$.
\end{lemma}

\begin{proof}
For every $n\in \mathbb{N}$, define
$\displaystyle \kappa(n) =  \argmax_{k\in \left[0, \frac{n}{\alpha}\right) \cap \mathbb{N}} \binom{n}{k} \cdot  \binom{\floor{\alpha k }}{k}^{-1}$.
Also, for $n\in \mathbb{N}$ define $\Fadv (n) = \{S\subseteq [n]~|~|S|>\alpha \kappa(n)\}$.
For every $n\in \mathbb{N}$ and $X\subseteq [n]$ define a family $\F(n,X) = \{S\subseteq [n]~|~X\subseteq S\} \cup \Fadv(n)$.
It can be easily observed that $\F(n,X)$  and $\Fadv(n)$ are monotone set families.
Our bound is based on the difficulty that algorithms have to distinguish between $\Fadv(n)$ and $\F(n,X)$.

Let $\A$ be a randomized oracle $\alpha$-approximation minimum subset algorithm.
Without loss of generality we assume $\A$ only returns a set $S$ if it  queried the oracle with that set (and got back a positive answer).
Let $q(n)$ be the maximal number of oracle queries the algorithm uses given a universe of size $n$.
As the algorithm is randomized, we use $R$ to denote the random sequence of  bits used by the algorithm.

Fix $n\in \mathbb{N}$.
For any $j\in [q(n)]$ the $j$-th query to the oracle is a function of the previous answers the algorithm received from the oracle and the sequence of random bits the algorithm uses.
Thus, there is a function
$S_j(R)$ which returns the $j$-th query the algorithm sends to the oracle, given that the algorithm gets an oracle to $\Fadv(n)$.  If the algorithm does not issue the $j$-th query given $R$ we arbitrarily define $S_j(R)=\emptyset$.

Let $X\subseteq [n]$ be a random set of size $\kappa(n)$ which is sampled uniformly (and independently of $R$).
Consider the execution of $\A$ with the universe $[n]$ and an oracle for $\F(n,X)$.
Define
\begin{equation}
	\label{eq:C_def}
	\CC(R)= \bigcup_{j=1}^{q(n)} \begin{cases}
	                              \left\{ T\subseteq [n]~| ~|T|=\kappa(n),~T\subseteq S_j(R) \right\} &\textnormal{if }|S_j(R)|\leq \alpha \cdot \kappa(n) \\
		                            \emptyset                                                           &\textnormal{otherwise}
	                             \end{cases}.
\end{equation}
If $X\notin \CC(R)$ then the answers the algorithm receives to its queries are identical to the answers it would have received if it was given an oracle $\Fadv(n)$.
It therefore  asks the same queries, and must return a set $S\in \Fadv(n)$.
As $\A$ is a randomized $\alpha$-approximation algorithm, there is $\gamma\in (0,1]$ such that $\A$ returns a set $S\in \F(n,X)$ which satisfies $|S|\leq \alpha \cdot |X|=\alpha \cdot \kappa(n)$ with probability at least $\gamma$.
As all the sets in $\Fadv$ have cardinality greater than $\alpha \cdot \kappa(n)$ it follows that $\Pr(X\notin \CC(R) ) \leq 1-\gamma$, or equivalently, $\Pr(X\in \CC(R) )\geq \gamma$.

By the definition of $\CC(R)$ \eqref{eq:C_def}, each query $S_j(R)$ adds at most $\binom{\floor{\alpha \cdot \kappa(n)}}{\kappa(n)}$ sets to $\CC(R)$.
Thus $\CC(R) \leq q(n) \cdot \binom{\floor{\alpha \cdot \kappa(n)}}{\kappa(n)}$.
Since $X$ is independent of $R$ we have,
$$
\gamma \leq \Pr(X\in \CC(R) )\leq \frac{  q(n) \cdot \binom{\floor{\alpha \cdot \kappa(n)}}{\kappa(n)}}{ \binom{n}{\kappa(n)}},
$$
and hence
$$q(n) \geq \gamma \cdot \frac{\binom{n}{\kappa(n)} }{  \binom{\floor{\alpha \cdot \kappa(n)}}{\kappa(n)}} = \gamma \cdot   \max_{k\in \left[0, \frac{n}{\alpha}\right) \cap \mathbb{N}} \frac{ \binom{n}{k} }{\binom{\floor{\alpha k }}{k}}
\geq  n^{-\OO(1)} \cdot \left(\brute(\alpha)\right)^n,$$
where the equality follows from the definition of $\kappa(n)$ and the last inequality follows from Lemma~\ref{lem:brute_bounds}. In particular, this implies that $\A$ does not use $\OO^*( c^n)$ oracle queries for any $c<\brute(\alpha)$.
\end{proof}

However, this introduces another sublinear term in the exponent of the running time.
Instead, in this special case, we can rely on existing results on covering families~\cite{Kuzjurin00}.

Let $k < t < n$ be natural numbers and recall $[n] \coloneqq \{1,2,\ldots ,n\}$.
An \emph{$(n,t,k)$-covering} is a family $\CC \subseteq \{X \mid X \subseteq [n], |X|=t\}$ such that, for every $S \subseteq [n]$ of size $|S| = k$, there is some $X \in \CC$ such that $S \subseteq X$.
To construct the algorithm for Theorem~\ref{thm:brute-force-lower}, we exploit known constructions of $(n,t,k)$-coverings of almost optimal size.

\section{Concluding Remarks}
We introduced and analyzed approximate monotone local search \amls\ which can be used to obtain faster exponential approximation algorithms from parameterized (extension) approximation algorithms for monotone subset minimization problems.
In particular, we obtain faster exponential approximation algorithms for {\sc Vertex Cover}, {\sc $3$-Hitting Set}, {\sc DFVS}, {\sc Subset DFVS}, {\sc DOCT} and {\sc Undirected Multicut} (for some approximation ratios).

The significance of \emls\ stems from the abundance of existing parameterized (extension) algorithms which can be used to obtain the state-of-art exponential algorithms for multiple problems.
\amls\ has a similar potential in the context of exponential approximation algorithms.
Thus, our result further emphasizes the importance of the already-growing field of parameterized approximation, by exhibiting its strong connections with exponential-time approximations.

Some interesting follow-up questions of our work are the following.

\begin{problem}
 Can an $\alpha$-approximate algorithm running in time $\OO^*((\brute(\alpha) -\eps)^n)$ be derived from a parameterized extension $\beta$-approximation algorithm for any $\beta > \alpha$?
\end{problem}

For example, for {\sc Directed Feedback Vertex Set} only a parameterized $2$-approximation algorithm running in time $\OO^*(c^k)$~\cite{LokshtanovMRSZ21} is currently available.
The question is whether this algorithm can also be used to obtain an exponential $1.1$-approximation algorithm that runs in time $\OO^*((\brute(1.1) -\eps)^n)$, for some $\eps >0$?

We also described and showed that the exhaustive search analog in the $\alpha$-approximate setting achieves the best possible running time of $\OO^*(\left(\brute(\alpha)\right)^n)$ when one only has access to a membership oracle for the problem.
Observe that for $\alpha=1$, \SETH\ asserts that $(\brute(1))^n=2^n$ is tight.

\begin{problem}
 Does there exist a monotone subset minimization problem for which there is no $\alpha$-approximation algorithm that runs in time $\OO^*\left(   \left(  \brute(\alpha) - \eps \right)^n \right)$, assuming \SETH?
\end{problem}

Recall that the \amls\ algorithm only uses random sampling and the given parameterized $\alpha$-approximation extension algorithm.
Another interesting lower bound question is the following.

\begin{problem}
 Can one show that the running time of \amls\ is tight (up to polynomial factors) when one is only given access to a membership oracle and a parameterized $\alpha$-approximation extension algorithm as a black-box?
\end{problem}

\bibliographystyle{plain}
\bibliography{refs}

\appendix

\newpage

\section{Properties of $\amlsbound(\alpha,c)$}
\label{app:comparison}
In this section we prove Lemma~\ref{lem:comparison}.
Recall that $\amlsbound(\alpha,c)$ is the unique value $\gamma \in \left(1, 1+\frac{c-1}{\alpha}\right)$ such that $\D{\frac{1}{\alpha}}{\frac{\gamma-1}{c-1}} =\frac{\ln c }{\alpha}$.
The proof of Lemma~\ref{lem:comparison} follows from Lemmas~\ref{lem:amls_vs_emls},~\ref{lem:mono_r},~\ref{lem:amls_converge}~\ref{lem:amls_vs_brute} and~\ref{lem:amls_vs_naive}.

First, Lemma~\ref{lem:amls_vs_emls} shows that $\amls$ is strictly faster than the trivial generalization of $\emls$ (\hyperref[benchmark-3]{Benchmark 3}) when $\alpha \neq 1$.
We remark that Lemma~\ref{lem:amls_vs_emls} can also be proved via Lemma~\ref{lem:mono_r} and the fact that $\amlsbound(1,c) = \emlsbound(\alpha)$.
However, we give a separate proof containing simpler arguments.

\begin{lemma}\label{lem:amls_vs_emls}
	For every $\alpha,c > 1$,	$\amlsbound(\alpha,c) < \emlsbound(c) < 2$.
	In particular, $\amlsbound(\alpha,c)$ is bounded as a function of $c$. 
\end{lemma}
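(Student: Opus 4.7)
The bound $\emlsbound(c) = 2 - \frac{1}{c} < 2$ is immediate from $c > 1$. For the main inequality $\amlsbound(\alpha, c) < \emlsbound(c)$ with $\alpha > 1$, the plan is to exploit the strict monotonicity of
\[
f(\gamma) \coloneqq \D{\frac{1}{\alpha}}{\frac{\gamma-1}{c-1}}
\]
on the interval $\bigl(1, 1 + \frac{c-1}{\alpha}\bigr)$ (recorded in the excerpt just after the definition of $\amlsbound$), combined with a case split according to whether $c \leq \alpha$ or $c > \alpha$.

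The easy case is $c \leq \alpha$. Here $\frac{c-1}{\alpha} \leq \frac{c-1}{c}$, so the right endpoint of the defining interval of $\amlsbound(\alpha,c)$ already satisfies $1 + \frac{c-1}{\alpha} \leq 2 - \frac{1}{c} = \emlsbound(c)$. Combined with $\amlsbound(\alpha, c) < 1 + \frac{c-1}{\alpha}$ by definition, this finishes the case.

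The harder case is $c > \alpha$, in which $\emlsbound(c)$ lies strictly inside the domain of $f$. By strict monotonicity of $f$ it suffices to check that $f(\emlsbound(c)) < \frac{\ln c}{\alpha}$. A direct computation gives $f(\emlsbound(c)) = \D{\frac{1}{\alpha}}{\frac{1}{c}}$. Expanding the KL divergence and multiplying through by $\alpha$, I plan to rewrite the inequality as $g(c) < 0$, where
\[
g(c) \coloneqq -\alpha \ln \alpha + (\alpha - 1)\ln(\alpha - 1) + (\alpha - 1)\ln\!\left(\frac{c}{c-1}\right).
\]
A one-line derivative computation gives $g'(c) = -\frac{\alpha - 1}{c(c-1)} < 0$, so $g$ is strictly decreasing in $c$; evaluating at the boundary yields $g(\alpha) = -\ln \alpha < 0$ after a small cancellation. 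Hence $g(c) < g(\alpha) < 0$ for all $c > \alpha$, which completes the case.

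The main obstacle is the algebraic manipulation in Case 2, specifically the reduction of $\D{\frac{1}{\alpha}}{\frac{1}{c}} < \frac{\ln c}{\alpha}$ to a single decreasing auxiliary function whose value at the natural boundary $c = \alpha$ is easy to compute. I do not expect a unified case-free argument, because $\emlsbound(c)$ crosses in and out of $f$'s natural domain precisely at $c = \alpha$. The boundedness claim is then immediate, since we have shown $1 < \amlsbound(\alpha, c) < \emlsbound(c) < 2$ for all $c > 1$.
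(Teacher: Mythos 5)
Your proposal is correct and follows essentially the same route as the paper: split at $c=\alpha$ (where $\emlsbound(c)$ enters the interval $\bigl(1,1+\frac{c-1}{\alpha}\bigr)$), use the strict monotonicity of $\D{\frac{1}{\alpha}}{\cdot}$ to reduce the claim to $\D{\frac{1}{\alpha}}{\frac{1}{c}}<\frac{\ln c}{\alpha}$, and verify that inequality. The only (cosmetic) difference is the final step: the paper observes directly that $\frac{1}{\alpha}\ln\frac{1}{\alpha}+\bigl(1-\frac{1}{\alpha}\bigr)\ln\bigl(\frac{\alpha-1}{\alpha}\cdot\frac{c}{c-1}\bigr)<0$ since both terms are nonpositive for $c\geq\alpha$, whereas you reach the same conclusion via the decreasing auxiliary function $g$ evaluated at $c=\alpha$.
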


\begin{proof}
 Recall that $\emlsbound(c) = 2-\frac{1}{c}$.
 Without loss of generality, we can assume that $\emlsbound(c) \leq 1 + \frac{c-1}{\alpha}$ which implies $c \geq \alpha$, otherwise the claim is trivially satisfied.
 Since $\D{\frac{1}{\alpha}}{r}$ is decreasing for $r \in (0,\frac{1}{\alpha})$, we have $\amlsbound(\alpha,c) < \emlsbound(c)$ if and only if

 \begin{equation*}
  \frac{\ln(c)}{\alpha} = \D{\frac{1}{\alpha}}{\frac{\amlsbound(\alpha,c) -1}{c-1}} > \D{\frac{1}{\alpha}}{\frac{\emlsbound(\alpha,c) -1}{c-1}} = \D{\frac{1}{\alpha}}{\frac{1 - \frac{1}{c}}{c-1}}.
 \end{equation*}

 Since $c \geq \alpha$, we have $\frac{\alpha-1}{\alpha}\cdot\frac{c}{c-1} \leq 1$.
 Thus,
 \begin{equation*}
  \D{\frac{1}{\alpha}}{\frac{1 - \frac{1}{c}}{c-1}} - \frac{\ln(c)}{\alpha} = \D{\frac{1}{\alpha}}{\frac{1}{c}} - \frac{\ln(c)}{\alpha} = \frac{1}{\alpha}\cdot\ln\left(\frac{1}{\alpha}\right) + \left( 1 - \frac{1}{\alpha} \right)\cdot \ln\left( \frac{\alpha-1}{\alpha}\cdot\frac{c}{c-1} \right) < 0.
 \end{equation*}
\end{proof}

Next we show another upper bound to $\amlsbound(\alpha,c)$ which will be used in the proofs of the upcoming lemmas.
\begin{lemma}\label{lem:amls_ub}
	For all $\alpha, c > 1$,
	\begin{equation*}
	\amlsbound(\alpha,c) < \frac{\alpha c}{1 + (\alpha-1)c}.
	\end{equation*}
\end{lemma}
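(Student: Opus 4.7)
The plan is to exploit the characterization of $\amlsbound(\alpha,c)$ as the (unique) value of $\gamma \in (1, 1+\frac{c-1}{\alpha})$ at which $f(\gamma) := \D{\frac{1}{\alpha}}{\frac{\gamma-1}{c-1}}$ equals $\frac{\ln c}{\alpha}$, together with the fact (already noted in the excerpt) that $f$ is strictly decreasing on that interval. Concretely, setting $\gamma^* := \frac{\alpha c}{1+(\alpha-1)c}$, it suffices to verify two things: (i) $\gamma^*$ lies in the interval $(1, 1+\frac{c-1}{\alpha})$, and (ii) $f(\gamma^*) < \frac{\ln c}{\alpha}$. Combined with the monotonicity of $f$, this forces $\amlsbound(\alpha,c) < \gamma^*$, which is the desired bound.

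For (i), a direct computation gives $\gamma^*-1 = \frac{c-1}{1+(\alpha-1)c}$, and both inequalities $\gamma^* > 1$ and $\gamma^* < 1+\frac{c-1}{\alpha}$ reduce to $c > 1$. For (ii), the key observation is that
\begin{equation*}
\frac{\gamma^*-1}{c-1} \;=\; \frac{1}{1+(\alpha-1)c},
\end{equation*}
so that $1 - \frac{\gamma^*-1}{c-1} = \frac{(\alpha-1)c}{1+(\alpha-1)c}$. Expanding the definition $\D{a}{b} = a\ln\frac{a}{b} + (1-a)\ln\frac{1-a}{1-b}$ with $a = \frac{1}{\alpha}$ and $b = \frac{1}{1+(\alpha-1)c}$, both logarithmic terms share the common factor $\ln\frac{1+(\alpha-1)c}{\alpha}$, so after collecting,
\begin{equation*}
f(\gamma^*) \;=\; \ln\frac{1+(\alpha-1)c}{\alpha} - \frac{\alpha-1}{\alpha}\ln c.
\end{equation*}

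Hence the required inequality $f(\gamma^*) < \frac{\ln c}{\alpha}$ becomes, after moving the $\ln c$ term across and combining, $\ln\frac{1+(\alpha-1)c}{\alpha} < \ln c$, equivalently $1 + (\alpha-1)c < \alpha c$, which again is exactly $c > 1$. This completes the proof. There is no real obstacle here: the argument is purely algebraic, and the only ingredients are the definition of $\amlsbound(\alpha,c)$, the monotonicity of $f$ on $(0,\frac{1}{\alpha})$ recorded earlier in the paper, and the convenient fact that the candidate $\gamma^*$ was chosen precisely so that $\frac{\gamma^*-1}{c-1}$ takes the clean form $\frac{1}{1+(\alpha-1)c}$, which makes the two logarithms inside the KL divergence collapse.
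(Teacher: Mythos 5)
Your proposal is correct and follows essentially the same route as the paper: both set $\gamma^* = \frac{\alpha c}{1+(\alpha-1)c}$, check it lies in $\left(1, 1+\frac{c-1}{\alpha}\right)$, and use the monotonicity of $\delta \mapsto \D{\frac{1}{\alpha}}{\delta}$ to reduce the claim to the algebraic identity $\D{\frac{1}{\alpha}}{\frac{1}{1+(\alpha-1)c}} - \frac{\ln c}{\alpha} = \ln\frac{1+(\alpha-1)c}{\alpha c} < 0$. Your intermediate computations check out, so no further changes are needed.
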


\begin{proof}
 Let $\phi \coloneqq \frac{\alpha  c}{1+\left(\alpha -1\right) c} = \frac{1 + (\alpha - 1)c + (c-1)}{1 + (\alpha - 1)c} = 1 + \frac{c-1}{1 + (\alpha -1)c} < 1 + \frac{c-1}{\alpha}$ where the inequality holds because $c > 1$.
 Since $\D{\frac{1}{\alpha}}{r}$ is decreasing for $r \in (0,\frac{1}{\alpha})$, we get that $\amlsbound(\alpha, c) < \phi$ if and only if
 \[\frac{\ln(c)}{\alpha} = \D{\frac{1}{\alpha}}{\frac{\amlsbound(\alpha,c) - 1}{c-1}} > \D{\frac{1}{\alpha}}{\frac{\phi - 1}{c-1}}  = \D{\frac{1}{\alpha}}{\frac{1}{1 + (\alpha - 1)c}}.\]
 Also, we have
 \begin{align*}
	\D{\frac{1}{\alpha}}{\frac{1}{1 + (\alpha - 1)c}} - \frac{\ln(c)}{\alpha} &= \frac{1}{\alpha}\cdot\ln \! \left(\frac{1+\left(\alpha -1\right) c}{\alpha}\right) -\frac{\ln \! \left(c \right)}{\alpha}+\left(1-\frac{1}{\alpha}\right)\cdot \ln \! \left(\frac{1-\frac{1}{\alpha}}{1-\frac{1}{1+\left(\alpha -1\right) c}}\right) \\
										  &= \frac{1}{\alpha}\cdot\ln \! \left(\frac{1+\left(\alpha -1\right) c}{c \alpha}\right) +\left(1-\frac{1}{\alpha}\right)\cdot \ln \! \left(\frac{1-\frac{1}{\alpha}}{1-\frac{1}{1+\left(\alpha -1\right) c}}\right)\\
										  &=  \frac{1}{\alpha}\cdot \ln\left(\frac{1+\left(\alpha -1\right) c}{c \alpha}  \right) + \left( 1 - \frac{1}{\alpha}\right)\cdot \ln\left( \frac{1+\left(\alpha -1\right) c}{c \alpha} \right)   \\
										  &= \ln \! \left(\frac{1+\left(\alpha -1\right) c}{c \alpha}\right) =  \ln \! \left(\frac{\alpha c - c + 1}{c \alpha}\right) < 0
 \end{align*}
 where the last inequality holds because $c > 1$.
 In combination, this proves the lemma.
\end{proof}

The next lemma is about the partial derivatives of $\amlsbound(\alpha,c)$, which is required to show the monotonicity of $\amlsbound$ in Lemma~\ref{lem:mono_r}.

\begin{lemma}\label{lem:amls_diff}
	For every fixed $\alpha > 1$, $\amlsbound(\alpha,c)$ is differentiable as a function of $c$ for $c > 1$.
	Similarly, for every fixed $c > 1$, $\amlsbound(\alpha,c)$ is differentiable as a function of $\alpha$ for $\alpha > 1$.
\end{lemma}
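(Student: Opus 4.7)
The plan is to apply the implicit function theorem to the defining equation for $\amlsbound(\alpha,c)$. Define
\[
F(\gamma,\alpha,c) \;\coloneqq\; \D{\tfrac{1}{\alpha}}{\tfrac{\gamma-1}{c-1}} - \tfrac{\ln c}{\alpha}
\]
on the open domain $\{(\gamma,\alpha,c) : \alpha>1,\ c>1,\ 1 < \gamma < 1 + \tfrac{c-1}{\alpha}\}$. On this domain $\tfrac{\gamma-1}{c-1} \in (0, 1/\alpha)$, so the two logarithms inside the Kullback–Leibler expression are evaluated at strictly positive arguments, and $F$ is $C^{\infty}$ in $(\gamma,\alpha,c)$. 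By the very definition of $\amlsbound$, we have $F(\amlsbound(\alpha,c),\alpha,c)=0$ for every $\alpha,c>1$.

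Next I would verify that $\partial F/\partial \gamma \neq 0$ at each such root. Writing $b = \tfrac{\gamma-1}{c-1}$ and differentiating,
\[
\frac{\partial F}{\partial \gamma} \;=\; \frac{1}{c-1}\cdot\frac{\partial}{\partial b}\D{\tfrac{1}{\alpha}}{b} \;=\; \frac{1}{c-1}\left(-\frac{1}{\alpha b} + \frac{1-1/\alpha}{1-b}\right).
\]
As already noted in the paragraph immediately preceding Theorem~\ref{thm:main_randomized}, the map $b \mapsto \D{1/\alpha}{b}$ is strictly monotonically decreasing on $b\in(0,1/\alpha)$. At $\gamma = \amlsbound(\alpha,c)$ we have $b = (\amlsbound(\alpha,c)-1)/(c-1) \in (0,1/\alpha)$ by the containment $\amlsbound(\alpha,c) \in (1, 1+(c-1)/\alpha)$, so $\partial F/\partial \gamma < 0$ and in particular is nonzero.

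Since $F$ is $C^{1}$ and $\partial F/\partial\gamma$ is nonzero at every solution, the implicit function theorem applies: for each fixed $\alpha_0>1$ and $c_0>1$, there exist open neighborhoods and a $C^{1}$ function $\gamma(\alpha,c)$ with $F(\gamma(\alpha,c),\alpha,c)=0$ and $\gamma(\alpha_0,c_0)=\amlsbound(\alpha_0,c_0)$. By uniqueness of the root in $(1, 1+(c-1)/\alpha)$, this local solution must coincide with $\amlsbound(\alpha,c)$. In particular, for fixed $\alpha>1$, $c\mapsto \amlsbound(\alpha,c)$ is differentiable on $(1,\infty)$, and for fixed $c>1$, $\alpha\mapsto \amlsbound(\alpha,c)$ is differentiable on $(1,\infty)$, as claimed.

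I do not expect a real obstacle here; the only subtlety is ensuring the hypotheses of the implicit function theorem genuinely hold, which reduces to the already-established strict monotonicity of $b\mapsto \D{1/\alpha}{b}$ on $(0,1/\alpha)$ together with the containment of $b$ in that interval. If one wanted explicit formulas for $\partial \amlsbound/\partial c$ and $\partial \amlsbound/\partial \alpha$ (for use in later lemmas such as Lemma~\ref{lem:mono_r}), these follow from $\partial_\gamma F\cdot\partial_c\gamma + \partial_c F = 0$ and the analogous identity in $\alpha$, but the differentiability statement itself requires only the nonvanishing argument above.
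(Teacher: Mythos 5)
Your proposal is correct and follows essentially the same route as the paper: both set up the defining equation $F=0$, verify that $\partial F/\partial\gamma$ is nonzero (indeed negative) at the root because $\frac{\gamma-1}{c-1}\in(0,\frac{1}{\alpha})$ where $b\mapsto\D{\frac{1}{\alpha}}{b}$ is strictly decreasing, and invoke the Implicit Function Theorem, identifying the local solution with $\amlsbound$ by uniqueness. The only cosmetic difference is that you treat a single three-variable function where the paper splits into two two-variable cases.
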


\begin{proof}
	 Let us fix some $\alpha > 1$.
	 We set $\Omega_{\alpha} = \{(c,y) \mid 1 < c, 1 < y < 1 + \frac{c-1}{\alpha}\}$ and define $f \colon \Omega_{\alpha} \to \mathbb{R}$ such that
	 \begin{equation*}
		f\left( c,y \right) \coloneqq \D{\frac{1}{\alpha}}{\frac{y-1}{c-1}} - \frac{\ln(c)}{\alpha}.
	\end{equation*}

  Since $f$ is twice differentiable, in particular it is continuously differentiable, and hence its partial derivative with respect to $y$ is given by $\frac{\partial f(c,y)}{\partial y} = \frac{\alpha  \left(y -1\right)-c +1}{\left(c -y \right) \alpha  \left(y -1\right)}$ (generated automatically).

	Let $c_0 > 1$ and set $y_0 = \amlsbound(\alpha, c_0)$.
	Then $(c_0, y_0) \in \Omega_{\alpha}$ and therefore $\frac{\partial f(c_0, y_0)}{\partial y} \neq 0$.
	By the Implicit Function Theorem (see, e.g., \cite[Theorem 1.3.1]{KP02}),	there exists a neighborhood $U$ that contains $c_0$ and a continuously differentiable function $h \colon U \to \mathbb{R}$ such that $h(c_0) = y_0 = \amlsbound(\alpha, c_0)$ and $\D{\frac{1}{\alpha}}{\frac{h(c)-1}{c-1}} =\frac{\ln c }{\alpha}$, i.e., $h(c) = \amlsbound(\alpha,c)$ for all $c \in U$.
	In particular this means that $\amlsbound(\alpha,c)$ is differentiable at $c_0$.
	Since $c_0 > 1$ was arbitrary, this implies that $\amlsbound(\alpha,c)$ is differentiable on $(1,\infty)$ as a function of $c$.

	For the second part of the lemma, let us fix some $c > 1$.
	We set $\Phi_{c} = \{(\alpha, y) \mid 1 < \alpha, 1 < y < 1 + \frac{c-1}{\alpha}\}$ and define $g \colon \Phi_{c} \to \mathbb{R}$ such that
  \begin{equation*}
		g\left( \alpha,y \right) \coloneqq \D{\frac{1}{\alpha}}{\frac{y-1}{c-1}} - \frac{\ln(c)}{\alpha}
	\end{equation*}

  Again, since $g$ is twice differentiable, in particular it is continuously differentiable, and hence its partial derivative with respect to $y$ is given by $\frac{\partial g(\alpha,y)}{\partial y} = \frac{\partial f(c,y)}{\partial y} = \frac{\alpha  \left(y -1\right)-c +1}{\left(c -y \right) \alpha  \left(y -1\right)}$.

	Let $\alpha_0 > 1$ and set $y_0 = \amlsbound(\alpha_0, c)$.
	Then $(\alpha_0, y_0) \in \Phi_{c}$ and therefore $\frac{\partial g(\alpha_0, y_0)}{\partial y} \neq 0$.
	By the Implicit Function Theorem, there exists a neighborhood $U$ that contains $\alpha_0$ and a continuously differentiable function $h \colon U \to \mathbb{R}$ such that $h(\alpha_0) = y_0 = \amlsbound(\alpha_0, c)$ and $\D{\frac{1}{\alpha}}{\frac{h(\alpha)-1}{c-1}} =\frac{\ln c }{\alpha}$, i.e., $h(\alpha) = \amlsbound(\alpha,c)$ for all $\alpha \in U$.
	In particular this means that $\amlsbound(\alpha,c)$ is differentiable at $\alpha_0$.
	Since $\alpha_0 > 1$ was arbitrary, this implies that $\amlsbound(\alpha,c)$ is differentiable on $(1,\infty)$ as a function of $\alpha$.
\end{proof}

\begin{lemma}
	\label{lem:mono_r}
	For every fixed $\alpha > 1$, the function $\amlsbound(\alpha, c)$ is strictly increasing for $c \in (1,\infty)$.
	Also, for every fixed $c > 1$, the function $\amlsbound(\alpha,c)$ is strictly decreasing for $\alpha \in (1,\infty)$.
\end{lemma}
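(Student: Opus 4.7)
\textbf{Proof plan for Lemma~\ref{lem:mono_r}.}
My plan is to obtain both monotonicity claims via implicit differentiation, using Lemma~\ref{lem:amls_diff} to guarantee that $\amlsbound(\alpha,c)$ is differentiable, and then reducing the sign calculations to the upper bound already established in Lemma~\ref{lem:amls_ub}. Define
\[
F(\alpha,c,\gamma) \;=\; \D{\tfrac{1}{\alpha}}{\tfrac{\gamma-1}{c-1}} \;-\; \frac{\ln c}{\alpha},
\]
so that $\amlsbound(\alpha,c)$ is the unique $\gamma\in(1,\,1+\tfrac{c-1}{\alpha})$ solving $F(\alpha,c,\gamma)=0$. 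Writing $a=\tfrac{1}{\alpha}$ and $b=\tfrac{\gamma-1}{c-1}$, the range condition on $\gamma$ gives $b<a$, so using $\frac{\partial}{\partial b}D(a\|b)=\frac{b-a}{b(1-b)}$ one finds
\[
\frac{\partial F}{\partial\gamma} \;=\; \frac{b-a}{b(1-b)(c-1)} \;<\; 0,
\]
which is the key sign that allows the Implicit Function Theorem to translate everything into signs of $\partial F/\partial c$ and $\partial F/\partial\alpha$.

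For the first claim (monotonicity in $c$), a short computation gives
\[
\frac{\partial F}{\partial c} \;=\; \frac{a-b}{(1-b)(c-1)} \;-\; \frac{a}{c},
\]
and rearranging $\tfrac{\partial F}{\partial c}>0$ reduces algebraically to the inequality $b<\tfrac{1}{1+(\alpha-1)c}$. This is exactly the content of Lemma~\ref{lem:amls_ub}, since it states $\amlsbound(\alpha,c)<\tfrac{\alpha c}{1+(\alpha-1)c}$, i.e.\ $b=\tfrac{\gamma-1}{c-1}<\tfrac{1}{1+(\alpha-1)c}$. Because $\partial F/\partial\gamma<0$, the implicit differentiation identity $\tfrac{\partial\gamma}{\partial c}=-\tfrac{\partial F/\partial c}{\partial F/\partial\gamma}$ gives $\tfrac{\partial\gamma}{\partial c}>0$ throughout $(1,\infty)$, yielding strict monotonicity in $c$.

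For the second claim (monotonicity in $\alpha$), I would use $\frac{\partial}{\partial a}D(a\|b)=\ln\frac{a(1-b)}{b(1-a)}$ and the chain rule $\tfrac{da}{d\alpha}=-\tfrac{1}{\alpha^2}$ to obtain
\[
\frac{\partial F}{\partial\alpha} \;=\; \frac{1}{\alpha^{2}}\,\ln\!\frac{c\cdot b(1-a)}{a(1-b)}.
\]
The sign condition $\partial F/\partial\alpha<0$ is equivalent to $c\,b(1-a)<a(1-b)$, which after substitution and clearing denominators becomes exactly the same inequality $b<\tfrac{1}{1+(\alpha-1)c}$. Thus Lemma~\ref{lem:amls_ub} supplies the sign again, and the implicit differentiation formula $\tfrac{\partial\gamma}{\partial\alpha}=-\tfrac{\partial F/\partial\alpha}{\partial F/\partial\gamma}$ yields $\tfrac{\partial\gamma}{\partial\alpha}<0$.

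The only real work is bookkeeping: correctly computing $\partial F/\partial c$ and $\partial F/\partial\alpha$ and recognizing that both sign conditions collapse to the bound from Lemma~\ref{lem:amls_ub}. The main obstacle I expect is purely algebraic — making the two reductions to $b<\tfrac{1}{1+(\alpha-1)c}$ cleanly and transparently, so that the reader sees why Lemma~\ref{lem:amls_ub} is exactly the right tool rather than a coincidentally related inequality.
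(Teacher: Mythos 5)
Your proposal is correct and follows essentially the same route as the paper: implicit differentiation of $\D{\frac{1}{\alpha}}{\frac{\gamma-1}{c-1}}-\frac{\ln c}{\alpha}=0$ (justified by Lemma~\ref{lem:amls_diff}), with both sign conditions reduced to the bound $\amlsbound(\alpha,c)<\frac{\alpha c}{1+(\alpha-1)c}$ from Lemma~\ref{lem:amls_ub}. The only difference is presentational — you compute the partial derivatives explicitly in terms of $a=\frac{1}{\alpha}$ and $b=\frac{\gamma-1}{c-1}$, whereas the paper quotes automatically generated derivative formulas — and your algebraic reductions check out.
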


\begin{proof}
 Let us fix some $\alpha > 1$.
 By Lemma~\ref{lem:amls_diff}, $\amlsbound(\alpha,c)$ is differentiable as a function of $c$ for $c > 1$.
 By using the chain rule for differentiation on $\D{\frac{1}{\alpha}}{\frac{\amlsbound(\alpha,c)-1}{c-1}} - \frac{\ln c }{\alpha} = 0$ we get (by generating the derivative automatically), for $c >1$,
 \begin{equation}\label{eq:diff_amls}
  \frac{\partial \amlsbound(\alpha, c)}{\partial c} =  -\frac{\left(\left(1+\left(\alpha -1\right) c \right) \cdot \amlsbound(\alpha,c)-\alpha  c \right) \left(\amlsbound(\alpha,c)-1\right)}{c \left(c -1\right) \left(-\alpha  \cdot \amlsbound(\alpha,c )+c +\alpha -1\right)}.
 \end{equation}
 Since $\amlsbound(\alpha,c) < 1 + \frac{c-1}{\alpha}$ using Lemma ~\ref{lem:amls_ub} and $c > 1$, the denominator of \eqref{eq:diff_amls} is positive. Therefore $\frac{\partial \amlsbound(\alpha, c)}{\partial c} > 0$ if and only if $( (1+\left(\alpha -1\right) c ) \cdot \amlsbound(\alpha,c)-\alpha  c) < 0$ which is same as
 \begin{equation*}
  \amlsbound(\alpha,c) < \frac{\alpha  c}{1+\left(\alpha -1\right) c}
 \end{equation*}
 which holds by Lemma~\ref{lem:amls_ub}.
 This proves the first part of the lemma.

 For the second part consider a fixed $c > 1$.
 Again, by Lemma~\ref{lem:amls_diff} we know that $\amlsbound(\alpha,c)$ is differentiable as a function of $\alpha$ for $\alpha \in (1,\infty)$.
 Similarly, by using the chain rule for differentiation on $\D{\frac{1}{\alpha}}{\frac{\amlsbound(\alpha,c)-1}{c-1}} - \frac{\ln c }{\alpha} = 0$ we get (by generating the derivative automatically), for $\alpha > 1$,
 \begin{equation*}
  \frac{\partial \amlsbound(\alpha,c)}{\partial \alpha} = \frac{\left(\amlsbound(\alpha,c)-1\right) \left(-c +\amlsbound(\alpha,c)\right) \ln \! \left(\frac{\left(\alpha -1\right) \left(\amlsbound(\alpha,c)-1\right) c}{c -\amlsbound(\alpha,c)}\right)}{\alpha  \left(\alpha \cdot \amlsbound(\alpha,c)-\alpha -c +1\right)}
 \end{equation*}
 Since $\amlsbound(\alpha,c) < 1  + \frac{c-1}{\alpha} < c$ we have that $\frac{\partial \amlsbound(\alpha,c)}{\partial \alpha} < 0$ if and only if
 \begin{equation*}
	\ln \! \left(\frac{\left(\alpha -1\right) \left(\amlsbound(\alpha,c)-1\right) c}{c -\amlsbound(\alpha,c)}\right) < 0 \iff \amlsbound(\alpha,c) < \frac{\alpha  c}{\alpha  c -c +1}
 \end{equation*}
 which again holds by Lemma~\ref{lem:amls_ub}.
\end{proof}

The following lemma shows that the running time of $\amlsalgo$ converges to that of random sampling when $c$ goes to infinity. 

\begin{lemma}\label{lem:amls_converge}
	For every $\alpha > 1$, $\displaystyle{\lim_{c\rightarrow \infty} \amlsbound(\alpha, c) = \brute(\alpha)}$.
\end{lemma}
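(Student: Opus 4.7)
The plan is to unwind the defining equation for $\amlsbound(\alpha, c)$ asymptotically as $c \to \infty$. Set $\delta^*(c) \coloneqq \frac{\amlsbound(\alpha,c) - 1}{c-1}$, so by definition $\D{\frac{1}{\alpha}}{\delta^*(c)} = \frac{\ln c}{\alpha}$ and $\delta^*(c) \in \left(0, \frac{1}{\alpha}\right)$. Since $\frac{\ln c}{\alpha} \to \infty$ while $\D{\frac{1}{\alpha}}{\delta}$ stays bounded whenever $\delta$ stays bounded away from $0$, one has $\delta^*(c) \to 0$ as $c \to \infty$. Hence $\amlsbound(\alpha,c) - 1 = (c-1) \delta^*(c)$, and the target limit will follow once we pin down the precise rate at which $\delta^*(c) \to 0$.

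The next step is to expand
\begin{equation*}
  \D{\tfrac{1}{\alpha}}{\delta^*(c)} = \frac{1}{\alpha}\ln\!\frac{1}{\alpha\,\delta^*(c)} + \left(1-\frac{1}{\alpha}\right)\ln\!\frac{1-\frac{1}{\alpha}}{1-\delta^*(c)}
\end{equation*}
and use $\delta^*(c) \to 0$, which gives $\ln(1-\delta^*(c)) = o(1)$. Substituting into the identity $\D{\frac{1}{\alpha}}{\delta^*(c)} = \frac{\ln c}{\alpha}$ and multiplying through by $\alpha$ yields
\begin{equation*}
  \ln\!\frac{1}{\alpha\,\delta^*(c)} + (\alpha-1)\ln\!\left(1-\tfrac{1}{\alpha}\right) = \ln c + o(1),
\end{equation*}
so that $\delta^*(c) = \frac{(1-1/\alpha)^{\alpha-1}}{\alpha\, c}\,(1+o(1))$. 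Multiplying by $c-1$ gives
\begin{equation*}
  \amlsbound(\alpha,c) - 1 \;=\; (c-1)\delta^*(c) \;\xrightarrow[c\to\infty]{}\; \frac{(1-1/\alpha)^{\alpha-1}}{\alpha} \;=\; \frac{(\alpha-1)^{\alpha-1}}{\alpha^{\alpha}}.
\end{equation*}

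The last step is the routine identity $\exp\!\left(-\alpha \cdot \entropy\!\left(\tfrac{1}{\alpha}\right)\right) = \frac{(\alpha-1)^{\alpha-1}}{\alpha^{\alpha}}$, obtained by expanding $\alpha \entropy(1/\alpha) = \ln \alpha - (\alpha-1)\ln\frac{\alpha-1}{\alpha}$. Combined with the displayed limit this shows $\lim_{c\to\infty} \amlsbound(\alpha,c) = 1 + \exp(-\alpha \entropy(1/\alpha)) = \brute(\alpha)$. The only nontrivial part is controlling the error in the KL-divergence expansion; that is handled cleanly by the observation that the second summand of $\D{\frac{1}{\alpha}}{\delta^*(c)}$ converges to the constant $(1-\frac{1}{\alpha})\ln(1-\frac{1}{\alpha})$, so everything reduces to a one-line asymptotic inversion of $\frac{1}{\alpha}\ln\frac{1}{\alpha \delta^*(c)} \sim \frac{\ln c}{\alpha}$.
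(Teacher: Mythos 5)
Your proof is correct, and it takes a slightly but genuinely different route from the paper's. The paper first invokes Lemma~\ref{lem:amls_vs_emls} (boundedness, $\amlsbound(\alpha,c)\leq 2$) and Lemma~\ref{lem:mono_r} (monotonicity in $c$) to guarantee \emph{a priori} that the limit $L$ exists, and only then expands $\D{\frac{1}{\alpha}}{\frac{\amlsbound(\alpha,c)-1}{c-1}}-\frac{\ln c}{\alpha}=0$ term by term to read off $\ln(L-1)=-\alpha\,\entropy(1/\alpha)$. You instead work directly with $\delta^*(c)=\frac{\amlsbound(\alpha,c)-1}{c-1}$: the observation that $\D{\frac{1}{\alpha}}{\cdot}$ is bounded on any set bounded away from $0$ forces $\delta^*(c)\to 0$, and the asymptotic inversion then yields the explicit rate $\delta^*(c)=\frac{(1-1/\alpha)^{\alpha-1}}{\alpha c}(1+o(1))$, from which convergence of $(c-1)\delta^*(c)$ follows with no need to know in advance that a limit exists. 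The underlying computation (expanding the KL divergence and noting the $\ln(1-\delta^*)$ term is $o(1)$) is the same in both arguments, but your version is self-contained — it does not depend on the monotonicity lemma, whose proof in the paper goes through the Implicit Function Theorem — and it gives slightly more information (the precise first-order asymptotics of $\delta^*(c)$). The paper's version is marginally shorter because the structural lemmas are already available. All the individual steps you state, including the final identity $\exp\left(-\alpha\,\entropy\left(\tfrac{1}{\alpha}\right)\right)=\frac{(\alpha-1)^{\alpha-1}}{\alpha^{\alpha}}$, check out.
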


\begin{proof}
	Let us fix some $\alpha > 1$.  
	Since $\amlsbound(\alpha,c) \leq 2$ (Lemma~\ref{lem:amls_vs_emls}) and $\amlsbound(\alpha,c)$ monotonically increasing (Lemma~\ref{lem:mono_r}), there is some $L\in [1,2]$ such that $\lim_{c\rightarrow \infty} \amlsbound(\alpha, c)=L$.
	Using $\D{\frac{1}{\alpha}}{ \frac{\amlsbound(\alpha,c)-1}{c-1}} = \frac{\ln{c}}{\alpha}$  we have,
	$$ 
	\begin{aligned}
		0 &=  \lim_{c\rightarrow \infty }\left( \D{\frac{1}{\alpha}}{ \frac{\amlsbound(\alpha,c)-1}{c-1}} - \frac{\ln{c}}{\alpha} \right)\\
		&= \lim_{c\rightarrow \infty} 
		\left( 
		-\entropy\left(\frac{1}{\alpha} \right)  -\frac{1}{\alpha}\ln \left( \frac{\amlsbound(\alpha,c)-1}{c-1} \right) - \left( 1-\frac{1}{\alpha} \right) \ln \left(\frac{c-\amlsbound(\alpha,c)}{c-1} \right) -\frac{1}{\alpha} \ln c 
		\right) \\
		&=-\entropy\left(\frac{1}{\alpha} \right) +
		\lim_{c\rightarrow \infty} 
		\left( 
		-\frac{1}{\alpha}\ln \left( \amlsbound(\alpha, c)-1 \right) -
		\left( 1-\frac{1}{\alpha} \right) \ln \left(1+\frac{1-\amlsbound(\alpha,c)}{c-1}  \right)-\frac{1}{\alpha} \ln \left( 1+\frac{1}{c-1} \right)
		\right) \\
		&= -\entropy\left(\frac{1}{\alpha} \right) -
		\frac{1}{\alpha}\ln \left(L -1 \right).
	\end{aligned}
	$$
	Hence,  $\ln(L-1)=-\alpha\cdot  \entropy\left( \frac{1}{\alpha}\right)$ and 
	$$L=1+ \exp\left(-\alpha\cdot  \entropy\left( \frac{1}{\alpha}\right)\right) = \brute(\alpha).$$
	That is, $\lim_{c\rightarrow \infty} \amlsbound(\alpha, c) = L = \brute(\alpha)$.
\end{proof}

\begin{lemma}\label{lem:amls_vs_brute}
	For every $\alpha \geq 1$ and $c > 1$, $\amlsbound(\alpha,c) < \brute(\alpha)$.
\end{lemma}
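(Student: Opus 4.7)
The plan is to combine the monotonicity of $\amlsbound(\alpha,c)$ in $c$ (Lemma~\ref{lem:mono_r}) with the limit computation of Lemma~\ref{lem:amls_converge}, handling the boundary case $\alpha=1$ separately.

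First I would dispose of the case $\alpha=1$. Here $\HH(1) = -1\cdot \ln 1 - 0\cdot \ln 0 = 0$, so $\brute(1) = 1 + \exp(0) = 2$. On the other hand, $\amlsbound(1,c) = \emlsbound(c) = 2-\tfrac{1}{c} < 2 = \brute(1)$ for every $c>1$, giving the claim in this boundary case.

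Now fix $\alpha > 1$ and any $c_0 > 1$. By Lemma~\ref{lem:mono_r}, the map $c \mapsto \amlsbound(\alpha,c)$ is strictly increasing on $(1,\infty)$. By Lemma~\ref{lem:amls_converge}, $\lim_{c\to\infty}\amlsbound(\alpha,c) = \brute(\alpha)$. A strictly increasing real-valued function that has a finite limit at $+\infty$ must stay strictly below that limit at every finite point: indeed, choose any $c_1 > c_0$; then by monotonicity $\amlsbound(\alpha,c_0) < \amlsbound(\alpha,c_1) \leq \lim_{c\to\infty}\amlsbound(\alpha,c) = \brute(\alpha)$. Hence $\amlsbound(\alpha,c_0) < \brute(\alpha)$, as required.

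There is essentially no obstacle beyond what has already been established: the two ingredients (strict monotonicity in $c$ and the limit equal to $\brute(\alpha)$) immediately yield the strict inequality. The only subtlety worth double-checking is the $\alpha=1$ edge case, where Lemma~\ref{lem:amls_converge} is stated only for $\alpha>1$, but this is handled directly via the closed-form expression $\amlsbound(1,c)=2-1/c$.
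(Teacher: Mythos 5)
Your proof is correct and follows essentially the same route as the paper: dispose of $\alpha=1$ via the closed form $\amlsbound(1,c)=2-\frac{1}{c}<2=\brute(1)$, then for $\alpha>1$ combine the strict monotonicity in $c$ (Lemma~\ref{lem:mono_r}) with the limit $\lim_{c\to\infty}\amlsbound(\alpha,c)=\brute(\alpha)$ (Lemma~\ref{lem:amls_converge}) to get $\amlsbound(\alpha,c)<\amlsbound(\alpha,c_1)\leq\brute(\alpha)$ for any $c_1>c$. The paper's proof is the same argument with $c_1=c+1$.
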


\begin{proof}
	For $\alpha = 1$ we have $\amlsbound(1,c) = (2-\frac{1}{c}) < 2 = \brute(1)$, for all $c > 1$.
	Lemma~\ref{lem:amls_converge} and Lemma~\ref{lem:mono_r} together imply that $\amlsbound(\alpha,c) \leq \brute(\alpha)$ for all  $\alpha,c > 1$.
	Moreover, it holds that $\amlsbound(\alpha,c)<\amlsbound(\alpha,c+1)\leq \brute(\alpha)$.
\end{proof}

The lemma above means that $\amls$ always provides a running time which is strictly better than that of random sampling.
Now we will prove the same result for the naive conversion.

\begin{lemma}\label{lem:amls_vs_naive}
 For every $\alpha \geq 1$ and $c > 1$, $\amlsbound(\alpha,c) < \naive(\alpha,c)$.
\end{lemma}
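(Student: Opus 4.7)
The plan is to handle $\alpha = 1$ by direct computation and then reduce the general case to a single Kullback–Leibler bound that I can prove by elementary one-variable estimates.

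For $\alpha = 1$ one has $\amlsbound(1,c) = \emlsbound(c) = 2 - 1/c$ and $\naive(1,c) = c$, so
\[
 \naive(1,c) - \amlsbound(1,c) \;=\; c - 2 + \tfrac{1}{c} \;=\; \tfrac{(c-1)^2}{c} \;>\; 0
\]
for $c>1$, which settles this case immediately.

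For $\alpha > 1$, I first verify that $\naive(\alpha,c) = c^{1/\alpha}$ lies in the admissible interval $\bigl(1, 1 + \tfrac{c-1}{\alpha}\bigr)$. The lower bound is trivial; the upper one is the strict Bernoulli inequality $c^{1/\alpha} = (1 + (c-1))^{1/\alpha} < 1 + \tfrac{c-1}{\alpha}$, which holds for $c,\alpha > 1$. Since $f(\delta) = \D{\tfrac{1}{\alpha}}{\delta}$ is strictly decreasing on $\bigl(0,\tfrac{1}{\alpha}\bigr)$ (as noted right after Theorem~\ref{thm:main_randomized}) and $\amlsbound(\alpha,c)$ is defined to satisfy $\D{\tfrac{1}{\alpha}}{(\amlsbound(\alpha,c)-1)/(c-1)} = \tfrac{\ln c}{\alpha}$, the target inequality $\amlsbound(\alpha,c) < c^{1/\alpha}$ is equivalent to
\[
 \D{\tfrac{1}{\alpha}}{\phi} \;<\; \tfrac{\ln c}{\alpha},
 \qquad \phi \;:=\; \tfrac{c^{1/\alpha}-1}{c-1} \;\in\; \bigl(0,\tfrac{1}{\alpha}\bigr).
\]

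The key step is to prove this KL bound. Expanding,
\[
 \alpha \cdot \D{\tfrac{1}{\alpha}}{\phi}
 \;=\; \ln\tfrac{1}{\alpha\phi} \;+\; (\alpha-1)\ln\tfrac{(\alpha-1)/\alpha}{1-\phi}.
\]
Because $\phi < 1/\alpha$ we have $1 - \phi > (\alpha-1)/\alpha$, so the second summand is strictly negative, giving $\alpha \cdot \D{\tfrac{1}{\alpha}}{\phi} < \ln\tfrac{1}{\alpha\phi}$. Hence it suffices to show $\alpha\phi \geq 1/c$, i.e., $\alpha c\,(c^{1/\alpha} - 1) \geq c-1$.

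I would close with two standard one-variable inequalities: $e^x \geq 1+x$ applied at $x = \tfrac{\ln c}{\alpha}$ gives $c^{1/\alpha} - 1 \geq \tfrac{\ln c}{\alpha}$, hence $\alpha c\,(c^{1/\alpha}-1) \geq c\ln c$; and $c\ln c > c-1$ for $c > 1$, which is immediate from the fact that $g(c) := c\ln c - c + 1$ satisfies $g(1) = 0$ and $g'(c) = \ln c > 0$ for $c > 1$. I do not anticipate a serious obstacle: the only thing to watch is that strict inequality is preserved end-to-end, and this is automatic because the step where one discards the second (strictly negative) term of $\D{\tfrac{1}{\alpha}}{\phi}$ already contributes a strict gap whenever $\alpha > 1$.
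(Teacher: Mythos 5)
Your proof is correct, but it takes a genuinely different route from the paper's. The paper first establishes the closed-form bound $\amlsbound(\alpha,c) < \frac{\alpha c}{1+(\alpha-1)c}$ (its Lemma~\ref{lem:amls_ub}) and then tries to compare this rational function with $c^{1/\alpha}$ by elementary estimates; you instead compare $\amlsbound(\alpha,c)$ with $c^{1/\alpha}$ directly, by evaluating $\D{\frac{1}{\alpha}}{\phi}$ at $\phi=\frac{c^{1/\alpha}-1}{c-1}$ and exploiting monotonicity of $\delta\mapsto\D{\frac{1}{\alpha}}{\delta}$ -- the same mechanism the paper uses for its other comparisons, but aimed at $\naive$ rather than at the intermediate bound. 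Your key move of discarding the strictly negative term $(\alpha-1)\ln\frac{(\alpha-1)/\alpha}{1-\phi}$ reduces everything to $\alpha c\,(c^{1/\alpha}-1)\ge c-1$, which follows cleanly from $e^x\ge 1+x$ and $c\ln c> c-1$; all steps, including the membership $\phi\in(0,\frac{1}{\alpha})$ via strict Bernoulli and the preservation of strictness, check out. It is worth noting that your route also sidesteps a flaw in the paper's final display: the inequality $\frac{c-1}{\alpha c}-\frac{c-1}{1+(\alpha-1)c}\ge 0$ asserted there is actually reversed for $c>1$ (since $\alpha c> 1+(\alpha-1)c$), so the paper's derivation of $\frac{\alpha c}{1+(\alpha-1)c}\le c^{1/\alpha}$ does not go through as written, whereas your direct KL-divergence argument gives a complete and self-contained proof. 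The trade-off is that the paper's (intended) approach reuses Lemma~\ref{lem:amls_ub}, which it needs anyway for the monotonicity results, while your argument is longer but standalone and sharper in its handling of strictness.
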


\begin{proof}
	For $\alpha = 1$ we have that $\amlsbound(1,c) < \naive(1,c)$ because $\amlsbound(1,c)  - \naive(1,c) = 2 - c - \frac{1}{c} < 0$ where the last inequality follows from the fact that $x + \frac{1}{x} > 2$ for $x > 1$.

	The proof follows from the fact that $\amlsbound(\alpha, c) < \frac{\alpha c}{1 + (\alpha-1)c} = 1 + \frac{c-1}{1 + (\alpha - 1)c} \leq  c^{\frac{1}{\alpha}}$.
	The first inequality follows from Lemma~\ref{lem:amls_ub}.
	To prove the second inequality let us consider $c^{\frac{1}{\alpha}} - \frac{\alpha c}{1 + (\alpha-1)c}$.
	Since $1 + x \leq e^x$ for all $x \in \mathbb{R}$,  $c^{\frac{1}{\alpha}} = e^{\ln(c) \cdot \frac{1}{\alpha}} \geq 1 + \frac{\ln(c)}{\alpha} \geq 1 +  \frac{c-1}{\alpha c}$ where the last inequality holds because $\ln(x) \geq \frac{x-1}{x}$ for $x  \geq 1$. Since $c >1$, this implies that
	\begin{equation*}
		c^{\frac{1}{\alpha}} - \frac{\alpha c}{1 + (\alpha-1)c} \geq \frac{c-1}{\alpha c} - \frac{c-1}{1 + (\alpha- 1)c} \geq 0.
	\end{equation*}
\end{proof}

\section{Proof of Lemma~\ref{lem:perturb}}
\label{app:analysis}
We first prove the following technical lemma.

\begin{lemma}\label{lem:xlnx}
	Let $x \geq 0, \varepsilon \in [-1,1]$ such that $x + \varepsilon \geq 0$. Then
	\begin{align*}
		\abs{x\ln(x) - (x+\varepsilon)\ln(x+\varepsilon)} \leq 7 + \ln(x+1)
	\end{align*}
\end{lemma}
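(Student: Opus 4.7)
The plan is to prove the inequality by case analysis on the size of $x$, with the threshold chosen so that in the ``large'' regime the mean value theorem may be applied safely away from the singularity of $f(y) \coloneqq y \ln y$ at $y = 0$. Setting the threshold at $x = 2$ works cleanly because then, in the large regime, $x + \varepsilon \geq 1$ for every admissible $\varepsilon \in [-1,1]$.

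In the \emph{large regime} $x \geq 2$, both endpoints $x$ and $x+\varepsilon$ lie in the segment $[1, x+1]$, on which $f$ is continuously differentiable with $f'(y) = \ln y + 1$ non-negative and bounded above by $\ln(x+1) + 1$. A single application of the mean value theorem then gives
$$|f(x) - f(x+\varepsilon)| \leq |\varepsilon| \cdot (\ln(x+1) + 1) \leq \ln(x+1) + 1,$$
which sits comfortably below $7 + \ln(x+1)$.

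In the \emph{small regime} $0 \leq x < 2$, both $x$ and $x + \varepsilon$ lie in the compact interval $[0, 3]$. Here I would avoid the mean value theorem, since $f'$ blows up near the origin, and instead use a direct boundedness estimate: the continuous function $|y \ln y|$ (extended by $0 \ln 0 = 0$) attains its maximum on $[0,3]$ either at the interior critical point $y = 1/e$ (where the value is $1/e$) or at the boundary $y = 3$, giving the crude bound $|y \ln y| \leq 3 \ln 3 < 3.5$. The triangle inequality then yields $|f(x) - f(x + \varepsilon)| \leq 2 \cdot 3 \ln 3 < 7 \leq 7 + \ln(x+1)$.

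The main obstacle---indeed the only nontrivial point---is the non-Lipschitz behavior of $f$ near $y = 0$, which rules out a single uniform application of the mean value theorem over the whole range of $x$. The two-case split sidesteps this, and once the split is made both cases reduce to routine one-line estimates. The generous constant $7$ on the right-hand side absorbs the crude nature of the small-regime bound, so no sharper analysis is needed.
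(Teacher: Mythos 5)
Your proof is correct and follows essentially the same strategy as the paper's: a case split at a threshold near $2$, a crude uniform bound on $|y\ln y|$ over a compact interval in the small case, and a derivative bound ($|f'(y)| \leq 1 + \ln(x+1)$ on an interval bounded away from $0$) in the large case. The only cosmetic differences are that the paper splits on $x+\varepsilon \leq 2$ rather than $x < 2$ and uses convexity tangent-line estimates where you use the mean value theorem; both yield the same bound.
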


\begin{proof}
	We consider two cases depending on whether $x + \varepsilon$ is less than 2 or not.
	\begin{mycases}
		\item If $x + \varepsilon \leq 2$ then $x \leq 3$ and we have
			\begin{align*}
				\abs{x\ln(x) - (x+\varepsilon)\ln(x+\varepsilon)}  = \abs{x\ln(x)} + \abs{(x+\varepsilon)\ln(x + \varepsilon)} \leq 2\max_{y \in [0,3]} \abs{y\ln(y)} \leq 7
			\end{align*}
		\item In this case $x + \varepsilon > 2$ which implies $x > 1$.
		 Consider the function $f(y) = y\ln(y)$ which is convex.
		 Hence, we can use the first-order estimate to bound $f$, i.e., $f(y') \geq f(y) + f'(y) \cdot (y' - y)$ for all $y,y' > 0$.
		 Note that $f'(y) = 1 + \ln (y)$.
		 We use this fact to bound $\abs{x\ln(x) - (x+\varepsilon)\ln(x+\varepsilon)}$ as follows.
			 \begin{itemize}
				\item By setting $y \coloneqq x$ and $y' \coloneqq x+\varepsilon$ we get
				\begin{align*}
					x\ln(x) - (x + \varepsilon)\ln(x + \varepsilon) &\leq x\ln(x) - \Big( x\ln(x) + \varepsilon(1 + \ln(x)) \Big) \\
					                                                &= -\varepsilon (1 + \ln(x)) \leq (1 + \ln(x+1)).
				\end{align*}
				\item On the other hand, by setting $y \coloneqq x+\varepsilon$ and $y' \coloneqq x$ we get
					 \begin{align*}
						 x\ln(x) - (x + \varepsilon)\ln(x + \varepsilon) &\geq (x + \varepsilon)\ln(x + \varepsilon) - \varepsilon(1 + \ln(x + \varepsilon)) - (x + \varepsilon)\ln(x + \varepsilon) \\
                                                             &= - \varepsilon(1 + \ln(x + \varepsilon)) \geq -(1 + \ln(x+\varepsilon)) \geq -(1 + \ln(x+1)).
					\end{align*}
				
			\end{itemize}
	\end{mycases}
	By combining all the bounds above we obtain the lemma.
\end{proof}

\begin{proof}[Proof of Lemma~\ref{lem:perturb}]
 \begin{align*}
        &\abs*{a\cdot \entropy\left( \frac{b}{a} \right) - (a + \varepsilon) \cdot \entropy\left( \frac{b + \delta}{a + \varepsilon} \right) }\\
  =~    &\Big\lvert-b\ln(b) - (a-b)\ln(a-b) + a\ln(a) + (b+\delta)\ln(b + \delta)\\
        &+ (a + \varepsilon - b - \delta)\ln(a + \varepsilon - b - \delta) - (a + \varepsilon)\ln(a + \varepsilon)\Big\rvert\\
  \leq~ &\abs*{b\ln(b) - (b + \delta)\ln(b + \delta)} + \abs{(a-b)} + \abs{a\ln(a) - (a+\varepsilon)\ln(a + \varepsilon)}\\
        &+ \abs*{(a-b)\ln(a-b) - \left( a - b + (\varepsilon - \delta) \right)\ln\left( a - b + (\varepsilon - \delta) \right) }\\
  =~    &\OO(\log(n))
 \end{align*}
 where last step uses Lemma~\ref{lem:xlnx}.
\end{proof}

\section{Derandomization: Proof of Theorem~\ref{thm:family}}
In the section, we give a proof of Theorem \ref{thm:family}.
Towards this end, we first extend the definition of set-intersection-families to differentiate between a weak and a strong version.
Then, the basic strategy to prove Theorem~\ref{thm:family} is to first provide a simple, but slow algorithm that computes a strong set-intersection-family with the desired parameters.
Afterwards, we prove Theorem~\ref{thm:family} by executing the slow algorithm on subsets of the domain and combining the results in a suitable way.

\begin{definition}
 Let $U$ be a universe of size $n$ and let $p,q,r \geq 1$ such that $n \geq p \geq r$ and $n - p + r \geq q \geq r$.
 A family $\CC \subseteq \binom{U}{q}$ is a \emph{strong $(n,p,q,r)$-set-intersection-family} if for every $T \in \binom{U}{p}$ there is some $X \in \CC$ such that $|T \cap X| = r$.
 Also, a family $\CC \subseteq \binom{U}{q}$ is a \emph{(weak) $(n,p,q,r)$-set-intersection-family} if for every $T \in \binom{U}{p}$ there is some $X \in \CC$ such that $|T \cap X| \geq r$.
\end{definition}

Recall that we defined
\[\kappa(n,p,q,r) \coloneqq \frac{\binom{n}{q}}{\binom{p}{r} \cdot \binom{n - p}{q - r}}.\]

As indicated above, we start by proving the existence of a strong $(n,p,q,r)$-set-intersection-family of size $\kappa(n,p,q,r) \cdot n^{\OO(1)}$ and give a simple, but slow algorithm to compute such a family.

\begin{lemma}
 \label{la:set-intersection-family-existence}
 Let $U$ be a universe of size $n$ and let $p,q,r \geq 1$ such that $n \geq p \geq r$ and $n - p + r \geq q \geq r$.
 Then there is a strong $(n,p,q,r)$-set-intersection-family $\CC \subseteq \binom{U}{q}$ such that $|\CC| \leq \kappa(n,p,q,r) \cdot n^{\OO(1)}$.
 Moreover, such a family can be computed in time $\OO(8^n)$.
\end{lemma}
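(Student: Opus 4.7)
The plan is to construct $\CC$ via a standard greedy covering argument in the spirit of classical set-cover analyses. The key probabilistic estimate is that for a uniformly random $X \in \binom{U}{q}$ and any fixed $T \in \binom{U}{p}$, one has $\Pr[|T \cap X| = r] = \binom{p}{r}\binom{n-p}{q-r}/\binom{n}{q} = 1/\kappa(n,p,q,r)$, which is exactly the inverse of the target size (up to polynomial factors).

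First I would set $\mathcal{T}_0 = \binom{U}{p}$ and iteratively pick, at step $i$, a set $X_i \in \binom{U}{q}$ that maximises $|\{T \in \mathcal{T}_i : |T \cap X_i| = r\}|$, then let $\mathcal{T}_{i+1}$ be the remaining uncovered sets. Double-counting pairs $(X,T)$ with $|T \cap X| = r$ shows that $\sum_{X \in \binom{U}{q}} |\{T \in \mathcal{T}_i : |T\cap X| = r\}| = |\mathcal{T}_i| \cdot \binom{p}{r}\binom{n-p}{q-r}$, so the average coverage of a random $X$ is $|\mathcal{T}_i|/\kappa(n,p,q,r)$ and hence the maximiser $X_i$ covers at least that many uncovered sets. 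Consequently $|\mathcal{T}_{i+1}| \leq (1 - 1/\kappa(n,p,q,r)) \cdot |\mathcal{T}_i|$, and after $\lceil \kappa(n,p,q,r) \cdot \ln|\mathcal{T}_0| \rceil = \kappa(n,p,q,r) \cdot n^{\OO(1)}$ iterations the uncovered collection is empty. The resulting family $\CC = \{X_1, X_2, \ldots\}$ is then a strong $(n,p,q,r)$-set-intersection-family of the claimed size.

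For the running time, each greedy iteration enumerates the at most $\binom{n}{q} \leq 2^n$ candidate sets $X$ and, for each, counts how many of the $|\mathcal{T}_i| \leq \binom{n}{p} \leq 2^n$ uncovered $T$'s satisfy $|T \cap X| = r$, taking $4^n \cdot n^{\OO(1)}$ time per iteration. Since $\kappa(n,p,q,r) \leq \binom{n}{q} \leq 2^n$, the number of iterations is bounded by $2^n \cdot n^{\OO(1)}$, yielding a total running time of $8^n \cdot n^{\OO(1)} = \OO(8^n)$ after absorbing polynomial factors into the exponential base.

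No conceptual obstacle is anticipated; the argument is essentially the classical greedy set-cover analysis applied to the bipartite cover between $\binom{U}{q}$ (sides) and $\binom{U}{p}$ (elements) under the incidence relation $|T \cap X| = r$. The only care needed is to verify that the averaging step gives the required $1/\kappa(n,p,q,r)$ decrease uniformly in $T$ (which follows from the symmetry of the counting identity, depending only on $p,q,r,n$ and not on $T$), and to ensure the $\OO(\log)$ greedy overhead is absorbed into $n^{\OO(1)}$. The substantive work will come in Theorem~\ref{thm:family}, where this slow $\OO(8^n)$ construction will be applied to subsets of $U$ of sublinear size and stitched together to bring the base from $8$ down to $2^{o(n)}$ relative to $\kappa(n,p,q,r)$.
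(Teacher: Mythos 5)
Your proposal is correct and follows essentially the same route as the paper: both reduce to greedy covering of $\binom{U}{p}$ by the sets $\{T : |T\cap X|=r\}$ for $X\in\binom{U}{q}$, with the key identity $\binom{p}{r}\binom{n-p}{q-r}/\binom{n}{q}=1/\kappa(n,p,q,r)$ driving the size bound; the paper first proves existence probabilistically and then invokes the greedy $(1+\ln|\mathcal{V}|)$-approximation for \textsc{Set Cover} as a black box, whereas you fold these into one direct averaging argument exploiting the regularity of the incidence structure, which is a purely cosmetic streamlining. (The only nitpick is that $8^n\cdot n^{\OO(1)}$ is not literally $\OO(8^n)$, but this is immaterial since the lemma is only applied to universes of size $\OO(n/\log n)$.)
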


\begin{proof}
 Let $s \coloneqq \kappa(n,p,q,r) \cdot (p + 1) \cdot \ln(n)$.
 Pick $\CC = \{X_1,\dots,X_s\}$ where each set $X_i$ is chosen uniformly and independently at random from the set $\binom{U}{q}$.
 Let $T \in \binom{U}{p}$ be a target set and fix some $i \in [s]$.
 Then
 \[\Pr(|T \cap X_i| = r) = \frac{1}{\kappa(n,p,q,r)}.\]
 Since the sets $X_i$ are chosen uniformly and independently at random it follows that
 \[\Pr(\forall i \in [s]\colon |T \cap X_i| \neq r) = \left(1 - \frac{1}{\kappa(n,p,q,r)}\right)^s \leq e^{-(p + 1) \cdot \ln(n)} = \frac{1}{n^{p+1}}.\]
 Using the union bound, we conclude that
 \[
 \begin{aligned} \Pr&(\CC \text{ is not a strong $(n,p,q,r)$-set-intersection-family}) \\
 	&\leq \sum_{T\in \binom{U}{p} } \Pr\left( \forall i\in [s]:~ |T \cap X_i| \neq r\right) \\
 	&\leq \binom{n}{p} \cdot \frac{1}{n^{p+1}}
 	\\& \leq \frac{1}{n}.
 	\end{aligned}\]
 In particular, the probability that $\CC$ is a strong $(n,p,q,r)$-set-intersection-family is strictly positive and hence, there exists a strong $(n,p,q,r)$-set-intersection-family $\CC$ of size
 \[|\CC| \leq \kappa(n,p,q,r) \cdot (p + 1) \cdot \ln(n).\]
 
 To compute a strong $(n,p,q,r)$-set-intersection-family, we make use of a simple approximation algorithm for the \textsc{Set Cover} problem.
 In the \textsc{Set Cover} problem, the input consists of a universe $\mathcal{V}$ and a collection of subsets $\mathcal{S} \subseteq 2^{\mathcal{V}}$. 
 We say $C\subseteq \mathcal{S}$ is a Set Cover if $\bigcup_{S\in C} S = \mathcal{V}$, and the objective is to find a Set Cover of minimal size.
 It is well known that a simple greedy algorithm attains a $(1+\ln |\mathcal{V}|)$-approximation for \textsc{Set Cover} in time $\OO\left(|\mathcal{V} |\cdot \sum_{S\in \mathcal{S}} |S| \right)$ (see, e.g., \cite[Chapter 2]{Vazirani01}). 
 Let $\mathcal{V} \coloneqq \binom{U}{p}$ and
 \[\mathcal{S} \coloneqq \Bigg\{\Big\{T \in \binom{U}{p} ~\Big|~ |X \cap T| = r\Big\} ~\Bigg|~ X \in \binom{U}{q}\Bigg\}.\]
 The \textsc{Set Cover} instance $(\mathcal{V},\mathcal{S})$ has a solution of size $\kappa(n,p,q,r) \cdot (p + 1) \cdot \ln(n)$ by the argument above.
 Since $|\mathcal{V}|\leq 2^n$ and $\sum_{S\in \mathcal{S}} |S| \leq 2^n \cdot 2^n$,
 we can compute a \textsc{Set Cover} for $(\mathcal{V},\mathcal{S})$ of size
 \[t = \OO(\kappa(n,p,q,r) \cdot (p + 1) \cdot n \cdot \ln(n))\]
 in time $\OO(|\mathcal{V}|\cdot \sum_{S \in \mathcal S}|S|) = \OO(8^n)$ using the greedy algorithm.
\end{proof}

To speed up the computation of the set-intersection-family, the basic idea is apply Lemma \ref{la:set-intersection-family-existence} to domains of smaller size.
This can be achieved via families of pairwise independent functions.

Let $U$ be a universe of size $n$ and let $b$ be a positive integer.
A family $\CX$ of functions $f\colon U \rightarrow [b]$ is \emph{pairwise independent} if for every $i,j \in [b]$ and every $u \neq v \in U$ it holds that
\[\Pr_{f \in \CX}(f(u) = i \wedge f(v) = j) = \frac{1}{b^2},\]
where $f$ is chosen uniformly at random. 

\begin{theorem}[\cite{AlonBI86}]
 \label{thm:pairwise-independent-functions}
 There is a polynomial-time algorithm that, given $U$ and $b$, computes a pairwise independent family $\CX$ of functions $f\colon U \rightarrow [b]$ such that $|\CX| = \OO(|U|^2)$.
\end{theorem}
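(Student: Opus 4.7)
The plan is to give the classical finite-field affine hash construction. First I would use Bertrand's postulate to locate, in polynomial time by trial division, a prime $p$ with $\max(|U|,b)\leq p\leq 2\max(|U|,b)$, and identify $U$ with a subset of $\mathbb{F}_p=\{0,1,\ldots,p-1\}$. The family $\CX$ would then consist of the $p^2$ affine maps indexed by $(a,c)\in\mathbb{F}_p\times\mathbb{F}_p$, where $f_{a,c}(x)=((a\cdot x + c)\bmod p)\bmod b$. Computing the whole family and evaluating any single $f_{a,c}$ is straightforward arithmetic in $\mathbb{F}_p$, so it runs in time polynomial in $\log p=\OO(\log(|U|+b))$, yielding the claimed polynomial-time bound. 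Since we may assume the intended use has $b=\OO(|U|)$ (the derandomization of Theorem~\ref{thm:main_randomized} only invokes hash families whose $b$ is bounded by $|U|$), the size is $|\CX|=p^2=\OO(|U|^2)$.

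The core of the proof is the pairwise-independence verification, carried out first over $\mathbb{F}_p$ (before reducing modulo $b$). For distinct $u,v\in U$ and targets $i,j\in\mathbb{F}_p$, the system
\begin{equation*}
au+c\equiv i,\qquad av+c\equiv j\pmod{p}
\end{equation*}
has a unique solution $(a,c)\in\mathbb{F}_p^2$, because its coefficient matrix has determinant $u-v\not\equiv 0\pmod p$ (using $p>|U|\geq|u-v|$). Consequently, choosing $(a,c)$ uniformly from $\mathbb{F}_p^2$ makes the pair $(a\cdot u+c\bmod p,\,a\cdot v+c\bmod p)$ uniformly distributed over $\mathbb{F}_p^2$, which is the pairwise-independence statement for the intermediate $\mathbb{F}_p$-valued family.

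The step I expect to be the main obstacle is the final reduction modulo $b$: even with $(Y_u,Y_v)$ uniform on $\mathbb{F}_p^2$, the pair $(Y_u\bmod b,\,Y_v\bmod b)$ is only $\OO(1/p)$-close in total variation to the uniform distribution on $[b]^2$, and exact equality would require $b\mid p$, which is impossible for prime $p\geq b>1$. To get \emph{exact} pairwise independence into $[b]$ as stated, I would instead take $b'\geq b$ to be a prime power with $b'\leq 2b$ (using Bertrand's postulate for prime powers, or just primes), work over the field $\mathbb{F}_{b'}$, and interpret $[b]$ in the theorem as any canonical size-$b$ subset of $\mathbb{F}_{b'}$; since the downstream application in Theorem~\ref{thm:family} only uses the hash values to partition $U$, enlarging the codomain from $[b]$ to $[b']$ is harmless and preserves $|\CX|=\OO(|U|^2)$. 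Alternatively, one could state the conclusion with the weakened bias $\OO(1/p)$ and show that this is absorbed into the $2^{o(n)}$ slack already present in Theorem~\ref{thm:family}.
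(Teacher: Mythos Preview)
The paper does not prove this theorem; it simply cites \cite{AlonBI86}. So there is no ``paper's own proof'' to compare against, and your task is really to supply a self-contained argument.

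Your core construction (affine hashing over a prime field $\mathbb{F}_p$ with $p \approx |U|$) is the standard one, and your identification of the exactness obstacle after the final reduction modulo $b$ is correct. However, your first proposed fix---replacing $\mathbb{F}_p$ by $\mathbb{F}_{b'}$ with $b' \in [b,2b]$ a prime power and ``working over $\mathbb{F}_{b'}$''---does not work. The affine maps $x \mapsto ax+c$ over $\mathbb{F}_{b'}$ are only defined for $x \in \mathbb{F}_{b'}$, so you would need $|U| \le b'$. In the intended application $|U| = n$ and $b = \lceil \log n \rceil$, hence $b' \le 2\log n \ll n$, and $U$ cannot be embedded in $\mathbb{F}_{b'}$. (If instead you took $b' \ge |U|$, the resulting partition of $U$ into $b'$ classes would have classes of expected size $\OO(1)$, which destroys the whole point of the construction in Theorem~\ref{thm:family}.)

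A clean repair is to round $|U|$ up to $2^m$ and $b$ up to $2^\ell$ (with $\ell \le m$), take the affine family $x \mapsto ax+c$ over $\mathbb{F}_{2^m}$ (size $2^{2m} = \OO(|U|^2)$), and project onto $\ell$ coordinates of the $\mathbb{F}_2$-vector space underlying $\mathbb{F}_{2^m}$; this yields \emph{exact} pairwise independence into $\{0,1\}^\ell$, and replacing $b = \lceil \log n \rceil$ by the next power of $2$ is harmless throughout the proof of Theorem~\ref{thm:family}. Your second alternative---keeping the $\OO(1/p)$ bias and absorbing it into the existing $2^{o(n)}$ slack---is also perfectly viable for the application, though it proves a slightly weaker statement than the one quoted.
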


We also utilize the following auxiliary lemma.

\begin{lemma}
 \label{lem:binom}
 Let $n,n',k,k',d\in \mathbb{N}$ such that $|n-n'|\leq d$, $|k-k'| \leq d$, $k\leq n$ and $k'\leq n'$.
 Then 
 \[\binom{n}{k}\cdot (n+n')^{-3d}\leq \binom{n'}{k'}\leq \binom{n}{k}\cdot (n+n')^{3d}.\]
\end{lemma}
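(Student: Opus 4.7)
The plan is to expand the ratio in terms of factorials and bound each of the three resulting factorial ratios separately. Writing
\[
\frac{\binom{n'}{k'}}{\binom{n}{k}} = \frac{n'!}{n!} \cdot \frac{k!}{k'!} \cdot \frac{(n-k)!}{(n'-k')!},
\]
we observe that every integer appearing in any of these six factorial products is at most $\max(n,n') \leq n+n'$, and likewise for $k, k'$ and $n-k, n'-k'$.

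First I would record the elementary fact: for non-negative integers $a \leq b$ with $b \leq n+n'$, $b!/a!$ is a product of $b-a$ consecutive integers each at most $n+n'$, hence $b!/a! \leq (n+n')^{b-a}$, and correspondingly $a!/b! \geq (n+n')^{-(b-a)}$. Setting $\Delta_n = n' - n$ and $\Delta_k = k' - k$, this yields
\[
\frac{n'!}{n!} \leq (n+n')^{\max(0,\Delta_n)}, \quad \frac{k!}{k'!} \leq (n+n')^{\max(0,-\Delta_k)}, \quad \frac{(n-k)!}{(n'-k')!} \leq (n+n')^{\max(0,\Delta_k-\Delta_n)}.
\]
Note $|\Delta_n| \leq d$ and $|\Delta_k| \leq d$ by hypothesis; the third exponent involves $(n-k)-(n'-k') = -(\Delta_n - \Delta_k)$ so the required number-of-terms bound is $\max(0, \Delta_k - \Delta_n)$.

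Multiplying the three bounds, the upper-bound exponent is $\max(0,\Delta_n) + \max(0,-\Delta_k) + \max(0,\Delta_k-\Delta_n)$. The key simplification uses $\max(0, x+y) \leq \max(0,x) + \max(0,y)$ applied to the last term, after which the sum collapses to $\max(0,\Delta_n) + \max(0,-\Delta_n) + \max(0,\Delta_k) + \max(0,-\Delta_k) = |\Delta_n| + |\Delta_k| \leq 2d$, which is in particular $\leq 3d$. This proves $\binom{n'}{k'} \leq \binom{n}{k} \cdot (n+n')^{3d}$; the matching lower bound is obtained immediately by swapping $(n,k) \leftrightarrow (n',k')$ and inverting, since the hypotheses are symmetric.

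The whole argument is a bookkeeping exercise and I anticipate no real obstacle. The only subtlety is the degenerate case $n = n' = 0$, which forces $k = k' = 0$ and hence $d = 0$ suffices, reducing the claim to the trivial equality $\binom{0}{0} = 1$; for all other cases $n + n' \geq 1$ and the bound is well-defined.
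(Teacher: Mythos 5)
Your proof is correct, and it takes a genuinely different route from the paper's. The paper argues combinatorially: it shows that every $A\in\binom{[n']}{k'}$ can be encoded as $(C\triangle D)\cup E$ with $C\in\binom{[n]}{k}$, $|D|\le 2d$, $|E|\le d$, which yields $\binom{n'}{k'}\le\binom{n}{k}\cdot n^{2d}\cdot (n')^{d}\le\binom{n}{k}(n+n')^{3d}$, and then invokes the same symmetry you do for the reverse inequality. Your factorization $\binom{n'}{k'}/\binom{n}{k}=\frac{n'!}{n!}\cdot\frac{k!}{k'!}\cdot\frac{(n-k)!}{(n'-k')!}$ with the bound $b!/a!\le(n+n')^{b-a}$ is a purely arithmetic argument; the exponent bookkeeping via $\max(0,\Delta_k-\Delta_n)\le\max(0,\Delta_k)+\max(0,-\Delta_n)$ checks out and in fact delivers the sharper exponent $|\Delta_n|+|\Delta_k|\le 2d$, so you prove a slightly stronger statement than the lemma asserts (the paper's encoding is lossier, hence its $3d$). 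Your treatment of the degenerate case $n=n'=0$ is as adequate as the paper's, which ignores it entirely; in the application $n+n'$ is always positive. Both arguments are elementary bookkeeping; yours is marginally tighter, the paper's is perhaps more in the combinatorial spirit of the surrounding derandomization section.
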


\begin{proof}
Let $A\in \binom{[n']}{k'}$. It holds that,
$$\Big| \big| A\cap [n] \big|- k'\Big| = \Big|\big|A\cap [n]\big| - |A| \Big| = \Big|A\cap \big([n']\setminus [n]\big)\Big|\leq \big|[n'] \setminus[n] \big|   \leq |n'-n| \leq d,$$
and thus 
$$\Big| \big|A\cap[n] \big| -k\Big| = \Big| \big|A\cap[n] \big|-k' +k'-k\Big|\leq \Big| \big|A\cap[n] \big|-k'\Big| +\big| k'-k\big|\leq 2d.$$
Therefore, there is some $C \in \binom{[n]}{k}$ such that $\Big| \big(A\cap [n]\big) \triangle C  \Big| \leq 2d$ (we use $\triangle$ to denote he symmetric difference between two sets).

It follows that $A$ can be written as
$A = (C \triangle D) \cup E$ where $D=  \big(A\cap [n]\big) \triangle C$ and $E= A\cap \left([n']\setminus [n]\right)$.
Thus every set $A\in \binom{[n']}{k'}$ can be written as $A= (C\triangle D)\cup E$ where $C\in \binom{[n]}{k}$, $D\subseteq [n]$ with $|D|\leq 2d$ and $E\subseteq [n']$ with $|E|\leq d$.
Hence,
\begin{equation}
	\label{eq:binom_first}
	\binom{n'}{k'} \leq \binom{n}{k}\cdot n^{2d}\cdot {\left(n'\right)}^{d}\leq \binom{n}{k}(n+n')^{3d}.
\end{equation}
As the above argument also holds if we replace the roles of $n$ and $k$ with that of $n'$ and $k'$, we also have
\begin{equation}
	\label{eq:binom_second}
	\binom{n}{k} \leq  \binom{n'}{k'}(n+n')^{3d}.
\end{equation}
The statement of the lemma immediately follows from \eqref{eq:binom_first} and \eqref{eq:binom_second}.
\end{proof}

Finally, reformulating Equation \eqref{eq:binom}, we use that
\begin{equation}
 \label{eq:binom-exp}
 \frac{1}{n+1}\left[\left(\frac{k}{n}\right)^{-\frac{k}{n}}\left(1- \frac{k}{n}\right)^{\frac{k}{n} - 1}\right]^n \leq \binom{n}{k} \leq \left[\left(\frac{k}{n}\right)^{-\frac{k}{n}}\left(1- \frac{k}{n}\right)^{\frac{k}{n} - 1}\right]^n
\end{equation}
for all $n,k\in \mathbb{N}$ such that $0 \leq k \leq n$.

Now, we are ready to prove Theorem \ref{thm:family} which we restate for the readers convenience.

\begin{theorem}[Theorem \ref{thm:family} restated]
 There is an algorithm that, given a set $U$ of size $n$ and numbers $p,q,r \geq 1$ such that $n \geq p \geq r$ and $n - p + r \geq q \geq r$,
 computes an $(n,p,q,r)$-set-intersection-family of size $\kappa(n,p,q,r)\cdot2^{o(n)}$ in time $\kappa(n,p,q,r)\cdot2^{o(n)}$.
\end{theorem}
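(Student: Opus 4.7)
The plan is to reduce to the slow algorithm of Lemma~\ref{la:set-intersection-family-existence} by divide-and-conquer: partition $U$ into small buckets using pairwise independent hashing, invoke the slow construction within each bucket on much smaller parameters, and recombine bucket-wise solutions by taking unions. This generalizes the approach of~\cite{FominGLS19} (the $q = r$ case) to the situation where one only asks for an intersection of size at least $r$ rather than exact containment.

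Concretely, fix $b \coloneqq \lceil n^{1/2} \rceil$ so that $b \to \infty$ while $b = o(n/\log n)$, and invoke Theorem~\ref{thm:pairwise-independent-functions} to compute a pairwise independent family $\CX$ of functions $f\colon U \to [b]$ of size $\OO(n^2)$. For each $f \in \CX$ the induced buckets are $U_i \coloneqq f^{-1}(i)$ of sizes $n_i \coloneqq |U_i|$. Iterate over all nonnegative tuples $\mathbf{p} = (p_1,\dots,p_b)$ with $\sum_i p_i = p$ and $p_i \leq n_i$. For each such $\mathbf{p}$, set $r_i \coloneqq \lfloor r \cdot p_i / p \rfloor$ and $q_i \coloneqq r_i + \lfloor (q-r)(n_i - p_i)/(n-p) \rfloor$, and then apply a bounded integer adjustment of $O(1)$ per coordinate to guarantee $\sum_i r_i = r$ and $\sum_i q_i = q$ while preserving the per-bucket SIF constraints $r_i \leq \min(p_i, q_i)$ and $q_i - r_i \leq n_i - p_i$. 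Call Lemma~\ref{la:set-intersection-family-existence} on the universe $U_i$ with parameters $(n_i, p_i, q_i, r_i)$ to obtain a strong set-intersection-family $\CC_i^{f,\mathbf{p}}$, and add the product family $\{X_1 \cup \cdots \cup X_b \mid X_i \in \CC_i^{f,\mathbf{p}}\}$ to the output $\CC$.

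Correctness is immediate: for any target $T \subseteq U$ with $|T| = p$, pick any $f \in \CX$ and let $\mathbf{p} = (|T \cap U_i|)_i$, which is a tuple the algorithm processes. Each bucket's strong SIF then supplies some $X_i$ with $|X_i \cap T| = r_i$, so the union $X \coloneqq X_1 \cup \cdots \cup X_b$ lies in $\CC$, has cardinality $\sum_i q_i = q$ by bucket-disjointness, and satisfies $|X \cap T| = \sum_i r_i = r$ as required.

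The main challenge is the size and running-time accounting. The number of hash functions is $\OO(n^2)$, the number of tuples $\mathbf{p}$ is at most $(p+1)^b = 2^{\OO(\sqrt{n}\log n)}$, and a single call to Lemma~\ref{la:set-intersection-family-existence} per bucket costs $\OO(8^{\lceil n/b \rceil}) = 2^{\OO(\sqrt{n})}$ time; all three factors are $2^{o(n)}$. What must then be shown is the pointwise bound
\[
\prod_{i=1}^{b} \kappa(n_i, p_i, q_i, r_i) \;\leq\; \kappa(n,p,q,r) \cdot 2^{o(n)}.
\]
The numerator satisfies $\prod_i \binom{n_i}{q_i} \leq \binom{n}{q}$ unconditionally by the convolution identity $\binom{n}{q} = \sum_{\mathbf{q}'} \prod_i \binom{n_i}{q_i'}$ arising from bucket-disjointness. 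The denominator is controlled by the entropy-matching built into the proportional choice of $\mathbf{r}$ and $\mathbf{q}$: applying~\eqref{eq:binom-exp} bucket-wise and using Lemma~\ref{lem:perturb} to absorb the $O(1)$ rounding errors, the near-equalities $r_i/p_i \approx r/p$ and $(q_i - r_i)/(n_i - p_i) \approx (q-r)/(n-p)$ yield $\prod_i \binom{p_i}{r_i} \geq \binom{p}{r} \cdot 2^{-\OO(b\log n)}$ and $\prod_i \binom{n_i - p_i}{q_i - r_i} \geq \binom{n-p}{q-r} \cdot 2^{-\OO(b\log n)}$. Summing the per-tuple contributions and incorporating the polynomial overhead from Lemma~\ref{la:set-intersection-family-existence} then produces a family of size $\kappa(n,p,q,r) \cdot 2^{o(n)}$, while the construction time is dominated by enumerating this family.
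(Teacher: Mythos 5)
Your proposal follows the same top-level strategy as the paper's proof of Theorem~\ref{thm:family} --- split $U$ into $b$ small buckets, run the slow strong-set-intersection-family construction of Lemma~\ref{la:set-intersection-family-existence} inside each bucket, and output product families --- but the details diverge in a way that actually simplifies the argument. The paper takes $b=\lceil\log n\rceil$, restricts attention to \emph{balanced} profiles ($p_i\approx p/b$, $q_i\approx q/b$), and must then invoke pairwise independence plus Chebyshev (Claim~\ref{claim:good-function}) to show that for every target $T$ some $f\in\CX$ spreads $T$ evenly over the buckets; it also only guarantees $\sum_i r_i\geq r$ and $\sum_i q_i\approx q$, and repairs the set sizes at the very end by padding or trimming. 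You instead take $b=\lceil\sqrt{n}\rceil$ and enumerate \emph{all} intersection profiles $\mathbf{p}$, which costs only $(p+1)^b=2^{\OO(\sqrt{n}\log n)}=2^{o(n)}$ and makes correctness immediate for any fixed partition; your proportional split of $r$ and $q$ with an integer correction gives $\sum_i r_i=r$ and $\sum_i q_i=q$ exactly, so the output sets have size exactly $q$ and you in fact construct a \emph{strong} family. Your size bound also does not need balanced profiles: since $\sum_i p_i\cdot\entropy(\lambda)=p\cdot\entropy(\lambda)$ is linear in the $p_i$, the per-bucket entropy estimates combine for arbitrary profiles with only a $2^{\OO(b\log n)}$ loss. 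This is a genuinely cleaner route.

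Two points need repair. The real flaw is that you keep the pairwise independent family $\CX$ but then ``pick any $f\in\CX$''. Pairwise independence gives no worst-case guarantee on the bucket sizes $n_i$ for an individual $f$ --- some $f\in\CX$ may send nearly all of $U$ to one bucket --- so your per-bucket cost bound $\OO(8^{\lceil n/b\rceil})$ is unjustified as written, and a single call to Lemma~\ref{la:set-intersection-family-existence} could cost $8^{\Theta(n)}$. Since your correctness argument uses no property of $f$ whatsoever (every profile $\mathbf{p}$ is enumerated), the fix is simply to drop $\CX$ and use one fixed equipartition of $U$ into $b$ buckets of size at most $\lceil n/b\rceil$ (or, as the paper does, restrict to balanced $f$). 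Second, the ``bounded integer adjustment'' deserves a sentence: after flooring, the deficit $r-\sum_i\lfloor rp_i/p\rfloor$ equals the sum of the fractional parts $\{\lambda p_i\}$, and every coordinate with a positive fractional part has slack $\lfloor\lambda p_i\rfloor\leq p_i-1$, so incrementing exactly those coordinates restores $\sum_i r_i=r$ while keeping $r_i\leq p_i$; defining $q_i$ from the adjusted $r_i$ and repeating the same argument with $\mu=(q-r)/(n-p)$ handles $\sum_i q_i=q$ while keeping $r_i\leq q_i\leq n_i-p_i+r_i$, and the resulting $\pm 1$ perturbations are absorbed by Lemma~\ref{lem:perturb}. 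With these two repairs the proof is complete.
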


\begin{proof}
 Suppose that $n$ is sufficiently large.
 Let $b \coloneqq \lceil \log n\rceil$ and $\lambda \coloneqq \frac{r}{p}$.
 The algorithm first constructs a pairwise independent family $\CX$ of functions $f\colon U \rightarrow [b]$ using Theorem \ref{thm:pairwise-independent-functions}.
 Observe that $|\CX| = \OO(n^2)$.
 Let $f \in \CX$ and $i \in [b]$.
 We define $U_{f,i} \coloneqq \{u \in U \mid f(u) = i\}$ and $n_{f,i} \coloneqq |U_{f,i}|$.
 We say that $f \in \CX$ is \emph{good} if
 \[\left|n_{f,i} - \frac{n}{b}\right| \leq \sqrt{n} \cdot b\]
 for all $i \in [b]$.
 For every good $f \in \CX$, every $i \in [b]$, and every $p',q',r'$ such that $n_{f,i} \geq p' \geq r'$ and $n_{f,i} - p' + r' \geq q' \geq r'$, we compute a strong $(n_{f,i},p',q',r')$-set-intersection-family $\CC(f,i,p',q',r')$ over universe $U_{f,i}$ using Lemma \ref{la:set-intersection-family-existence}.
 Observe that
 \[|\CC(f,i,p',q',r')| \leq \kappa(n_{f,i},p',q',r') \cdot n^{\OO(1)}.\]
 Also note that all families can be computed in time $n^{\OO(1)} \cdot 2^{\OO(n/\log n)} = 2^{o(n)}$.
 
 A tuple $\bar t = (f,(p_i)_{i \in [b]},(q_i)_{i \in [b]})$ is \emph{valid} if
 \begin{enumerate}[label = (\Roman*)]
  \item\label{item:valid-1} $f \in \CX$ is good,
  \item\label{item:valid-2} $\left|p_i - \frac{p}{b}\right| \leq \sqrt{n} \cdot b$ and $p_i \leq n_{f,i}$ for all $i \in [b]$,
  \item\label{item:valid-3} $\left|q_i - \frac{q}{b}\right| \leq 3\cdot \sqrt{n} \cdot b$ for all $i \in [b]$, and
  \item\label{item:valid-4} $n_{f,i} - p_i + r_i \geq q_i \geq r_i$ where $r_i \coloneqq \lceil\lambda p_i\rceil$.
 \end{enumerate}
 For every valid tuple $\bar t = (f,(p_i)_{i \in [b]},(q_i)_{i \in [b]})$ we define a set family $\CC(\bar t)$ which contains all sets of the form $Y = Y_1 \cup \dots \cup Y_b$
 where
 \[Y_i \in \CC(f,i,p_i,q_i,r_i)\]
 and $r_i \coloneqq \lceil\lambda p_i\rceil$ for all $i \in [b]$.
 Observe that $|Y| = \sum_{i \in [b]}q_i$.
 
 \begin{claim}
  Let $\bar t = (f,(p_i)_{i \in [b]},(q_i)_{i \in [b]})$ be a valid tuple.
  Then $|\CC(\bar t)| =\kappa(n,p,q,r)\cdot2^{o(n)}$.
 \end{claim}
 \begin{claimproof}
  We have that
  \begin{align*}
   |\CC(\bar t)| &\leq \prod_{i \in [b]}|\CC(f,i,p_i,q_i,r_i)|\\
                 &\leq \prod_{i \in [b]}\kappa(n_{f,i},p_i,q_i,r_i) \cdot n^{\OO(1)}\\
                 &=    \prod_{i \in [b]}\frac{\binom{n_{f,i}}{q_i}}{\binom{p_i}{r_i} \cdot \binom{n_{f,i} - p_i}{q_i - r_i}} \cdot n^{\OO(1)}\\
                 &=    n^{\OO(\log n)} \cdot \prod_{i \in [b]}\frac{\binom{n_{f,i}}{q_i}}{\binom{p_i}{r_i} \cdot \binom{n_{f,i} - p_i}{q_i - r_i}}.
  \end{align*}
  As $\bar{t}$ is valid it holds that
  $$\left| \sum_{i\in [b] } n_{f,i} -n\right| = \left| \sum_{i\in[b]} \left( n_{f,i} - \frac{n}{b}\right) \right|\leq \sum_{i\in[b]} \left| n_{f,i} - \frac{n}{b}\right| \leq \sqrt{n}\cdot b^2,$$
  and 
  $$ \left| \sum_{i\in [b] } q_i -q\right| = \left| \sum_{i\in[b]} \left( q_{i} - \frac{q}{b}\right) \right| \leq \sum_{i\in[b]} \left| q_{i} - \frac{q}{b}\right| \leq \sqrt{n}\cdot b^2.$$
  Using the above we have that
  \[\prod_{i \in [b]} \binom{n_{f,i}}{q_i} \leq \binom{\sum_{i \in b} n_{f,i}}{\sum_{i \in b} q_i} \leq \binom{n}{q} \cdot n^{\OO(\sqrt{n}\cdot b^2)} = \binom{n}{q} \cdot 2^{o(n)},\]
  where the first inequality is a simple combinatorial inequality and the second inequality is due to Lemma~\ref{lem:binom}.
  
  Also,
  \[r_i \leq \lambda p_i + 1 \leq \lambda\left(\frac{p}{b} + \sqrt{n} \cdot b\right) + 1 \leq \frac{\lambda p}{b} + 2 \cdot \sqrt{n} \cdot b\]
  and
  \[r_i \geq \lambda p_i \geq \lambda\left(\frac{p}{b} - \sqrt{n} \cdot b\right) \geq \frac{\lambda p}{b} - \sqrt{n} \cdot b .\]
  So, using Lemma~\ref{lem:binom} and Equation \eqref{eq:binom-exp}, we get that
  \begin{align*}
   \prod_{i \in [b]} \binom{p_i}{r_i} &\geq \prod_{i \in [b]} \left[\binom{\lceil p/b\rceil}{\lceil\lambda(p/b)\rceil} \cdot n^{-\OO(\sqrt{n} \cdot b)}\right]\\
                                      &\geq \prod_{i \in [b]} \left[\Big(\lambda^{-\lambda} \cdot (1 - \lambda)^{\lambda - 1}\Big)^{p/b}\right] \cdot 2^{-o(n)}\\
                                      &=    \Big(\lambda^{-\lambda} \cdot (1 - \lambda)^{\lambda - 1}\Big)^{p} \cdot 2^{-o(n)}\\
                                      &\geq \binom{p}{r} \cdot 2^{-o(n)}.
  \end{align*}
  Now let $\mu \coloneqq \frac{q-r}{n-p}$.
  Then, by similar arguments,
  \begin{align*}
   \prod_{i \in [b]} \binom{n_{f,i} - p_i}{q_i - r_i} &
   \geq \prod_{i\in [b]} \left[ \binom{ \lceil (n-p)/b\rceil }{ \lceil (q-\lambda \cdot p )/b\rceil}\right]\\
   &= \prod_{i \in [b]} \left[\binom{\lceil (n-p)/b\rceil}{\lceil\mu((n-p)/b)\rceil} \cdot n^{-\OO(\sqrt{n} \cdot b)}\right]\\
                                                      &\geq \prod_{i \in [b]} \left[\Big(\mu^{-\mu} \cdot (1 - \mu)^{\mu - 1}\Big)^{(n-p)/b}\right] \cdot 2^{-o(n)}\\
                                                      &=    \Big(\mu^{-\mu} \cdot (1 - \mu)^{\mu - 1}\Big)^{n-p} \cdot 2^{-o(n)}\\
                                                      &\geq \binom{n-p}{q-r} \cdot 2^{-o(n)}.
  \end{align*}
  Overall, combining all the inequalities proved above, we get that
  \[|\CC(\bar t)| \leq \frac{\binom{n}{q}}{\binom{p}{r} \cdot \binom{n - p}{q - r}} \cdot 2^{o(n)}\]
  as desired.
 \end{claimproof}
 
 Now, we let $\CC$ be the following set family.
 For every valid tuple $\bar t = (f,(p_i)_{i \in [b]},(q_i)_{i \in [b]})$ and every set $Y \in \CC(\bar t)$ we do the following.
 If $|Y| \leq q$ then we add an arbitrary extension $Y' \supseteq Y$ to $\CC$ such that $|Y'| = q$.
 Otherwise, $|Y| \geq q$ and we add every subset $Y' \subseteq Y$ to $\CC$ such that $|Y'| = q$.
 Since there at most $n^{\OO(\log n)}$ many valid tuples, the last claim implies that there are at most $\kappa(n,p,q,r)\cdot2^{o(n)}$ many choices for the set $Y$.
 Also,
 \[||Y| - q| \leq 3 \cdot \sqrt{n} \cdot b^2 = o\left(\frac{n}{\log n}\right).\]
 Hence, for each set $Y$, we add only $2^{o(n)}$ many sets to the family $\CC$.
 Overall, this ensures that $\CC$ has size at most $\kappa(n,p,q,r)\cdot2^{o(n)}$.
 Also, by the same arguments, the family $\CC$ can be computed in the same time.
 It remains to prove that $\CC$ is indeed a $(n,p,q,r)$-set-intersection-family.
 
 \begin{claim}
  \label{claim:good-function}
  Let $T \subseteq U$ such that $|T| = p$.
  Then there is some $f \in \CX$ such that
  \[\Big||U_{f,i}| - \frac{n}{b}\Big| \leq \sqrt{n} \cdot b \;\;\;\text{and}\;\;\; \Big||U_{f,i} \cap T| - \frac{p}{b}\Big| \leq \sqrt{n} \cdot b\]
  for every $i \in [b]$.
 \end{claim}
 \begin{claimproof}
  Let us fix some $i \in [b]$ and consider a random function $f$ uniformly sampled from $\CX$.
  For every $u \in U$ let $X_u$ denote the indicator variable that $f(u) = i$.
  Then $\Pr_{f \in \CX}(X_u = 1) = \frac{1}{b}$.
  So
  \[\E_{f \in \CX}(|U_{f,i}|) = \E_{f \in \CX}\left(\sum_{u \in U} X_u\right) = \sum_{u \in U} \E_{f \in \CX}(X_u) = \frac{n}{b}\]
  and
  \[\E_{f \in \CX}(|U_{f,i} \cap T|) = \E_{f \in \CX}\left(\sum_{u \in T} X_u\right) = \sum_{u \in T} \E_{f \in \CX}(X_u) = \frac{p}{b}.\]
  Since $\CX$ is pairwise independent the covariance between any pair of distinct indicator variables is $0$.
  Hence,
  \[\Var_{f \in \CX}(|U_{f,i}|) = \Var_{f \in \CX}\left(\sum_{u \in U} X_u\right) = \sum_{u \in U} \Var_{f \in \CX}(X_u) \leq n\]
  and
  \[\Var_{f \in \CX}(|U_{f,i} \cap T|) = \Var_{f \in \CX}\left(\sum_{u \in T} X_u\right) = \sum_{u \in T} \Var_{f \in \CX}(X_u) \leq n.\]
  By Chebyshev's inequality, we conclude that
  \[\Pr_{f \in \CX}\left(\Big||U_{f,i}| - \frac{n}{b}\Big| \geq \sqrt{n}\cdot b\right) \leq \frac{1}{b^2}\]
  and
  \[\Pr_{f \in \CX}\left(\Big||U_{f,i} \cap T| - \frac{p}{b}\Big| \geq \sqrt{n}\cdot b\right) \leq \frac{1}{b^2}.\]
  Hence, by the union bound,
  \[\Pr_{f \in \CX}\left(\exists i \in [b]\colon \Big||U_{f,i}| - \frac{n}{b}\Big| \geq \sqrt{n}\cdot b \text{ or } \Big||U_{f,i} \cap T| - \frac{p}{b}\Big| \geq \sqrt{n}\cdot b\right) \leq 2b \cdot \frac{1}{b^2} = \frac{2}{b} < 1.\]
  This means there exists some function $f \in \CX$ with the desired properties.
 \end{claimproof}
 
 Now, let us fix some $T \subseteq U$ such that $|T| = p$, and let $f \in \CX$ be the function from the last claim.
 Let $T_i \coloneqq U_{f,i} \cap T$ and $p_i \coloneqq |T_i|$ for all $i \in [b]$.
 Also let $r_i \coloneqq \lceil\lambda p_i\rceil$.
 
 \begin{claim}
  There are positive integers $q_1,\dots,q_b$ such that $\bar t = (f,(p_i)_{i \in [b]},(q_i)_{i \in [b]})$ is valid.
 \end{claim}
 \begin{claimproof}
  First observe that $f$ is good by Claim \ref{claim:good-function}.
  Also, $p_i \leq n_{f,i}$ by definition and $\left|p_i - \frac{p}{b}\right| \leq \sqrt{n} \cdot b$ by Claim \ref{claim:good-function} for all $i \in [b]$.
  So we need to argue that there numbers $q_1,\dots,q_b$ that satisfy Conditions \ref{item:valid-3} and \ref{item:valid-4}. Let $i\in [b]$.
  We have that
  \begin{equation}
   \label{eq:npr_bound}
   \begin{aligned}
    n_{f,i} - p_i + r_i &\geq n_{f,i} - p_i + \lambda p_i\\
                        &\geq \left(\frac{n}{b} - \sqrt{n} \cdot b\right) - \left(\frac{p}{b} + \sqrt{n} \cdot b\right) + \lambda\left(\frac{p}{b} - \sqrt{n} \cdot b\right)\\
                        &\geq \frac{n - p + \lambda p}{b} - 3\cdot \sqrt{n} \cdot b\\
                        &\geq \frac{n - p + r}{b} - 3\cdot \sqrt{n} \cdot b\\
                        &\geq \frac{q}{b} - 3\cdot \sqrt{n} \cdot b,
   \end{aligned}
  \end{equation}
  where the last inequality holds by the conditions of the Theorem~\ref{thm:family}. 
  As in Condition~\ref{item:valid-4}, we define $r_i = \lceil \lambda \cdot r_i\rceil$. It holds that
  \begin{equation}
   \label{eq:r_bound}
   \begin{aligned}
    r_i &\leq \lambda p_i + 1\\
        &\leq \lambda\left(\frac{p}{b} + \sqrt{n} \cdot b\right) + 1\\
        &\leq \frac{\lambda p}{b} + 2 \cdot \sqrt{n} \cdot b\\
        &=    \frac{r}{b} + 2 \cdot \sqrt{n} \cdot b\\
        &\leq \frac{q}{b} + 2 \cdot \sqrt{n} \cdot b.
    \end{aligned}
  \end{equation}
  
  Define $q_i = \max\left\{r_i,~\lceil \frac{q}{b}-3\sqrt{n}\cdot b \rceil \right\}$. By definition we have $q_i\geq r_i$. 
  Since  $ n_{f,i}\geq p_i$ it holds that $n_{f,i}-p_i+r_i \geq r_i$.
  Using the last inequality and  \eqref{eq:npr_bound} we have $q_i\leq n_{f,i} - p_i+r_i$. 
  By \eqref{eq:r_bound} it holds that $q_i \in \left(\frac{q}{b}-3\sqrt{n} \cdot b,~ \frac{q}{b}+3\sqrt{n}\cdot b \right)$. Overall, it holds that $q_i$ satisfies Conditions \ref{item:valid-3} and \ref{item:valid-4}.  
 \end{claimproof}
 
 Let $\bar t$ denote the valid tuple from the last claim.
 For every $i \in [b]$ there is a set $Y_i \in \CC(f,i,p_i,q_i,r_i)$ such that $|T_i \cap Y_i| = r_i$.
 Let $Y \coloneqq Y_1 \cup \dots \cup Y_b$.
 Then $Y \in \CC(\bar t)$.
 Also
 \[|Y \cap T| = \sum_{i \in [b]}|T_i \cap Y_i| = \sum_{i \in [b]}r_i \geq \sum_{i \in [b]}\lambda p_i = \lambda p = r.\]
 To complete the proof, observe that by construction there is some $Y' \in \CC$ 
 \[|Y' \cap T| \geq r\]
 since $r \leq q$.
\end{proof}

\section{Problem Definitions}
\label{sec:problem-definitions}

In this section, we give the problem definitions of all the problems discussed in the paper.

\defproblem{{\sc Vertex Cover ({\sc VC})}}{An undirected graph $G$.}{Find a minimum set $S$ of vertices of $G$ such that $G-S$ has no edges.}

\defproblem{{\sc $3$-Hitting Set ({\sc $3$-HS})}}{A universe $U$ and set family $\mathcal{F} \subseteq {U \choose \leq 3}$.}{Find a minimum set $S \subseteq U$ such that for each $F \in \mathcal{F}$, $S \cap F \neq \emptyset$.}

\defproblem{{\sc Directed Feedback Vertex Set ({\sc DFVS})}}{A directed graph $G$.}{Find a minimum set $S$ of vertices of $G$ such that $G-S$ is a directed acyclic graph.}

\defproblem{{\sc Directed Subset Feedback Vertex Set ({\sc Subset DFVS})}}{A directed graph $G$ and a set $T \subseteq V(G)$.}{Find a minimum set $S$ of vertices of $G$ such that $G-S$ has no directed cycle that contains at least one vertex of $T$.}

\defproblem{{\sc Directed Odd Cycle Transversal ({\sc DOCT})}}{A directed graph $G$.}{Find a minimum set $S$ of vertices of $G$ such that $G-S$ has no directed cycle of odd length.}

\defproblem{{\sc Undirected Multicut}}{An undirected graph $G$ and a set $\mathcal{P} \subseteq V(G) \times V(G)$.}{Find a minimum set $S$ of vertices of $G$ such that $G-S$ has no path from $u$ to $v$ for any $(u,v) \in \mathcal{P}$}

\end{document}